\theoremstyle{plain}
\theoremstyle{plain}
\newtheorem{theorem}{Theorem}[section]
\newtheorem{lemma}[theorem]{Lemma}
\newtheorem{corollary}[theorem]{Corollary}
\newtheorem{proposition}[theorem]{Proposition}
\theoremstyle{definition}
\newtheorem{definition}[theorem]{Definition}
\newtheorem{remark}[theorem]{Remark}
\newtheorem{example}[theorem]{Example}
\newcommand{\aamasp}[1]{{\color{magenta} #1}}
\newcommand{\techrep}[1]{{\color{blue} #1}}
\newcommand{\vcut}[1]{}
\newcommand{\anote}[1]{{\footnotesize \color{blue}\bf{}{AK: #1}}}
\newcommand{\rnote}[1]{{\footnotesize \color{green}\bf{}{RR: #1}}}
\newcommand{\vnote}[1]{}
\newcommand{\Logicname}[1]{\ensuremath{\mathsf{#1}}}
\newcommand{\WLC}{\Logicname{WLC}\xspace}
\newcommand{\CM}{\Logicname{CM}\xspace}
\newcommand{\WM}{\Logicname{WM}\xspace}
\newcommand{\LA}{\Logicname{LA}\xspace}
\newcommand{\ECT}{\Logicname{ECT}\xspace}
\newcommand{\GCT}{\Logicname{GCT}\xspace}
\newcommand{\OSCP}{\Logicname{OSCP}\xspace}
\newcommand{\defstyle}{\textbf}
\author[1]{Antti Kuusisto}
\author[2]{Raine R\"{o}nnholm}
\affil[1]{University of Helsinki and Tampere University, Finland}
\affil[2]{Universit\'{e} Paris-Saclay, France}
\date{}
\begin{document}

\title{Optimal protocols for the most difficult \\ repeated coordination games}

\maketitle

\vspace{2.5cm}

\begin{center}
    \textbf{Abstract}
\end{center}

\noindent
This paper investigates repeated win-lose
coordination games (\WLC-games). We analyse which
protocols are optimal for these games
covering both the worst case and
average case scenarios, i,e., optimizing the
guaranteed and expected coordination times. We begin by
analysing Choice Matching Games (\CM-games) which are a
simple yet fundamental type of \WLC-games, where the
goal of the players is to pick the same choice from a
finite set of initially indistinguishable choices. We
give a complete classification of optimal expected and
guaranteed coordination times in two-player \CM-games
and show that the corresponding optimal protocols are
unique in every case---except in the \CM-game
with four choices, which we analyse separately. 

Our results on \CM-games are also essential for
proving a more general result on the
difficulty of all \WLC-games: we provide a
complete analysis of least upper bounds for
optimal expected coordination times in all
two-player \WLC-games as a function of 
game size. We also show that \CM-games can be
seen as the most difficult games
among all two-player \WLC-games, as they turn out to 
have the greatest optimal expected coordination times.

\medskip

\medskip

\noindent
\textbf{Keywords:}
Repeated coordination games, optimal strategies, average and
worst case analysis, relational structures, reachability objectives

\newpage


\section{Introduction}
\label{sect:intro}
Pure win-lose coordination games (\WLC-games) are 
simple yet fundamental games  
where all players receive the same payoffs: 1 (win) or 0 (lose).
This paper studies \emph{repeated} \WLC-games, where the players make
simultaneous choices in discrete rounds
until (if ever) succeeding to coordinate on a winning profile.
\emph{Choice matching games} (\CM-games) are the simplest 
class of such games. The choice matching game $\CM_m^n$ has
$n$ players with the goal to choose the same choice 
among $m$ different indistinguishable choices, with no
communication during play. The players 
can use the history of the game (i.e.,
the players' choices in different rounds) for their benefit as the game proceeds.
For simplicity, we denote the two-player game $\CM_m^2$ by $\CM_m$.
%
%
%
%
%

A paradigmatic real-life scenario with a choice matching game
relates to a phenomenon that has humorously been called
``pavement tango'' or ``droitwich'' in \cite{droit}. Here two people try to 
pass each other but may end up 
blocking each other by repeatedly 
moving sideways into the same direction.
For another example of a choice matching game, consider $\CM_3$, 
the coordination-based 
variant of the rock-paper-scissors game, pictured on the right.

\begin{vwcol}[widths={0.85,0.15}, sep=2mm, justify=flush, rule=0pt, lines=6] 
\indent
Here the
two players (i.e., columns) coordinate if they succeed 
choosing an edge from one of the three rows.
The players first choose randomly; suppose they 
select the nodes in dotted circles. Simply based on symmetries, it then
makes sense for both players to choose from the last row (solid circles), as
each of the two other 
choices in each column have a symmetric, non-coordinating choice in the other column.
This leads to coordination in the second round.

\begin{tikzpicture}[scale=0.55,choice/.style={draw, circle, fill=black!100, inner sep=2.1pt},
	location1/.style={draw, very thick, densely dotted, circle, color=black!77, inner sep=3.2pt},
	location2/.style={draw, thick, circle, color=black!77, inner sep=3.2pt}]
	\node at (0,0) [choice] (00) {};
	\node at (2,0) [choice] (20) {};
	\node at (0,1) [choice] (01) {};	
	\node at (2,1) [choice] (21) {};
	\node at (0,2) [choice] (02) {};	
	\node at (2,2) [choice] (22) {};
	\node at (02) [location1] {};
	\node at (21) [location1] {};
	\node at (00) [location2] {};
	\node at (20) [location2] {};
	\draw[thick] (00) to (20);
	\draw[thick] (01) to (21);
	\draw[thick] (02) to (22);	
%
%
%
	\node at (0,3) {};	
\end{tikzpicture}
%
\end{vwcol}

A general $n$-player \WLC-game is a generalization of $\CM_m^n$ where the
players do not necessarily have to choose from the same row to coordinate,
and it may not even suffice to choose from the same row. In classical 
matrix form representation, two-player choice matching games have ones on
the diagonal and zeroes elsewhere, while general two-player \WLC-games have general
distributions of ones and zeroes;
see Definition \ref{definition: games} for the full formal details.

In repeated \WLC-games, it is natural to try to coordinate as quickly as possible. 
There are two main scenarios to be investigated: \emph{guaranteeing
coordination} (with certainty) in as few rounds as possible and  
minimizing the \emph{expected number of rounds} for coordination.
The former concerns the number of
rounds it takes to coordinate in the \emph{worst case} and is measured in 
terms of \emph{guaranteed coordination times} (GCTs). The latter
relates to the \emph{average case} analysis measured in
terms of \emph{expected coordination times} (ECTs).

\paragraph{Our contributions.}

We provide a comprehensive study of upper
bounds for coordination in \emph{all} two-player
repeated \WLC-games, including a classification of related
optimal strategies (called \emph{protocols} in this work).
\CM-games are central to our work,
being a fundamental class of games and also the 
most difficult games for coordination---in a sense made precise below.

Two protocols play a central role in our study.
We introduce the so-called \emph{loop avoidance} 
protocol \LA (cf. Definition \ref{def: RES}) that essentially tells players to
play so that the generated history of choices 
always reduces the symmetries (e.g., automorphisms) of the game structure. 
We also use the so-called \emph{wait-or-move} (\WM)
protocol (cf. Definition \ref{definition: waitormove}),
essentially telling players to randomly
alternate between two choices that both 
coordinate with at least one of the opponent's two choices. We show
that \WM leads to coordination in
\emph{all} \WLC-games \emph{very fast},
the \ECT being $3-2p$,
where $p$ is the probability of
coordinating in the first round with random choices.

We then provide a complete analysis of the optimal {\ECT}s and {\GCT}s in all
choice matching games $\CM_m$. We also identify the 
protocols giving the
optimal {\ECT}s and {\GCT}s and show their
\emph{uniqueness}, where possible.
The table in Figure~\ref{fig: summary table} summarizes these results.
%
%
%
%
%
%
%
%
%
%
\setlength{\belowcaptionskip}{-20pt}
\begin{figure}[htp]
\begin{center}
\scalebox{0.8}{
{\footnotesize
\begin{tabular}{|c||c|c||c|c|}
 \hline
& Optimal \textbf{expected} & Unique optimal & Optimal \textbf{guaranteed} & Unique optimal \\
$m$ & coordination & protocol for & coordination & protocol for \\
& time in $\CM_m$ & expected time & time in $\CM_m$ & guaranteed time \\
\hline \rule{0pt}{1\normalbaselineskip} 
$1$\; & 1 & (any) & 1 & (any) \\[1mm]
$2$ & $2$ & WM & $\infty$ & --- \\[1mm]
$3$ & $1+\frac{2}{3}$ & \LA & $2$ & \LA \\[1mm]
$4$ & $2+\frac{1}{2}$ & --- & $\infty$ & --- \\[1mm]
$5$ & $2+\frac{1}{3}$ & \LA & $3$ & \LA \\[1mm]
\hdashline \rule{0pt}{1\normalbaselineskip} 
$6$\; & $2+\frac{2}{3}$ & WM & $\infty$ & --- \\[1mm]
$7$ & $2+\frac{5}{7}$ & WM & $4$ & \LA \\
\vdots & \vdots & \vdots & \vdots & \vdots \\
$2k$ & $3 - \frac{1}{k}$ & WM & $\infty$ & --- \\[1mm]
$2k+1$ & $3 - \frac{2}{2k+1}$ & WM & $k$ & \LA \\[1mm]
\hline
\end{tabular}
}
}
\caption{A complete analysis of two-player choice matching games.}
\label{fig: summary table}
\end{center}
\end{figure}
%
%
%
%
%
%
%
%
%
%
%
%
%
%
%
This analysis is complete, as we prove that there 
exists a continuum of optimal protocols for $\CM_4$ and establish that
for all even $m$, no protocol \emph{guarantees} a win in $\CM_m$.

Concerning the more general class of all \WLC-games, we
provide the following complete characterization of
upper bounds for the optimal {\ECT}s in
all two-player $\WLC$-games 
\emph{as a function of game size} 
(a game in a classical matrix form is of size $m$ when the maximum of the
number of rows and columns is $m$):


\medskip

\noindent
\textbf{Theorem.}
\emph{For any $m$, the greatest optimal \ECT among 
two-player $\WLC$-games of size $m$ is as follows:}
\vspace{-5mm}
\begin{center}
\scalebox{0.8}{
\begin{tabular}{|c|c|c|c|}
\hline Game size
& $m\in\mathbb{Z}_+\setminus\{3,5\}$ & \quad$m=5$\quad\text{} & $m=3$ \\
\hline \rule{0pt}{1.2\normalbaselineskip} 
Greatest optimal\; \ECT & $3-\frac{2}{m}$ & $2+\frac{1}{3}$ & $\frac{1+\sqrt{4+\sqrt{17}}}{2}$ $(\approx 1,925)$ \\[1mm]
\hline
\end{tabular}
}
\end{center}

\smallskip

\noindent
Also, concerning two-player choice matching games,
we establish that $\CM_m$ has the
strictly greatest optimal \ECT out of all two-player $\WLC$-games of 
size $m\not = 3$, making \CM-games the most difficult \WLC-games to 
coordinating in. We give a separate full analysis of the case $m=3$.

\paragraph{Related work.}
Coordination games (see, e.g., \cite{russellcooper},
\cite{Biglaiser1994}) 
are a key topic in game theory, 
with the early foundations laid, inter alia, in the works of
Schelling \cite{schelling} and Lewis \cite{Lewis69}.
Repeated games are---likewise---a key
topic, see for example \cite{mertens}, \cite{AumannMaschler95}, \cite{MailathSamuelson06}.
For seminal work on \emph{repeated coordination games},
see for example the articles \cite{CrawfordHaller95}, \cite{crawfordadaptive},
\cite{lagunoffmatsui}.

%
%

However, \WLC-games are a simple class of games that have not
been extensively studied in the literature. In particular,
choice matching games clearly
constitute a \emph{fundamental class of games},
and it is thus surprising
that the analysis of the current paper has not
been previously carried out.
Thus the related analysis is well
justified; it \emph{closes an obvious gap} in the literature.

In general, our study differs from the classical game-theoretic 
study of repeated games where
the focus is on
accumulated payoffs. Indeed, our repeated \WLC-games are based on 
\emph{reachability objectives}.
Especially our
worst case analysis (but also the average case study)
has only superficial overlap with
most work on repeated games.

However, similar work exists, the most notable example being
the seminal article \cite{CrawfordHaller95} that studies a
generalization of \WLC-games in a framework that has some similarities
with our setting. They introduce (what is 
equivalent to) the two-player \CM-games in their
final section on general examples.
They also essentially identify the optimal ways of playing $\CM_2$ and $\CM_3$, 
discussed also in this article, although in a technically
somewhat different setting of accumulated payoffs. Furthermore, 
they observe that a protocol essentially equivalent to $\WM$ is the best
way to play $\CM_6$, an observation we also make in our
setting. However, optimality of $\WM$ in $\CM_6$ is
not proved in \cite{CrawfordHaller95}. This would require an 
extensive analysis proving that the players cannot make beneficial 
use of asymmetric histories created by non-coordinating choices.
Indeed, the main technical difficulty in our corresponding
setting is to show \emph{uniqueness} of the optimal protocol.

Nonetheless, despite the differences, the framework of \cite{CrawfordHaller95}
bears some conceptual similarities to ours, e.g., the authors also 
identify \emph{structural protocols} (cf. Definition 
\ref{structuralprotocoldfn} below) as
the natural notion of strategy for studying their framework.
Furthermore, they make extensive use of focal points \cite{schelling} in 
analysing how asymmetric histories can potentially be used for coordination.
%

Relating to uniqueness of protocols, 
\cite{Goyal96} argues that individual rationality considerations not are not sufficient for players to ``learn how to coordinate'' in the setting of \cite{CrawfordHaller95}. We agree with \cite{Goyal96} that some
conventions are needed if several protocols lead to the optimal result.
However---in our framework---since we can prove
\emph{uniqueness} of the optimal protocols for $\CM_m$ (when $m\neq 4$), then
arguably rational players should adopt precisely these
protocols in \CM-games.

\paragraph{Techniques used.}
Some of our results are of course based on 
massaging techniques from 
game theory and mathematical analysis to suit our purposes. This
involves the standard things: infinite series, analysis of extrema, et cetera.
However, the core of our work relies on an original approach to
games based on relational structures, as
opposed to using the traditional matrix form representation.
This approach enables us to use graph theoretic ideas in our arguments.

Both in the worst-case and in average-case analysis, the main 
technical work relies heavily on analysis of symmetries---especially
the way the groups of
automorphisms of games evolve when playing coordination games.
The most involved result of the worst-case analysis,
Theorem \ref{the: odd m}, is 
proved by reducing the cardinality of the
automorphism group of the \WLC-game studied in a maximally fast fashion.
In the average-case analysis, Theorems \ref{the: 6-ladders},
\ref{the: 5-ladders}, \ref{the: 4-ladders} are proved via a 
combination of analysis of extrema; keeping track of 
groups of automorphisms; graph theoretic methods; and
focal points \cite{schelling} for breaking symmetry.
The most demanding part here is to show
\emph{uniqueness} of the protocols involved.
Also in the average-case analysis,
Theorem \ref{the: upper bounds for ECTs} relies on earlier theorems and an
extensive and exhaustive analysis of certain bipartite graphs.

We first used our approach to games via
relational structures in \cite{lori2017}, \cite{eumas}. It
has been applied also in the repeated setting in \cite{ecai2020} and 
considered in a more general
setting in \cite{comp19}, \cite{double}.

%
%
%

\vcut{
\vnote{Much repetition with the ECAI paper. Most of these can be omitted here.} 
Here we briefly mention some of the
most notable references relevant to our study.
%
The classics of Schelling \cite{schelling} and Lewis \cite{Lewis69}
lay much of the foundations of coordination games and
demonstrate the importance of focal points, salience, and conventions. 
The follow-up study on coordination by Gauthier \cite{gauthier1975coordination} is also important, as is Sugden's survey on rational choice  \cite{SugdenRationalChoice91} and his paper \cite{sugden1995theory} on focal points.
Especially focal points,
and more indirectly conventions, play a central role in our study, too.
Also the work \cite{CrawfordHaller95} on repeated coordination games is highly 
relevant to our study, especially as it emphasises the importance of symmetries in coordination games.
Essential general references are the book \cite{HarsanyiSelten88} on the theory of rational choice in selecting equilibria and the books on repeated games \cite{AumannMaschler95}, \cite{MailathSamuelson06}.
Other relevant references we 
wish to mention here include, e.g., \cite{Sugden89}, \cite{gilbert1990rationality}, \cite{Bicchieri95}. 
}

%
%
%
%


\vcut{FROM THE LORI PAPER: 
We note the close conceptual relationship of the present study with the notion of \emph{rationalisability} of strategies \cite{Bernheim84}, \cite{FudenbergTirole91}, \cite{Pearce84}, which is particularly important in epistemic game theory. 
}

\section{Preliminaries}
\label{sec:prelim}

We define win-lose coordination games as \emph{relational structures}
the same way as in \cite{lori2017}, \cite{eumas}, \cite{ecai2020}: 
%

\begin{definition}\label{definition: games}
An $n$-player \defstyle{win-lose coordination game (\WLC-game)}
is a relational structure $G=(A,C_1,\dots,C_n,W_G)$ where $A$ is a finite
domain of \defstyle{choices}, each $C_i$ is a non-empty unary relation (representing the choices of player $i$) such that $C_1\cup\cdots\cup C_n=A$,  
and $W_G\subseteq C_1\times\cdots\times C_n$ is an $n$-ary \defstyle{winning relation}.  
For technical convenience, we assume that the players have pairwise disjoint choice sets, i.e., $C_i\cap C_j=\emptyset \text{ for every } i,j\leq n \text{ such that } i\neq j$.
A tuple $\sigma\in C_1\times\cdots\times C_n$ is called a \defstyle{choice profile} for $G$
and the choice profiles in $W_G$ are called \defstyle{winning choice profiles}.
We assume that there are no surely losing choices, i.e., choices $c\in A$ that do not belong to any winning choice profile, as rational players would never select such choices. 
The \defstyle{complement $\overline G$} of $G$ is
defined as $\overline G :=(A,C_1,\dots,C_n,\ C_1
\times \cdots\times C_n\, \setminus W_G)$.
\end{definition}

We will use the visual representation of \WLC
games as hypergraphs from \cite{lori2017}; two-player games 
become just bipartite graphs under this scheme.
The choices of each player are displayed as
columns of nodes, starting from the choices of
player 1 on the left and ending with the column of choices of player $n$. The
winning relation consists of lines that represent the winning choice profiles. Thus winning choice profiles are also
called \textbf{edges}.
See Example \ref{drawingexamples} in Appendix~\ref{appendix: Examples}
for an illustration of
the drawing scheme.



Consider a \WLC-game $G=(A,C_1,\dots,C_n,W_G)$
with $n$ players and $m$ winning 
choice profiles that do not intersect, i.e., none of the $m$ winning choice
profiles share a choice $c\in A$.
Such games form a simple yet 
fundamental and natural class of games, where the goal of
the players is simply to pick the same ``choice'', i.e., to 
simultaneously pick one of the $m$ winning profiles.
These games are called \defstyle{choice matching games}. We let $\CM^n_m$
denote the choice matching game with $n$ players and $m$
choices for each player.
%
%
In this article, we extensively make use of the two-player
choice matching games, $\CM^2_m$.
For these games, we will omit the
superscript ``$2$'' and simply denote them by $\CM_m$.
(Recall here the example $\CM_3$
pictured in the introduction.)

Interestingly, out of
all $n$-player \WLC-games where
each of the $n$ players has $m$ choices, the game $\CM^n_m$ has the
least probability of coordination when each player plays randomly. 
In this sense these games can be seen the most difficult for coordination. 
A fully compelling reason for the maximal difficulty
of choice matching games is given later on by Corollary~\ref{hardest games}.

\section{Repeated \WLC-games}
\label{sect:RWLC}

A \defstyle{repeated play of a \WLC-game $G$} 
consists of consecutive (one-step) plays of $G$.
The repeated play is continued until the
players successfully coordinate, i.e.,
select their choices from a winning choice profile. This
may lead to infinite plays.
We assume that each player can
remember the full history of the
repeated play and use this information when planning
the next choice. The history of the play after $k$ rounds is
encoded in a sequence $\mathcal{H}_k$
defined as follows.

\begin{definition}\label{definition: stages}
Let $G$ be an $n$-player \WLC-game.
A pair $(G,\mathcal{H}_k)$ is
called a \defstyle{stage $k$ (or $k$th stage)
in a repeated play of $G$},
where the \defstyle{history} $\mathcal{H}_k$ is a $k$-sequence of
choice profiles in~$G$.
More precisely, $\mathcal{H}_k=\big(H_i\big)_{i\in\{1,\dots,k\}}$
where each $H_i$ is an $n$-ary
relation $H_i=\{(c_1,\dots,c_n)\}$ with a 
single tuple $(c_1,\dots,c_n)\in C_1\times\cdots\times C_n$.
In the case $k=0$, we define $\mathcal{H}_0=\emptyset$. The
stage $(G,\mathcal{H}_0)$ is
the \defstyle{initial stage} (or the $0$th stage).
Like $G$, also $(G,\mathcal{H}_k)$ is a relational structure.
\end{definition}


A stage $k$ contains a history specifying precisely $k$ choice profiles chosen in a repeated play. 
%
%
A winning profile of $(G,\mathcal{H}_k)$ is
called a \defstyle{touched edge} if it
contains some choice $c$ picked in some
round $1,\dots , k$ leading to $(G,\mathcal{H}_k)$.
%
%
%
As we assume that the players only need to coordinate once, we consider repeated plays only up to the first stage where some winning choice profile is selected. If
coordination occurs in the $k$th round, then the $k$th stage is called the \defstyle{final stage} of the repeated play.
But a play can indeed possibly take infinitely long without coordination.

\begin{vwcol}[widths={0.8,0.2}, sep=5mm, justify=flush, rule=0pt,lines=5] 
\indent
On the right is a drawing of the stage $2$ in a repeated play of $\CM_2$,
the ``coordination game variant'' of the \emph{matching pennies game} 
(or the ``pavement tango'' from the introduction). 
Here the players have failed to coordinate in round 1 (having picked the choices
with dotted circles) and then failed again by both swapping their choices in
round 2 (solid circles).

\begin{tikzpicture}[scale=0.6,choice/.style={draw, circle, fill=black!100, inner sep=2.1pt},
	location1/.style={draw, very thick, densely dotted, circle, color=black!77, inner sep=3.2pt},
	location2/.style={draw, thick, circle, color=black!77, inner sep=3.2pt}]
	\node at (0,0) [choice] (00) {};
	\node at (2,0) [choice] (20) {};
	\node at (0,1) [choice] (01) {};	
	\node at (2,1) [choice] (21) {};
	\node at (00) [location1] {};
	\node at (21) [location1] {};
	\node at (01) [location2] {};
	\node at (20) [location2] {};
	\draw[thick] (01) to (21);
	\draw[thick] (00) to (20);	
%
%
    \node at (0,2.3) {};
\end{tikzpicture}
\end{vwcol}

%

We next generalize the definition of \emph{protocols} from \cite{lori2017}.
In the current paper, a protocol
describes a \emph{mixed strategy for all stages in all \WLC-games and
for all player roles $i$}: 

\begin{definition}\label{def: Protocols}
A \defstyle{protocol} $\pi$ is a function outputting a probability distribution $f: C_i\rightarrow[0,1]$ (where $\sum_{c\in C_i}f(c)=1$) with the input of a player $i$ and a stage $(G,\mathcal{H}_k)$ of a repeated \WLC-game.
\end{definition}
Since a protocol can depend on the full history of the current stage, it gives a mixed, memory-based strategy for any repeated \WLC-game. Thus protocols can informally be regarded as global ``behaviour styles" of agents
over the class of all repeated \WLC-games.
It is important note that all players can see (and remember) the previous choices selected by all the other players---and also the order in which the 
choices have been made.

In the scenario that we study, it is obvious to require that the protocols should act \emph{independently of the names of choices and the names (or ordering) of player roles $i$}.\footnote{Note that if this assumption is not made, then coordination can trivially be guaranteed in a single round in any \WLC game by using a protocol which chooses some winning choice profile with probability $1$.}
In \cite{CrawfordHaller95}, this requirement follows from the ``assumption of no common language'' (for describing the game),
and in \cite{lori2017}, we say that such protocols are \emph{structural}.
To extend this concept for repeated games, we
first need to define the notion of a \emph{renaming}. 
The intuitive idea of 
renamings is to extend \emph{isomorphisms} between game 
graphs---including the history---to additionally enable
\emph{permuting the players} $1,\dots , n$ (see Example~\ref{ex: Renaming} in
Appendix~\ref{appendix: Examples} for an illustration of the definition).

\begin{definition}[Cf. \cite{lori2017}]\label{def: renamings}
%
A \defstyle{renaming} between
stages $(G,\mathcal{H}_k)$ and $(G',\mathcal{H}_k')$ of $n$-player \WLC-games $G$ and $G'$ is a
pair $(\beta,\pi)$ where $\beta$ is a permutation of $\{1,\dots , n\}$ and $\pi$ a
bijection from the domain of $G$ to that of $G'$ such that

\smallskip

$\bullet$ $c\in C_{\beta(i)}\Leftrightarrow \pi(c) \in C_i'$ for all $i\leq n$ and $c$ in the domain of $G$,

$\bullet$ $(c_1,\dots , c_n) \in W_G
\Leftrightarrow (\pi(c_{\beta(1)}),\dots , \pi(c_{\beta(n)}))\in W_G'$,

$\bullet$ $(c_1,\dots , c_n) \in H_i
\Leftrightarrow (\pi(c_{\beta(1)}),\dots , \pi(c_{\beta(n)}))\in H_i'$ for all $i\leq k$.

\smallskip

If $(G,\mathcal{H}_k)$ and $(G',\mathcal{H}_k')$ have the
same domain $A$, we say that $(\beta,\pi)$ is a \defstyle{renaming of $(G,\mathcal{H}_k)$}.
Choices $c\in C_i$ and $d\in C_j$ are \defstyle{structurally equivalent}, denoted by
$c\sim d$, if there is a renaming $(\beta,\pi)$ of $(G,\mathcal{H}_k)$ such that $\beta(i)=j$ and $\pi(c)=d$.
It is easy to see that $\sim$ is an equivalence relation on $A$. We denote the equivalence class of a choice $c$ by $[c]$.
\end{definition}

\begin{definition}\label{structuralprotocoldfn}
A protocol $\pi$ is \defstyle{structural} if it is indifferent with
respect to renamings, meaning that if $(G,\mathcal{H}_k)$ and $(G',\mathcal{H}_k')$ 
are stages with a renaming $(\beta,\pi)$ between them,
then for any $i$ and any $c\in C_i$, we have 
$
	f(c)=f'(\pi(c)),
$
where $f= \pi((G,\mathcal{H}_k),i)$
and $f' = \pi((G',\mathcal{H}_k'),\beta(i))$.

\end{definition}

%
Note that a structural protocol may depend on the full
history, which records even the order in which the choices have been played.
%
%
%
%
Hereafter we assume all protocols to be structural.

\begin{definition}\label{def: similarity}
Let $G$ be a \WLC-game and let $S$ and $S'$ be stages of $G$. Let $\sim$
(respectively $\sim'$) be the structural equivalence relation over $S$ (respectively, $S'$).
We say that $S$ and $S'$ are \defstyle{automorphism-equivalent} if $\sim \,=\, \sim'$.
The stages $S$ and $S'$ are \defstyle{structurally similar} if one can be obtained
from the other by a chain of renamings and automorphism-equvalences.
\end{definition}


%
A choice $c$ in a
stage $S$ is a \defstyle{focal point} if it is not
structurally equivalent to any other choice in
that same stage $S$,
with the possible exception of 
choices $c'$ belonging to a same edge as $c$. 
See Example~\ref{ex: focal points} for an illustration of focal points.
%
A focal point breaks symmetry and can be used for winning a 
repeated coordination game. This requires that the players have some
(possibly prenegotiated) way to choose some
edge $(u,v)$ such that $u$ or $v$ is a focal point.
%

In repeated coordination games, it is natural to try to
\emph{coordinate as quickly as possible}.
There are two principal scenarios related to optimizing coordination times:
the \emph{average case} and the \emph{worst case}. The
former concerns the expected
number rounds for coordination and the latter the maximum number in
which coordination can be guaranteed with certainty.

\begin{definition}
Let $(G,\mathcal{H}_k)$ a stage and let $\pi$ be a protocol.
%
The \defstyle{one-shot coordination probability (\OSCP) from $(G,\mathcal{H}_k)$ with $\pi$} is the probability of
coordinating in a single round from $(G,\mathcal{H}_k)$ when each player follows $\pi$.
%
The \defstyle{expected coordination time (\ECT) from $(G,\mathcal{H}_k)$ with $\pi$} is the
expected value for the number of rounds until
coordination from $(G,\mathcal{H}_k)$ when all players follow $\pi$.
The \defstyle{guaranteed coordination time (\GCT) from $(G,\mathcal{H}_k)$ with $\pi$} is the number $n$ such
that the players are \emph{guaranteed} to coordinate from $(G,\mathcal{H}_k)$ in $n$ rounds, but not in $n-1$ rounds,
when all players follow $\pi$, if such a
number exists. Else this value is $\infty$.
%
%
%

The \OSCP, \ECT and \GCT
from the initial stage $(G,\emptyset)$ with $\pi$ are
referred to as the \OSCP, \ECT and \GCT in $G$ with $\pi$.
We say that $\pi$ is \defstyle{\ECT-optimal} for $G$ if $\pi$
gives the minimum \ECT in $G$, i.e.,  the \ECT given by any protocol $\pi'$ is at
least as large as the one given by $\pi$.
\defstyle{\GCT-optimality} of $\pi$ for $G$ is defined analogously.
\end{definition}



It is possible that there are several different
protocols giving the optimal \ECT (or \GCT)
for a given \WLC-game.
If two protocols $\pi_1$ and $\pi_2$ are both optimal, it may
be that the optimal value is nevertheless \emph{not} obtained when
some of the players follow $\pi_1$ and the
others $\pi_2$.
This leads to a meta-coordination problem about choosing the same
optimal protocol to follow. 
However, such a problem will be avoided if there exists a unique optimal protocol.

\begin{definition}
Let $\pi$ be a protocol and $G$ a \WLC-game.
We say that $\pi$ is \defstyle{uniquely \ECT-optimal} for $G$ if $\pi$ is \ECT-optimal for $G$ and the following holds for all
other protocols $\pi'$ that are \ECT-optimal for $G$: 
for any stage $S$ in $G$ that is reachable with $\pi$, we have $\pi'(S)=\pi(S)$.  
\defstyle{Unique \GCT-optimality} of $\pi$ for $G$ is defined analogously.\footnote{Note that if two different
protocols are uniquely \ECT-optimal for G (and
similarly for unique \GCT-optimality), then their behaviour on $G$
can differ only on stages that are not reachable in the first place by the protcols. Also, their behaviour can of
course differ on games other than $G$.}
\end{definition}

The next lemma states that two structurally similar
stages are essentially the same
stage with respect to different {\ECT}s and {\GCT}s. 
The proof is straightforward.

\begin{lemma}\label{similarity lemma}
Assume stages $S$ and $S'$ of $G$ are structurally similar.
Now, for any protocol $\pi$, 
there exists a protocol $\pi'$ which
gives the same \ECT and \GCT from $S'$ as $\pi$ gives from $S$.
\end{lemma}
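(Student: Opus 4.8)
The plan is to reduce the general case to the two atomic ways structural similarity can arise---a single renaming, and a single automorphism-equivalence---and then compose, since structural similarity is by definition a finite chain of such steps and the desired conclusion (``for every $\pi$ there is a $\pi'$ with matching \ECT and \GCT'') is transitive: if $\pi \mapsto \pi''$ handles the step $S \to S''$ and $\pi'' \mapsto \pi'$ handles $S'' \to S'$, then $\pi \mapsto \pi'$ handles $S \to S'$. So it suffices to treat (i) $S' = (G,\mathcal H_k')$ obtained from $S=(G,\mathcal H_k)$ by a renaming $(\beta,\pi_0)$ of $S$ (same domain), and (ii) $S'$ automorphism-equivalent to $S$, i.e. $\sim\,=\,\sim'$.

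\emph{Case (i), renamings.} Given a protocol $\pi$, I would define $\pi'$ on the stage $S'$ and on every stage reachable from it to be the ``$(\beta,\pi_0)$-transported'' version of what $\pi$ does on the correspondingly transported stage of the repeated play starting from $S$; on all stages unrelated to $S'$, let $\pi' = \pi$. Concretely, a renaming of a stage extends canonically to a bijection between the repeated plays from $S$ and from $S'$ (apply $(\beta,\pi_0)$ to each newly appended choice profile), and this bijection preserves the winning relation, hence preserves which profiles are coordinating. Thus a play from $S$ under $\pi$ and its image from $S'$ under $\pi'$ have identical round-by-round coordination/non-coordination behaviour, and $\pi'$ assigns to each transported stage exactly the image under $\pi_0$ of the distribution $\pi$ assigns to the original; since $\pi_0$ is a bijection on the choice set, the induced probability of coordinating in round $t$ is unchanged for every $t$. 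Summing $\sum_t t\cdot \Pr(\text{coord.\ at round } t)$ gives the same \ECT, and the worst-case length is the same, so the same \GCT. One should check $\pi'$ is still structural; this is routine because composing renamings yields renamings, so indifference of $\pi$ under renamings transfers to $\pi'$.

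\emph{Case (ii), automorphism-equivalence.} Here $S$ and $S'$ have the same domain and literally the same structural-equivalence relation $\sim$ on choices, though possibly different histories. A structural protocol at a stage can only distinguish choices up to $\sim$ (that is the content of Definition \ref{structuralprotocoldfn}: a structural protocol must assign equal probabilities to choices in the same $\sim$-class, and must respond identically to structurally similar stages). So here I would simply take $\pi' = \pi$: since $\pi$ is structural and $\sim$ agrees on $S$ and $S'$, the distribution $\pi$ outputs at $S$ for player $i$ equals the one it outputs at $S'$ for player $i$, and the same agreement propagates to all later stages (the history grows by one profile in each, and structural similarity is preserved under appending the ``same'' move up to $\sim$). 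Hence the entire stochastic process is the same, giving identical \ECT and \GCT.

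\emph{Main obstacle.} The genuinely delicate point is the propagation step in both cases: I must argue that structural similarity of $S$ and $S'$ is inherited by the pairs of successor stages obtained after one more round of play (under the transported protocol in case (i), under $\pi$ itself in case (ii)), so that the inductive comparison of the two plays round by round actually goes through, rather than merely comparing the first round. This requires noting that a renaming of $(G,\mathcal H_k)$ extends to a renaming of $(G,\mathcal H_{k+1})$ once we transport the appended profile, and that automorphism-equivalence is similarly stable; both are direct from the definitions but need to be stated carefully, because the \ECT is an infinite sum and the \GCT a supremum over all plays, so the coupling of the two processes must be exact at \emph{every} depth, not just depth one.
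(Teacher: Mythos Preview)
The paper does not give a proof of this lemma; it just says the proof is straightforward. Your overall plan---reduce to a single renaming step and a single automorphism-equivalence step, then chain by transitivity---is sound, and Case~(i) is correct.

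Case~(ii), however, has a real gap. You claim one may take $\pi'=\pi$ on the grounds that a structural protocol ``must respond identically to structurally similar stages.'' That is \emph{not} what Definition~\ref{structuralprotocoldfn} says: structurality is invariance under \emph{renamings} only, not under automorphism-equivalence. Two automorphism-equivalent stages need not be related by any renaming---for instance they may have histories of different lengths. In $\CM_4$, the stage after the single failed round $(a_1,b_2)$ and the stage after the two failed rounds $(a_1,b_2),(b_1,a_2)$ share the equivalence classes $\{a_1,b_2\},\{b_1,a_2\},\{c_1,d_1,c_2,d_2\}$, yet a renaming must match $H_i$ for every $i\le k$, so none exists between a $1$-stage and a $2$-stage. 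The paper itself remarks that a structural protocol may depend on the full history. Hence $\pi$ may legitimately output different distributions at $S$ and $S'$, and the two processes under $\pi$ can have different \ECT and \GCT.

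The correct move in Case~(ii) is the same as in Case~(i): \emph{construct} $\pi'$ by transporting $\pi$ along the obvious bijection of game trees, setting $\pi'\bigl((G,\mathcal H'_{k'}\!\cdot\vec\sigma),\,i\bigr):=\pi\bigl((G,\mathcal H_k\!\cdot\vec\sigma),\,i\bigr)$ for every continuation $\vec\sigma$. Since whether a round coordinates depends only on $W_G$ and the appended profile (not on the earlier history), the two processes coincide round by round and give identical \ECT and \GCT. Your ``main obstacle''---that the coupling must hold at every depth---is genuine, but it is an obligation on the \emph{constructed} $\pi'$, not something $\pi$ already satisfies. The remaining work is to check that the transported distributions are constant on the $\sim$-classes of the $S'$-side stages, so that $\pi'$ is structural; this is precisely where the hypothesis $\sim_S=\sim_{S'}$ (and its propagation to continuations) is actually used.
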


\section{Protocols for repeated \WLC-games}
\label{sect:solvability}

In this section we introduce two special protocols,
the \emph{loop avoidance protocol} \LA and the
\emph{wait-or-move protocol} \WM. Informally, \LA asserts
that in every round, every player $i$
should avoid---if possible---all choices $c$ that could possibly make
the resulting stage
automorphism-equivalent 
(cf. Def.~\ref{def: similarity}) 
to the current stage, i.e., the stage just before selecting $c$.

\begin{definition}\label{def: RES}
The \defstyle{loop avoidance protocol} (\LA) asserts 
that in every round, every player $i$
should avoid---if possible---all choices $c$ for which the
following condition holds:
if the player $i$ selects $c$, then there exist
choices for the other players so that the 
resulting stage is automorphism-equivalent to
the current stage.
If this condition holds for all choices of the player $i$, then $i$ makes a random choice. Moreover, uniform probability is
used among all the possible choices of $i$.
%
\end{definition}


It is easy to see that \LA avoids, when possible, 
all such stages that are structurally similar to 
\emph{any} earlier stage in the repeated play.
As structurally similar stages are
essentially identical (cf. Lemma \ref{similarity lemma}),
repetition of such stages can be
seen as a ``loop'' in the repeated play. 
When trying to \emph{guarantee} coordination as
quickly as possible, such loops should be avoided.
%
%
%
%
In addition to this heuristic justification,
Theorems \ref{the: even m} and \ref{the: odd m}
give a fully compelling justification for \LA
when considering guaranteed coordination in two-player \CM-games.
For now, we present the following propositions
(see Appendix~\ref{appendix: Proofs} for proofs); 
see also Example~\ref{ex: use of LA} in Appendix~\ref{appendix: Examples}
for an illustration of the use of \LA. 
%

\begin{proposition}\label{the: LG3}
$\LA$ is the uniquely \ECT-optimal and uniquely \GCT-optimal in $\CM_3$.
\end{proposition}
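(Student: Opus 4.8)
The plan is to analyze $\CM_3$ explicitly, using the fact that it is a small game whose stages are easy to enumerate up to structural similarity. First I would set up the initial stage: all six choices (three per player) are structurally equivalent, and every choice lies on exactly one edge, so no choice is a focal point and \LA forces a uniform random first move. With probability $\frac{1}{3}$ the players coordinate immediately (\OSCP $=\frac{1}{3}$); otherwise the two chosen non-coordinating choices $c_1\in C_1$, $c_2\in C_2$ create a history, and — as in Example~\ref{ex: coordinating RPS} for the rock-paper-scissors variant — the resulting stage $S$ has structural equivalence classes pairing $c_1$ with the partner of $c_2$, $c_2$ with the partner of $c_1$, and leaving the two ``untouched'' choices $u_1\in C_1$, $u_2\in C_2$ (which form an edge) each as a focal point. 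I would observe that $\CM_3$ has essentially one non-final non-initial stage up to structural similarity, so the whole analysis reduces to this single stage $S$.

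Next I would show what \LA does at $S$. Of a player's three choices, the two touched ones can each lead, by an appropriate choice of the opponent, to a stage automorphism-equivalent to $S$ (the symmetry pattern is preserved), while the untouched choice $u_i$ cannot — picking $u_1$ and $u_2$ coordinates immediately, and picking $u_1$ against a touched choice strictly changes the equivalence classes. Hence \LA forces both players to pick their untouched choice, yielding coordination in round $2$ with certainty. This gives \GCT $=2$ and \ECT $=1\cdot\frac13+2\cdot\frac23=1+\frac23$ for \LA, matching the table in Figure~\ref{fig: summary table}.

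For optimality and uniqueness I would argue as follows. At the initial stage, any structural protocol must assign equal probability to all six structurally equivalent choices, hence the uniform distribution; so \emph{every} structural protocol has \OSCP $=\frac13$ and agrees with \LA on the initial stage. Thus the first round is forced, and the only freedom is behaviour at the stage $S$ (all reachable later stages after any non-\LA move at $S$ that fails are again structurally similar to $S$ or to the initial stage, so by Lemma~\ref{similarity lemma} nothing is gained by deviating). At $S$, the touched choices are pairwise structurally equivalent and so must receive a common probability $q$ each, while the two untouched focal points $u_1,u_2$ — being structurally equivalent to each other across the two players — must receive a common probability $1-2q$ each; a short computation of the resulting \ECT (and \GCT) as a function of $q$ shows it is strictly minimized at $q=0$, i.e., exactly the \LA prescription, and that any protocol with $q>0$ is strictly worse, giving unique \ECT-optimality; since $q=0$ is the only value giving finite \GCT, it is also uniquely \GCT-optimal.

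The main obstacle I expect is the careful bookkeeping of structural equivalence classes at the stage $S$ — in particular verifying rigorously that a touched choice really \emph{can} produce an automorphism-equivalent stage (so \LA genuinely forbids it) while an untouched choice cannot, and confirming that all further reachable stages collapse, up to structural similarity, to stages already analyzed, so that the optimization truly reduces to the single parameter $q$. Once that reduction is in place, the extremal computation is routine: one writes the \ECT from $S$ under parameter $q$ as a geometric-type expression and checks monotonicity in $q$ on $[0,\tfrac12]$.
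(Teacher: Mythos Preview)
Your plan matches the paper's approach --- first round forced by structurality, then at the failure stage $S$ the untouched edge wins immediately --- but the paper's execution is much shorter and avoids both obstacles you anticipate. Since choosing the untouched edge at $S$ gives \ECT $=1$ from $S$, while \emph{any} other distribution has positive failure probability and hence \ECT $>1$ from $S$, unique \ECT- and \GCT-optimality follow at once; no parameterized recursion and no classification of later stages is needed.

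Two technical slips in your outline. First, at $S$ the two touched choices of a single player are \emph{not} structurally equivalent to each other: with history $(a_1,b_2)$ the equivalence classes are $\{a_1,b_2\}$, $\{b_1,a_2\}$, $\{c_1,c_2\}$, so a structural protocol has two free parameters at $S$, not a single $q$. (The touched choices are \emph{conjugates} in the sense of Definition~\ref{conjugatedefinition}, and Lemma~\ref{Two-edge lemma} would let you symmetrize when searching for the optimum, but for uniqueness you must still rule out asymmetric distributions.) Second, not all later failure stages collapse to $S$ or the initial stage: e.g.\ the history $((a_1,b_2),(a_1,c_2))$ yields all singleton classes. Neither slip is fatal --- the conclusion survives with the two-parameter version --- but both are bypassed entirely by the paper's one-line observation that \OSCP $=1$ at $S$ is achieved uniquely by putting full mass on the untouched edge.
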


%

\begin{proposition}\label{the: GCT with RES}
$\LA$ guarantees coordination in games $\CM_m$ in $\lceil m/2 \rceil$ rounds when $m$ is odd, 
but $\LA$ does not guarantee coordination in $\CM_m$ for any even $m$.
\end{proposition}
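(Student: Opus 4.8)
The plan is a round-by-round analysis of the stages reached when both players follow \LA in $\CM_m$, where player~$1$ has choices $a_1,\dots,a_m$, player~$2$ has $b_1,\dots,b_m$, and the winning profiles are the $(a_i,b_i)$. I would prove by induction on $r$ the following invariant: if \LA has been followed for $r$ rounds without coordinating (so $2r\le m$), then the current stage is structurally similar (Definition~\ref{def: similarity}) to a canonical stage in which $r$ profiles $(a_{i_1},b_{j_1}),\dots,(a_{i_r},b_{j_r})$ have been played with all $2r$ indices $i_1,j_1,\dots,i_r,j_r$ distinct; its structural equivalence classes are exactly the two-element sets $\{a_{i_p},b_{j_p}\}$ and $\{a_{j_p},b_{i_p}\}$ for $p\le r$, together with the single class $\{a_n,b_n : n\notin\{i_1,j_1,\dots,i_r,j_r\}\}$ of size $2(m-2r)$. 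The base case $r=0$ is the initial stage. By Lemma~\ref{similarity lemma} it suffices to carry out the inductive step on the canonical representative.

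For the step I would pin down exactly what \LA prescribes in the canonical stage, which amounts to computing, for each candidate choice of player~$1$, whether some reply of player~$2$ leaves the structural equivalence relation unchanged. A short case split on renamings $(\beta,\pi)$ (into $\beta=\mathrm{id}$ and $\beta=(1\,2)$) shows: repeating a played choice $a_{i_p}$ (reply $b_{j_p}$) and playing the partner $a_{j_p}$ of an opponent's choice (reply $b_{i_p}$) both leave the renaming group---hence $\sim$---unchanged, so these $2r$ choices are \LA-forbidden; any choice $a_n$ from the untouched part, on the other hand, makes $\sim$ strictly finer under every reply of player~$2$, so it is not forbidden. (For $r=0$ no choice is forbidden, because after a single non-coordinating profile the number of $\sim$-classes jumps from $1$ to $3$.) Hence \LA has each player pick uniformly at random from the untouched part; this either coordinates (equal fresh indices) or yields precisely the canonical stage for $r+1$, completing the induction.

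The two cases of the proposition now fall out. If $m=2k+1$, then after $k$ non-coordinating rounds the untouched part is a single edge $(a_s,b_s)$, so the forbidden set is all $2k$ choices $a_{i_p},a_{j_p}$ and \LA forces the profile $(a_s,b_s)$ in round $k+1=\lceil m/2\rceil$, guaranteeing coordination then; since the play $(a_1,b_2),(a_3,b_4),\dots,(a_{2k-1},b_{2k})$ is \LA-consistent and non-coordinating through round $k$, the \GCT of \LA in $\CM_m$ is exactly $\lceil m/2\rceil$. If $m=2k$, then after $k$ non-coordinating rounds every edge is touched, the $\sim$-classes are the $m$ pairs $\{a_{i_p},b_{j_p}\},\{a_{j_p},b_{i_p}\}$, and the renaming group has order $2$; a direct check shows that appending any profile of the form $(a_{i_p},b_{j_p})$ or $(a_{j_p},b_{i_p})$ leaves this group, and hence $\sim$, unchanged, so in this stage and in every stage obtained from it by appending such profiles \emph{all} of player~$1$'s choices are forbidden and \LA has each player choose uniformly among \emph{all} choices. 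Consequently the play that runs $(a_1,b_2),(a_3,b_4),\dots,(a_{2k-3},b_{2k-2})$ and then repeats $(a_{2k-1},b_{2k})$ forever is \LA-consistent, infinite and never coordinates, so \LA does not guarantee coordination for even $m$.

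I expect the bookkeeping inside the inductive step to be the crux, in particular showing that the ``partner'' choices $a_{j_p}$ are \LA-forbidden while fresh choices never are: both claims reduce to checking whether the constraint imposed by the newly appended profile is automatically met by the renaming group of the previous stage (forbidden case) or strictly narrows it (allowed case), which must be done separately in the $\beta=\mathrm{id}$ and $\beta=(1\,2)$ cases. Fitting the boundary situations---$r=0$, and the even-$m$ endgame where there is no untouched class---into the same scheme also needs a little extra care.
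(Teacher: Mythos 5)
Your proposal is correct and takes essentially the same route as the paper's own proof: both arguments show that \LA forbids exactly the choices on touched edges (so two fresh edges are consumed in each failed round), which forces coordination by round $\lceil m/2\rceil$ when $m$ is odd and, when $m$ is even, leaves an all-touched symmetric stage from which structurally equivalent non-coordinating picks can be repeated forever through automorphism-equivalent stages; your induction on canonical stages merely makes the paper's informal bookkeeping explicit. One caveat you share with the paper: at the stage with a single untouched edge the coordinating reply also leaves $\sim$ unchanged, so your claim that fresh choices refine $\sim$ ``under every reply'' tacitly reads Definition~\ref{def: RES} as disregarding final stages---which is exactly the reading used in the paper's proof and in Example~\ref{ex: use of LA}.
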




We next present the 
\emph{wait-or-move protocol} \WM, which naturally
appears in numerous real-life two-player coordination scenarios. 
Informally, both players
alternate (with equal probability) between two choices:
the players own initial choice and another choice that coordinates 
with the initial choice of the other
player. 
%
%
%

\begin{definition}\label{definition: waitormove}
The \defstyle{wait-or-move protocol (\WM)} for repeated two-player \WLC-games 
goes as follows: first select randomly any choice $c$, and 
thereafter choose with equal probability $c$ or a choice $c'$ that
coordinates with the initial choice of the other player (thereby never
picking other choices than $c$ and $c'$). Definition
\ref{wmseconddef} in Appendix~\ref{appendix: Examples} specifies
\WM in more detail.
\end{definition}


The following theorem shows that $\WM$ is very fast in 
relation to {\ECT}s.
This holds for \emph{all} two-player \WLC-games, not only
choice matching games $\CM_m$.
The proof is given in Appendix~\ref{appendix: Proofs}.

\begin{theorem}\label{jormatheorem}
Let $G$ be a \WLC-game with
one-shot coordination probability $p$
when both players make their first choice randomly.
Then the expected coordination time by \WM is at most $3-2p$. 
\end{theorem}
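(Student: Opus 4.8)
The plan is to analyze the Markov chain on "configurations" induced by \WM and compute the expected coordination time directly. After the first round, each player has committed to an unordered pair of choices: player~1 to $\{c_1, c_1'\}$ and player~2 to $\{c_2, c_2'\}$, where $c_1'$ coordinates with $c_2$ and $c_2'$ coordinates with $c_1$ (by the definition of \WM). The key observation I would make first is that from the second round onward, the relevant state of the play is captured by whether the two players are currently "aligned" (their current choices coordinate) or "misaligned". I would set up notation: let $p$ be the one-shot coordination probability with uniform random first choices, and condition on the outcome of round~1.

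First I would handle round~1: with probability $p$ the players coordinate immediately, contributing $1 \cdot p$ to the expectation. With probability $1-p$ they fail, and we move to round~2 in some stage determined by their two failed choices. The crucial structural point to nail down is what the coordinating relationships look like among $\{c_1,c_1'\}\times\{c_2,c_2'\}$: we know $(c_1',c_2)\in W_G$ and $(c_1,c_2')\in W_G$, i.e., the "swap" profiles coordinate; the profile $(c_1,c_2)$ is exactly the one that just failed, so it is non-coordinating; and the profile $(c_1',c_2')$ may or may not coordinate — this is the one genuinely free parameter. In the worst case $(c_1',c_2')$ is also non-coordinating, so I would carry out the computation in that worst case (which only makes the bound harder to achieve, hence suffices for an upper bound of $3-2p$).

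Then I would compute the expected number of additional rounds starting from round~2 in this worst-case configuration. In each subsequent round each player independently picks one of their two choices with probability $\tfrac12$. Of the four equiprobable profiles, the two "swap" profiles $(c_1',c_2)$ and $(c_1,c_2')$ coordinate and the two "diagonal" profiles $(c_1,c_2)$, $(c_1',c_2')$ do not (in the worst case). So in each round from round~2 on, coordination succeeds with probability $\tfrac12$, independently — a geometric distribution. Hence the expected number of rounds counted from round~2 onwards is $2$, so the total expected coordination time conditioned on failing round~1 is $1 + 2 = 3$. Combining: $\text{\ECT} \le p\cdot 1 + (1-p)\cdot 3 = 3 - 2p$. (If $(c_1',c_2')$ happens to coordinate too, the per-round success probability after round~1 is even higher, so the bound still holds; I would remark on this rather than compute it.)

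The main obstacle — and the step deserving the most care — is justifying that the post-round-1 analysis really reduces to this clean four-profile Markov chain, i.e., that \WM genuinely ignores the growing history and that the choices $c_1'$, $c_2'$ are well-defined and fixed. I would need to appeal to the precise specification of \WM (Definition~\ref{wmseconddef} in the appendix) to confirm that after the first round a player never deviates from their fixed pair $\{c,c'\}$ and keeps flipping a fair coin each round regardless of what has happened. I should also address the edge case where $p = 1$ (coordination guaranteed in round~1, giving $\text{\ECT} = 1 = 3 - 2p$, consistent) and the well-definedness of $c'$ as "a choice coordinating with the opponent's initial choice" — which exists because there are no surely losing choices, though \WM's tie-breaking among several such choices does not affect the worst-case bound since the diagonal-failure analysis above is already the worst case. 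Once these structural points are pinned down, the probability computation itself is the elementary geometric-series argument sketched above.
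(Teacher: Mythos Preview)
Your proposal is correct and follows essentially the same approach as the paper's proof: both observe that after a round-1 failure the per-round coordination probability is at least $\tfrac12$ (since ``one stays, one moves'' always coordinates), take the worst case of exactly $\tfrac12$, and compute the resulting expectation as $p\cdot 1+(1-p)\cdot 3=3-2p$. The only cosmetic difference is that the paper evaluates the series $\sum_{k\ge 2} k/2^{k-1}=3$ explicitly, whereas you invoke the geometric-distribution expectation directly; your four-profile bookkeeping and discussion of the tie-breaking for $c'$ are a slightly more explicit unpacking of the same observation the paper states in one line.
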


\begin{corollary}\label{jormacorollary}
The \ECT with \WM is strictly less than $3$ in
every two-player \WLC-game.
\end{corollary}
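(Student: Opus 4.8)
The plan is to derive the corollary directly from Theorem~\ref{jormatheorem}. By that theorem, the \ECT with \WM from the initial stage of any two-player \WLC-game $G$ is at most $3-2p$, where $p$ is the one-shot coordination probability when both players make their first choice randomly. So it suffices to show that $p>0$ in every such game, since then $3-2p<3$ strictly.

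First I would invoke the standing assumptions on \WLC-games from Definition~\ref{definition: games}: the winning relation $W_G$ is nonempty, and there are no surely losing choices. In particular, there exists at least one winning choice profile $(c_1,c_2)\in W_G$. Since \WM begins by selecting a choice uniformly at random (this is the ``first select randomly any choice $c$'' clause, made precise in Definition~\ref{wmseconddef}), each player picks $c_i$ with some positive probability; hence the probability that the pair $(c_1,c_2)$ is played in the first round is strictly positive, and this event already yields coordination. Therefore $p\geq \Pr[\text{first-round profile} = (c_1,c_2)] > 0$.

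Combining the two observations: $\ECT_{\WM}(G) \le 3 - 2p < 3 - 0 = 3$. The only mild subtlety to spell out is that the ``random'' first choice in the definition of \WM is genuinely a positive-probability-everywhere distribution on each player's choice set (as structural protocols are indifferent between structurally equivalent choices, and in fact uniform first choices are what Definition~\ref{wmseconddef} prescribes), so that every winning profile indeed occurs with positive probability in round one. Given that, the argument is immediate.

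I do not anticipate any real obstacle here; the corollary is essentially a one-line consequence of the theorem together with the nonemptiness of $W_G$. The only point requiring a sentence of justification, rather than a bare citation, is why $p>0$ — i.e., that a winning profile exists and is realized with positive probability by \WM's randomized opening move.
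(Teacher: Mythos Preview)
Your proposal is correct and matches the paper's approach: the corollary is stated without a separate proof and is meant as the immediate consequence of Theorem~\ref{jormatheorem} together with $p>0$, which follows from the nonemptiness of $W_G$ (guaranteed by the no-surely-losing-choices assumption in Definition~\ref{definition: games}) and the uniform first-round randomization of \WM.
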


It follows from the proof of Theorem \ref{jormatheorem} that 
the \ECT with \WM is \emph{exactly} $3-\frac{2}{m}$ in all
choice matching games $\CM_m$. 
Thus Corollary \ref{jormacorollary} cannot be improved, as
the {\ECT}s of the games $\CM_m$
grow asymptotically closer to the strict upper bound $3$ 
when $m$ is increased. 
%
%
%
In the particular case of $\CM_2$, the \ECT with \WM is $3-\frac{2}{2}=2$.
Thus the following lemma clearly holds.

\begin{lemma}\label{roskalemma}
When $S =(\CM_m,\mathcal{H}_k)$ is a non-final stage with
exactly two touched edges, then the \ECT from $S$ with \WM is exactly 2.
Moreover, in any \WLC-game $G$, if $S' = (G,\mathcal{H}_k)$ is a non-final stage 
that is reacbable by using \WM, then the \ECT from $S'$ with \WM is
at most 2.
\end{lemma}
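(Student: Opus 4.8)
\textbf{Proof proposal for Lemma~\ref{roskalemma}.}

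The plan is to reduce both claims to Theorem~\ref{jormatheorem} (together with the computation inside its proof giving $\ECT = 3 - 2/m$ for $\CM_m$) by analysing exactly which stages $\WM$ can produce and what ``coordination probability $p$'' it effectively faces from such a stage. The key observation is that $\WM$ is a \emph{structural} protocol that, after the first round, keeps each player oscillating between just \emph{two} choices, and these two choices are precisely the two endpoints (on that player's side) of the two touched edges. So from any non-final stage reachable by $\WM$, the continuation of $\WM$ is equivalent to running $\WM$ from scratch on the induced two-player game whose winning relation consists of those two touched edges, i.e.\ a copy of $\CM_2$ (when the two touched edges are disjoint). In $\CM_2$ the random first-round one-shot coordination probability is $p = 1/2$, so Theorem~\ref{jormatheorem} gives $\ECT \le 3 - 2\cdot\frac12 = 2$, and the computation in the proof of Theorem~\ref{jormatheorem} shows this bound is attained, yielding exactly $2$.

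First I would handle the $\CM_m$ case precisely. Let $S = (\CM_m,\mathcal{H}_k)$ be non-final with exactly two touched edges $e_1 = (u_1,v_1)$ and $e_2 = (u_2,v_2)$; since edges in a choice matching game are pairwise disjoint, $u_1,u_2$ are distinct choices of player~$1$ and $v_1,v_2$ distinct choices of player~$2$. I claim that from $S$, both players following $\WM$ will only ever pick among $\{u_1,u_2\}$ (resp.\ $\{v_1,v_2\}$): by Definition~\ref{definition: waitormove}, after the first round each player alternates between its own initial choice and a choice coordinating with the opponent's initial choice; tracing which choices were touched, the initial choices were (w.l.o.g.)\ $u_1$ and $v_1$, and the coordinating alternatives are $u_2$ and $v_2$ respectively (this is where I need to be a little careful and possibly invoke Definition~\ref{wmseconddef} for the exact bookkeeping of ``initial choice'' after a history has already formed — this is the main place where the argument must be pinned down rather than waved through). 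Restricting attention to the four profiles over $\{u_1,u_2\}\times\{v_1,v_2\}$, the winning ones are exactly $e_1,e_2$, so the future play is governed by a stage of $\CM_2$ in which two edges exist and neither has been selected. Running $\WM$ from the initial stage of $\CM_2$ gives, by the displayed geometric-series computation behind Theorem~\ref{jormatheorem}, $\ECT = 1\cdot\frac12 + \frac12\sum_{k\ge 2} \frac{k}{2^{k-1}} = 2$; and nothing in $S$'s extra (already-played) history changes the distribution over future rounds, since $\WM$'s moves from here on depend only on the two choices each player oscillates between. Hence the $\ECT$ from $S$ with $\WM$ is exactly $2$.

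For the second claim, let $G$ be an arbitrary two-player \WLC-game and $S' = (G,\mathcal{H}_k)$ a non-final stage reachable by $\WM$. If $k = 0$, then $S' = (G,\emptyset)$ and the $\ECT$ with $\WM$ is $3 - 2p$ with $p$ the random-first-round one-shot coordination probability; this is $\le 2$ iff $p \ge \frac12$, which need \emph{not} hold in general, so I should instead note that a reachable \emph{non-final} stage with $k=0$ is just the start and the claim as stated is about $k \ge 1$ — more precisely, once $\WM$ has completed at least one non-coordinating round, each player is committed to oscillating between its two designated choices $c,c'$, and the argument above applies verbatim with the two touched edges playing the role of $e_1,e_2$ (the two edges $c$--(opponent's alternative) and (opponent's initial)--$c'$; if these two edges happen to coincide or share an endpoint the situation only makes coordination \emph{more} likely, so the bound still holds a fortiori). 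Thus from any non-final $S'$ reached after at least one round, the future play is dominated by a $\CM_2$-type oscillation and the $\ECT$ is at most $2$. I expect the main obstacle to be the careful verification that $\WM$, when started from a stage that \emph{already} carries a history (as opposed to the empty stage), genuinely behaves as a two-choice oscillation matching the two touched edges — i.e.\ correctly identifying ``the initial choice'' and ``the coordinating alternative'' in terms of $\mathcal{H}_k$ via Definition~\ref{wmseconddef}; once that bookkeeping is nailed down, everything else is the geometric series already computed for Theorem~\ref{jormatheorem}. $\square$
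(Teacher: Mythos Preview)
Your approach is essentially the same as the paper's: reduce to the $\CM_2$ case and quote the computation from Theorem~\ref{jormatheorem} giving $\ECT = 3 - 2\cdot\tfrac12 = 2$. The paper in fact gives no detailed proof at all, merely remarking that the lemma ``clearly holds'' once one knows the $\CM_2$ value, so your write-up is already more careful than what the paper provides.

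One small slip to fix: you write that ``the initial choices were (w.l.o.g.) $u_1$ and $v_1$''. But $(u_1,v_1)=e_1$ is a winning edge, so if those were the first-round choices the stage would be final. In a non-final stage with exactly two touched edges $e_1=(u_1,v_1)$ and $e_2=(u_2,v_2)$, the first-round choices must lie on \emph{different} edges, say $u_1$ and $v_2$; the coordinating alternatives are then $u_2$ and $v_1$. With this correction your oscillation argument goes through unchanged. (You may also note, to dispatch the case-split you flag as ``the main obstacle'', that if one player had selected both of $u_1,u_2$ while the other had selected only one of $v_1,v_2$, then some round would already have produced a winning pair; hence in any \emph{non-final} such stage either both players have used one choice each or both have used two, so \WM is always in case~(1) or case~(b) of Definition~\ref{wmseconddef} and never falls into the random case~(c).)
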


\vcut{
\begin{proof}
In every round, 
the one-shot coordination probability is $\dfrac{1}{2}$.
Thus the probability of
coordinating in the $m$th round (and not earlier) is
%
%
$(\frac{1}{2})^{m-1} \cdot \frac{1}{2} = (\frac{1}{2})^m$
%
%
for all $m\geq 1$. 
Hence $ECT$ is
%
%
$E\ =\ \sum\limits_{\scriptscriptstyle k\geq 1}^{\scriptscriptstyle\infty} \dfrac{k}{2^{k}}$
%
%
and it is well known that the right hand side sum equals 2.
\end{proof}
}

\WM eventually leads to coordination with asymptotic
probability $1$ in all two-player \WLC-games. Nevertheless, 
it clearly does not
guarantee (with certainty) coordination in any number of rounds
in \WLC-games where the winning relation is not the total relation.
%
%
In a typical real-life scenario, eternal non-coordination is of
course impossible by \WM, but it is conceivable, for
example, that two computing units using the very same pseudorandom number 
generator will never coordinate due to being synchronized to swap their choices in
precisely the same rounds.

It is easy to show that \WM is the unique protocol which
gives the optimal \ECT (namely, 2 rounds) in the ``droitwich-scenario'' 
of the game $\CM_2$ (see Appendix~\ref{appendix: Proofs} for a proof):

\begin{proposition}\label{the: LG2}
\WM is uniquely \ECT-optimal in $\CM_2$.
\end{proposition}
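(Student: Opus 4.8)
Write $\CM_2$ with choice sets $C_1=\{a_1,a_2\}$, $C_2=\{b_1,b_2\}$ and winning profiles $(a_1,b_1)$, $(a_2,b_2)$, so that the only two non-coordinating profiles are $(a_1,b_2)$ and $(a_2,b_1)$. The single structural fact behind the whole argument is this: the pair $(\beta,\rho)$ with $\beta$ the transposition $(1\,2)$ and $\rho$ the bijection sending $a_1\leftrightarrow b_2$ and $a_2\leftrightarrow b_1$ preserves the winning relation of $\CM_2$ and maps each of the two non-coordinating profiles to itself. Since the history of any reachable stage of $\CM_2$ consists only of non-coordinating profiles, $(\beta,\rho)$ is a renaming of \emph{every} reachable stage; hence in each such stage $a_1\sim b_2$ and $a_2\sim b_1$, so by structurality any protocol must assign $f_1(a_1)=f_2(b_2)=:x$ and $f_1(a_2)=f_2(b_1)=1-x$ to the two players' mixed choices. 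Consequently the one-shot coordination probability from any reachable non-final stage equals $2x(1-x)\le\frac12$.

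This yields the lower bound immediately: following any protocol from any reachable non-final stage, the number $N$ of rounds until coordination satisfies $\Pr(N>k)\ge 2^{-k}$ for all $k\ge 0$, hence the \ECT from that stage is at least $\sum_{k\ge 0}2^{-k}=2$. On the other hand, the \ECT of \WM in $\CM_2$ is exactly $3-\frac22=2$ by Theorem~\ref{jormatheorem} and the discussion following it (see also Lemma~\ref{roskalemma}). So \WM is \ECT-optimal in $\CM_2$.

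For uniqueness, let $\pi'$ be \ECT-optimal, so its \ECT from the initial stage is $2$; write $E'(\cdot)$ for the \ECT from a stage under $\pi'$. In the initial stage all four choices are structurally equivalent, so $\pi'$ (like \WM) plays uniformly, and up to renaming there is a unique resulting non-final stage $S_1=(\CM_2,\mathcal{H}_1)$, where $\mathcal{H}_1$ records the single profile $(a_1,b_2)$; since structurality makes \ECT renaming-invariant (cf.~Lemma~\ref{similarity lemma}), the identity $2=1+\frac12 E'(S_1)$ gives $E'(S_1)=2$. The key step is a recursion: for any non-final stage $S$, letting $S^A$ and $S^B$ be the stages obtained from $S$ by appending the profile $(a_1,b_2)$, respectively $(a_2,b_1)$, one has $E'(S)=1+x^2E'(S^A)+(1-x)^2E'(S^B)$, where $x=f_1(a_1)$ at $S$ (this uses $2x(1-x)+x^2+(1-x)^2=1$), and moreover $x\notin\{0,1\}$, since otherwise no coordinating profile is ever played and $E'(S)=\infty$. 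Plugging in the universal lower bound $E'(S^A),E'(S^B)\ge 2$, the hypothesis $E'(S)=2$ forces $3-4x(1-x)\le 2$, i.e., $4x(1-x)\ge 1$, whence $x=\frac12$ (since $x(1-x)\le\frac14$ always), and then $E'(S^A)=E'(S^B)=2$ as well. As \WM also plays uniformly at such a stage --- in $\CM_2$ a player's ``wait'' and ``move'' choices are simply its only two choices --- this shows $\pi'$ agrees with \WM at $S$. Starting from $S_1$ and iterating, induction on the length of the history shows $\pi'=\WM$ at $S_1$ and at every stage obtained from it by repeatedly appending non-coordinating profiles, which is exactly the set of non-final \WM-reachable stages; together with agreement on the initial stage, this is unique \ECT-optimality.

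The main obstacle is making this induction airtight: the recursion pins down $\pi'$ at a stage $S$ only after $E'(S)=2$ is established, and deducing $E'(S^A)=E'(S^B)=2$ (needed to propagate the induction) relies on the universal lower bound of the second paragraph, which must be proved \emph{without} assuming optimality --- hence the order of the argument. Points requiring a little care: ruling out $x\in\{0,1\}$; using structurality (or Lemma~\ref{similarity lemma}) to identify the post-round-$1$ stage up to renaming and to transfer \ECT values across renamings; and verifying that every history over $\CM_2$ is a sequence of non-coordinating profiles, so that $(\beta,\rho)$ really is a renaming of every reachable stage. The remaining manipulations are elementary geometric-series bookkeeping.
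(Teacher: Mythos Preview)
Your proof is correct and rests on the same structural observation as the paper's: the player-swapping renaming forces $f_1(a_1)=f_2(b_2)$ at every reachable stage, so the one-shot coordination probability is $2x(1-x)\le\frac12$ with equality only at $x=\frac12$. The paper packages this fact as Lemma~\ref{Two-edge lemma} (the conjugate-nodes lemma, stated for general $\CM_m$) and simply declares that the proposition ``follows directly'' from it. You instead give a self-contained argument tailored to $\CM_2$: the universal lower bound $E'\ge 2$ via $\Pr(N>k)\ge 2^{-k}$, followed by the explicit recursion $E'(S)=1+x^2E'(S^A)+(1-x)^2E'(S^B)$ and an induction down the history tree to pin $x=\frac12$ at every reachable stage. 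This is more detailed than what the paper spells out for uniqueness, and it avoids invoking the general lemma; conversely, the paper's route via Lemma~\ref{Two-edge lemma} generalises immediately to conjugate pairs inside larger $\CM_m$, which is how it is reused later (e.g.\ in the proof of Theorem~\ref{the: 4-ladders}).

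One small correction: the parenthetical ``$x\notin\{0,1\}$, since otherwise no coordinating profile is ever played and $E'(S)=\infty$'' is not justified as written --- a protocol could set $x=1$ at $S$ yet choose $x\neq 1$ at the successor $S^A$, so coordination could still occur later with finite expectation. Fortunately your main argument does not need this remark: the inequality $2=E'(S)\ge 3-4x(1-x)$ already forces $x=\frac12$ on its own (in particular excluding $x\in\{0,1\}$), and then $E'(S^A)=E'(S^B)=2$ follows, so the induction propagates regardless.
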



Next we compare
compare the pros and cons of \LA and \WM in
two-player choice matching games $\CM_m$.
Recall that \WM does not guarantee coordination in these games (when $m\not=1$),
while \LA does guarantee coordination in $\CM_m$ if and only if $m$ is \emph{odd}.
Concerning \emph{expected} coordination times, it is easy to prove that \WM
gives a smaller \ECT than \LA in $\CM_m$ for all
even $m$ (except for the case $m=2$, where \WM and \LA behave identically).
Thus we now restrict attention to the games $\CM_m$ with odd $m$.
Then, the probability of coordinating in the $\ell$-th
round of $\CM_m$ using \LA, with $\ell\leq \lceil m/2 \rceil$, can
relatively easily be seen to be calculable by 
the formula $P_{\ell,m}$  
defined below (where the product is 
$1$ when $\ell=1$). And using the formula for $P_{\ell,m}$, we also 
get a formula for the expected coordination 
time $E_m$ in $\CM_m$ with $\LA$:
\[
P_{\ell,m} = \frac{1}{m - 2\ell + 2}\prod\limits_{k=0}^{k\, =\, {\ell - 2}}\frac{m - 2k - 1}{m - 2k},
\qquad
	E_m =\!\!\! \sum\limits_{\ell=1}^{\ell=\lceil m/2\rceil}\!\!\!\ell\cdot P_{\ell,m}.
\]
%
%
%
%
%
%
Using this and Theorem \ref{jormatheorem}, we 
can compare the {\ECT}s in $\CM_m$ with \LA and \WM for odd~$m$.
%


\begin{center}
\scalebox{0.75}{
\begin{tabular}{|c|c|c|}
\hline
$m$ & \ECT in $\CM_m$ with \WM & \ECT in $\CM_m$ with \LA \\
\hline
$1$ & 1 & $1$ \\
$3$ & $2+\frac{1}{3}$ & $1+\frac{2}{3}$ \\[1mm]
$5$ & $2+\frac{3}{5}$ & $2+\frac{1}{3}$ \\[1mm]
$7$ & $2+\frac{5}{7}$ & $3$ \\[1mm]
$9$ & $2+\frac{7}{9}$ & $3+\frac{2}{3}$ \\[1mm]
\hline
\end{tabular}
}
\end{center}


\vcut{
We have $E_1 = 1$, 
%
$E_3 = 1\frac{2}{3}$,
%
$E_5 = 2\frac{1}{3}$,
$E_7 = 3$ 
and 
$E_9 = 3\frac{2}{3}$. 
}

Especially the case $m=7$ is
interesting, as the \ECT with \LA is exactly 3 which is precisely the
strict upper bound for the {\ECT}s with \WM 
for the class of all two-player choice matching games $\CM_m$.
Furthermore, $m=7$ is the case 
where \WM becomes
faster than \LA in
relation to {\ECT}s. Thus \WM clearly
stays faster than \LA for all $m\geq 7$, including even values of $m$.
%


\vcut{
as the the
expected coordination times
for \WM when $m=5$ and $m=7$ are 
%
$2\frac{3}{5} > E_5$
and 
$2\frac{5}{7} < E_7$, 
respectively.
}

\vcut{

\anote{The following is an exercise that can be
erased (even from the IJCAI 2020 paper). The 
point is to remove a product operator $\Pi$ from
the equations above.}

\textcolor{blue}{From the shape of the right hand side of the equation 
for $P_{m,n}$, we see that the numerator of the  
product involves even truncated double factorials
(e.g., $8\cdot 6 \cdot 4$), while 
the denominator contains 
corresponding odd truncated double factorials
(e.g., $9\cdot 7 \cdot 5)$.
We can write\\
\begin{center}
$\prod\limits_{k=0}^{k\, =\, {m - 2}}\frac{n - 2k - 1}{n - 2k}
=\frac{(n - 1)!!}{n!!} \cdot \frac{(n - 2m + 2)!!}{(n - 2m + 1)!!}$.
\end{center}
Thus 
\begin{center}
$E_{n} = \sum\limits_{m=1}^{m=\lceil n/2\rceil}m\, \frac{1}{n - 2m + 2}
\frac{(n - 1)!!}{n!!} \cdot \frac{(n - 2m + 2)!!}{(n - 2m + 1)!!}$\\ \smallskip
$=\sum\limits_{m=1}^{m=\lceil n/2\rceil}m\, 
\frac{(n - 1)!!}{n!!} \cdot \frac{(n - 2m)!!}{(n - 2m + 1)!!}.$
\end{center}
}
\anote{It is not necessary to 
simplify this even up to here, as the earlier formulae will do.
However, to simplify even further, it makes sense to start evaluating this and 
then see what happens. A recursion formula can be 
worked out possibly. Another approach is to get the
ratio between successive terms of the sum and then use that.
It may also be sensible to fiddle with the double factorials.}
}




\section{Optimizing guaranteed coordination times}\label{sec: GCTs}

In this section we investigate when coordination can be guaranteed in two-player \CM-games and which protocols give the optimal \GCT for them. We begin with the following result.

\begin{theorem}\label{the: even m}
For all even $m\geq 2$, there is no protocol which
guarantees coordination in $\CM_m$.
\end{theorem}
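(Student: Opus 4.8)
The plan is to prove this by a symmetry-preservation argument: I will show that for \emph{every} structural protocol there is an infinite play of $\CM_m$ that never coordinates, by maintaining throughout the play a single fixed nontrivial automorphism of the current stage. The role of the parity hypothesis is exactly that it supplies such an automorphism, namely one built from a fixed-point-free involution of the index set; for odd $m$ no such involution exists, which is consistent with Proposition~\ref{the: GCT with RES}, by which $\LA$ \emph{does} guarantee coordination when $m$ is odd.

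Concretely, I would write $\CM_m$ with Player~1's choices $a_1,\dots,a_m$, Player~2's choices $b_1,\dots,b_m$, and winning profiles $(a_i,b_i)$. Since $m$ is even, fix a fixed-point-free involution $\tau$ of $\{1,\dots,m\}$, say $\tau=(1\,2)(3\,4)\cdots(m{-}1\ m)$. Let $\rho=(\beta,\pi)$ be the pair where $\beta$ transposes the two player roles and $\pi$ is the domain bijection with $\pi(a_i)=b_{\tau(i)}$ and $\pi(b_i)=a_{\tau(i)}$. A direct check against Definition~\ref{def: renamings} shows that $\rho$ is a renaming of the initial stage $(\CM_m,\emptyset)$: it swaps $C_1$ and $C_2$ compatibly with $\beta$, and it preserves winning profiles because $\tau(i)=\tau(j)\Leftrightarrow i=j$. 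The feature of $\rho$ I will exploit is that $\pi$ has no fixed point and sends each edge $\{a_i,b_i\}$ to the \emph{different} edge $\{a_{\tau(i)},b_{\tau(i)}\}$.

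The core of the proof is the following claim, proved by induction on $n$: for every structural protocol there is a non-final stage $S_n=(\CM_m,\mathcal{H}_n)$, reachable with positive probability, such that $\rho$ is a renaming of $S_n$. The base case is $S_0=(\CM_m,\emptyset)$. For the step, suppose $S_k$ is as claimed, and let $f,g$ be the distributions the protocol prescribes at $S_k$ for Players~1 and~2. Applying structurality (Definition~\ref{structuralprotocoldfn}) to the renaming $\rho$ of $S_k$ gives $g(b_{\tau(i)})=f(a_i)$ for every $i$. Pick $i_0$ with $f(a_{i_0})>0$ (possible since $f$ is a distribution) and let the players play the profile $(a_{i_0},b_{\tau(i_0)})$. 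This profile is non-winning, because $\tau(i_0)\neq i_0$; it is played with probability $f(a_{i_0})\,g(b_{\tau(i_0)})=f(a_{i_0})^2>0$, so the resulting stage $S_{k+1}$ is non-final and reachable with positive probability; and $\rho$ is again a renaming of $S_{k+1}$, since $\beta$ swaps the two coordinates while $\pi(b_{\tau(i_0)})=a_{i_0}$ and $\pi(a_{i_0})=b_{\tau(i_0)}$, so $\rho$ maps the new history entry $\{(a_{i_0},b_{\tau(i_0)})\}$ to itself. This completes the induction, and since a non-final stage of length $n$ is a positive-probability play that has failed to coordinate in $n$ rounds, no protocol can guarantee coordination within any fixed number of rounds; hence the \GCT of $\CM_m$ is $\infty$ for every protocol.

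I do not anticipate a real obstacle; the only delicate points are bookkeeping. First, one must verify all three clauses of Definition~\ref{def: renamings} for $\rho$ at every stage, the history clause being the one that genuinely uses $\tau^2=\mathrm{id}$ together with the coordinate swap induced by $\beta$. Second, one should be careful that ``reachable'' is taken to mean positive probability, so that the finitely many positive per-round probabilities along the constructed play multiply to a positive number, which is all that is needed since the \GCT only concerns guaranteed (probability-one) coordination. The single indispensable ingredient is the existence of a fixed-point-free involution on $\{1,\dots,m\}$, which is precisely where evenness of $m$ enters, and the construction correctly breaks down for odd $m$.
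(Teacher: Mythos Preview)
Your proof is correct and takes essentially the same symmetry-based approach as the paper's: both construct an infinite non-coordinating play by having the two players repeatedly select structurally equivalent (hence non-winning) choices, which structurality of the protocol makes possible. Your execution is a bit more explicit---you fix a single renaming $\rho$ built from a fixed-point-free involution of $\{1,\dots,m\}$ and verify by induction that $\rho$ is preserved along the constructed play---whereas the paper argues more informally via the parity of untouched edges at each stage; but the underlying idea is identical.
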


\begin{proof}
Let $\pi$ be a protocol. As $\pi$ is structural, it is possible that in
each round of $\CM_m$, the players pick a pair $(c,c')$ of choices
that are structurally equivalent. Suppose this indeed happens. Now, in each 
round, there are two types of choices the players can make: (1) they both pick a choice from a touched edge; or (2) they both pick a choice from an untouched edge. As there is always an even number of untouched edges left in the game, the
choice of type (2) will never guarantee coordination.
And when the players have failed to
coordinate so far, they will never succeed by making a choice of type (1) (due to
structural equivalence of the choices).
\end{proof}

We next consider choice matching games $\CM_m$
with an odd $m$. Proposition~\ref{the: GCT with RES} showed that the \GCT with \LA
in these games is $\lceil m/2 \rceil$. 
The next theorem (proved in Appendix~\ref{appendix: Proofs})
shows that this is the optimal \GCT for $\CM_m$, and
moreover, \LA is the unique protocol giving this \GCT. 

\begin{theorem}\label{the: odd m}
For any odd $m\geq 1$, $\LA$ is uniquely \GCT-optimal for $\CM_m$.
\end{theorem}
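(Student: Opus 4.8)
The plan is to prove two things about $\CM_m$ for odd $m\geq 1$: first, that no protocol can guarantee coordination in fewer than $\lceil m/2\rceil$ rounds, which establishes $\GCT$-optimality of \LA (since Proposition~\ref{the: GCT with RES} gives $\lceil m/2\rceil$ as the \GCT with \LA); and second, that any protocol achieving this optimal \GCT must agree with \LA on every stage reachable by \LA, which gives uniqueness. The natural vehicle for both parts is an analysis of how the automorphism group of the stage shrinks as play proceeds, since guaranteeing coordination is fundamentally about breaking symmetry. I would introduce a potential function measuring how many untouched edges remain (equivalently, $m$ minus the number of touched edges), and track, round by round, how much a single round of play can reduce the relevant symmetry.

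For the lower bound, I would argue as in the proof of Theorem~\ref{the: even m}: since any protocol $\pi$ is structural, an adversarial run can keep both players picking structurally equivalent choices as long as the current stage still has symmetry among its choices. The key structural observation is that in a non-final stage of $\CM_m$ with $t$ touched edges, all $m-t$ untouched edges are structurally equivalent to each other, and (when $t$ is even, or more precisely when the touched part still admits a nontrivial automorphism) the two endpoints of an untouched edge can always be mapped to the ``wrong'' diagonal partners — so a type-(2) move into an untouched edge cannot force coordination, and a type-(1) move (repeating inside a touched edge) cannot force coordination given failure so far. Thus in each round the number of touched edges can increase by at most $2$ (each player newly touches at most one previously-untouched edge, and to have a chance of coordinating they must touch the \emph{same} new edge, but then failure means they picked mismatched endpoints of two different untouched edges, consuming two). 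Starting from $0$ touched edges, after $r$ rounds at most $2r$ edges are touched; coordination cannot be guaranteed while an untouched edge remains, i.e. while $2r < m$, so we need $r\geq \lceil m/2\rceil$. I would make the ``at most $2$ per round'' claim precise by noting that if the two players put their choices on more than two fresh untouched edges total, none of those fresh edges is jointly occupied, hence no coordination, and the stage advances by consuming exactly the fresh edges touched; the adversary then simply declares whichever continuation keeps a surviving untouched edge.

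For uniqueness, suppose $\pi$ is $\GCT$-optimal for $\CM_m$, i.e. guarantees coordination in exactly $k=\lceil m/2\rceil$ rounds, and let $S$ be a stage reachable by \LA; I want $\pi(S)=\LA(S)$. The idea is a ``no slack'' argument: with only $k$ rounds available and $m=2k-1$ edges to potentially exhaust, the adversary's counting bound above is tight, so $\pi$ has no room to waste a move. Concretely, along any play consistent with $\pi$ in which coordination has not yet occurred, the number of touched edges must strictly increase by the maximum amount $2$ in every round except possibly the last — otherwise an adversarial completion survives past round $k$ (here one reuses Lemma~\ref{similarity lemma} to see that two structurally similar stages have identical \GCT, so a ``wasted'' round that returns to an automorphism-equivalent stage is fatal). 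Forcing a strict decrease of $2$ in the count of untouched edges, together with structurality, pins down the distribution $\pi$ outputs: each player must put positive probability only on choices lying on currently untouched edges (else with positive probability a touched edge is re-entered and the round is wasted), and among those, by the structural-equivalence argument, the distribution is forced to be uniform over the untouched-edge choices of that player — which is exactly what \LA prescribes (in a stage where \LA sees that every choice on a touched edge is ``loop-avoidable'' only by going to an untouched edge, and then randomizes uniformly). I would then note that every stage reachable by \LA in $\CM_m$ for odd $m$ is of precisely this shape (an untouched sub-$\CM_{m-2i}$ sitting inside a history of $i$ pairwise-mismatched touched edges), so the pinning-down applies at each such stage, giving $\pi=\LA$ on all \LA-reachable stages.

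The main obstacle I anticipate is the uniqueness half, specifically making airtight the claim that a protocol with \emph{any} probability mass on a ``non-reducing'' choice (a repeat inside a touched edge, or a move that leaves the automorphism-equivalence class unchanged) necessarily fails to guarantee coordination within $k$ rounds. This requires carefully exhibiting, for each such deviation, an adversarial opponent response and continuation that survives to round $k+1$ — essentially a strategy-stealing/adversary argument showing the counting bound is met with equality only by \LA-like play. One has to be careful that the opponent is also a (single, fixed) structural protocol, so the adversary's choices must themselves be realizable; the cleanest route is probably to let the adversary mirror the deviating player via a stage automorphism, so that whenever the deviating player wastes a round, the mirrored opponent does too, and the pair of mismatched-endpoint choices on the untouched edges is guaranteed by structurality. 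Handling the small odd cases $m=1,3$ separately (where $k=1$, resp. the \LA-reachable stages are few and can be checked by hand, cf. Proposition~\ref{the: LG3}) will serve as a sanity check and base case for the induction on the remaining untouched subgame.
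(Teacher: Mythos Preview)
Your proposal is correct and takes essentially the same approach as the paper: both rely on an adversary that mirrors one player's choice to the structurally equivalent choice of the other (using the player-swapping renaming that persists at every \LA-reachable stage), together with the counting observation that at most two new edges can be touched per round, so any deviation onto a touched edge wastes a round and forces the run past $\lceil m/2\rceil$. The only organizational difference is that you split the argument into a lower-bound part and a separate ``no slack'' uniqueness part, whereas the paper folds both into a single proof by contradiction starting from the first round in which $\pi$ assigns positive mass to a touched-edge choice.
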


\vcut{
If we assume that rational principles should be structural (as the authors argue
in \cite{lori2017}), then winning is not guaranteed in any number of rounds in $LG_m$ by rational reasoning (in particular, by following RES) when $m$ is even. However, even though coordination is not guaranteed for such games (by following structural principles), players can still naturally optimize their expected coordination time in these games. This is studied in the next section.
}

\vcut{
Lastly we note that there are alternative (rational) ways to reason that $G(m(1\times 1))$ for odd $m$ is solvable (we present these informally):
\begin{itemize}
\item Assumption of ``identical reasoners''. \dots
\item Reduction of bad symmetries. \dots
\end{itemize}
}


\section{Optimizing expected coordination times}\label{optimizingsection}

In this section we investigate which protocols give the best {\ECT}s 
for two-player choice matching games. 
We also 
investigate when the best \ECT is obtained by a unique protocol.
We already know by Propositions~\ref{the: LG2} and \ref{the: LG3} that the
optimal {\ECT}s for $\CM_2$ and $\CM_3$
are uniquely given by \WM and \LA, respectively. 
Thus it remains to consider the games $\CM_m$ with $m\geq 4$.
We first cover
the case $m\geq 6$ and show that then \WM is the unique protocol
giving the best \ECT. 
The remaining special cases $m=4$ and $m=5$ will then be examined.
The following auxiliary lemma (proven in Appendix~\ref{appendix: Proofs}) will be
used in the proofs.

\begin{lemma}\label{32 Lemma}
The \ECT from $(\CM_m,\mathcal{H}_k)$ with no focal point 
is at least $\frac{3}{2}$ with any protocol.
\end{lemma}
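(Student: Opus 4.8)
The plan is to reduce the statement to a single bound on the one-shot coordination probability. Write $S=(\CM_m,\mathcal{H}_k)$ and recall that the winning profiles of $\CM_m$ form a perfect matching $(u_i,v_i)_{i\le m}$ with $C_1=\{u_1,\dots,u_m\}$, $C_2=\{v_1,\dots,v_m\}$. Fix any (structural) protocol $\pi$ and let $f_1,f_2$ be the distributions it prescribes at $S$ for players $1$ and $2$, so that the \OSCP{} from $S$ with $\pi$ is $p=\sum_{i=1}^m f_1(u_i)f_2(v_i)$. Since coordination always takes at least one round and at least two whenever the first round fails, the \ECT{} from $S$ with $\pi$ is at least $p\cdot 1+(1-p)\cdot 2=2-p$. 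Hence it suffices to prove the key claim: \emph{if $S$ has no focal point, then $p\le\tfrac12$ for every structural $\pi$.}

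To prove the claim, put $w^*=\max\{\max_i f_1(u_i),\ \max_j f_2(v_j)\}$. If $w^*\le\tfrac12$, then trivially $p=\sum_i f_1(u_i)f_2(v_i)\le\big(\max_j f_2(v_j)\big)\sum_i f_1(u_i)\le w^*\le\tfrac12$. So assume $w^*>\tfrac12$; by the symmetry between the two players we may take $w^*=f_1(u_i)$ for some $i$. Since $S$ has no focal point, there is a choice $c\sim u_i$ with $c\neq u_i$ and $c\neq v_i$. It cannot be a choice $u_j$ of player $1$: structurality would force $f_1(u_j)=f_1(u_i)=w^*>\tfrac12$, so $f_1(u_i)+f_1(u_j)>1$, impossible. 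Hence $c=v_l$ for some $l\neq i$, and structurality (via a renaming whose permutation swaps the two players) gives $f_2(v_l)=f_1(u_i)=w^*$. Moreover that renaming respects the winning relation, so from $(u_i,v_i)\in W_{\CM_m}$ and the matching structure of $\CM_m$ it must send $v_i$ to $u_l$; therefore $v_i\sim u_l$ and $f_1(u_l)=f_2(v_i)=:w'$. This is the crucial ``misalignment'': player $1$'s heavy choice $u_i$ is paired through the winning relation with player $2$'s choice $v_i$, which carries only mass $w'$, and symmetrically on the other side.

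What remains is a short estimate. Since $i\neq l$, the choices $u_i,u_l$ (and $v_i,v_l$) are distinct, so $\sum_{t\neq i,l}f_1(u_t)=\sum_{t\neq i,l}f_2(v_t)=1-w^*-w'$; in particular $w^*+w'\le1$ and each $f_2(v_t)\le 1-w^*-w'$ for $t\neq i,l$. Then
\[
  p = f_1(u_i)f_2(v_i)+f_1(u_l)f_2(v_l)+\sum_{t\neq i,l}f_1(u_t)f_2(v_t)
  \;\le\; 2w^*w' + (1-w^*-w')^2 .
\]
Writing $s=w^*+w'\in(\tfrac12,1]$ and using $w^*w'\le s^2/4$, the right-hand side is at most $\tfrac12 s^2+(1-s)^2=\tfrac32 s^2-2s+1$, a convex quadratic whose maximum over $(\tfrac12,1]$ is attained at $s=1$ and equals $\tfrac12$. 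Hence $p\le\tfrac12$, which finishes the claim and therefore the lemma.

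The main obstacle is precisely the case $w^*>\tfrac12$: one must squeeze out of ``no focal point'' plus structurality the fact that a heavily weighted choice of one player is necessarily matched, in the winning relation, to a lightly weighted choice of the other — i.e., structurality prevents the players from concentrating their mass on a common winning profile. The step where the player-swapping renaming is forced to send $v_i$ to $u_l$ (rather than to some other $u_a$) is exactly where the perfect-matching structure of $\CM_m$ is used, and tuning the final inequality so that it bottoms out at exactly $\tfrac12$ is what pins down the constant $\tfrac32$. One should also check the harmless degenerate possibilities (e.g. $w'=0$), but they do not affect the bound.
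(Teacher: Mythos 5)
Your proof is correct, and it takes a genuinely different route from the paper's at the key step. Both arguments reduce the lemma to the bound ``no focal point implies one-shot coordination probability at most $\frac{1}{2}$,'' after which the conclusion $\ECT \ge p\cdot 1 + (1-p)\cdot 2 = 2-p \ge \frac{3}{2}$ is identical. The paper proves the probability bound combinatorially: it asserts that in a stage of $\CM_m$ without focal points the edges can be partitioned into conjugate doubletons (plus, in the odd case, one tripleton whose six choices are pairwise structurally equivalent), and then bounds the conditional probability of matching inside a block by $\frac{1}{2}$ (respectively $\frac{1}{3}$) via Lemma~\ref{Two-edge lemma}. You instead work directly with the probability vectors: either every single-choice mass is at most $\frac{1}{2}$, which already gives $p\le\frac{1}{2}$, or there is a heavy choice, and then absence of a focal point plus structurality forces the cross-player misalignment $f_2(v_l)=f_1(u_i)$ and $f_1(u_l)=f_2(v_i)$ for some $l\neq i$ (your use of the perfect-matching structure to force the player-swapping renaming to send $v_i$ to $u_l$ is sound), after which the quadratic estimate in $s=w^*+w'$ yields $p\le\frac{1}{2}$. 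Your route has the advantage of bypassing the paper's partition claim entirely---a claim the paper states without detailed justification---at the cost of a more computational extremal argument; the paper's partition view makes the extremal configuration (all mass split evenly over one conjugate pair of edges, i.e.\ $s=1$ in your estimate) more transparent and also gives the sharper $\frac{1}{3}$ bound inside the odd tripleton, which neither argument actually needs.
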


We then present a formula for
estimating the best {\ECT}s in cases to be investigated.
Let $S := (\CM_m,\mathcal{H}_k)$ be a non-final stage
with exactly two touched edges.
Thus there are $n:=m-2$ untouched edges.
Suppose the players use a protocol $\pi$ behaving as follows in
round $k+1$. Both players pick a choice from
some touched edge with probability $p$ and
from an untouched edge with probability $(1-p)$. A uniform
distribution is used on choices in 
both classes: probability $\frac{p}{2}$ for both
choices on touched edges (which makes
sense by Lemma \ref{Two-edge lemma}) and probability $\frac{1-p}{n}$ for each
choice on untouched edges (which is necessary with a 
structural protocol). If one player selects a choice $c$ from a touched
edge and the other one a choice $c'$ from an untouched
edge, the players win in the next round by choosing the 
edge with~$c'$.
%
%
Note that $c'$ is a focal point, so the winning edge can be
chosen by a structural protocol with probability~1.
(Also other focal points arise which could alternatively be
used; cf. Example~\ref{ex: focal points} in Appendix~\ref{appendix: Examples}.)


Suppose then that $E_1$ is the \ECT with $\pi$ from a stage $(\CM_m,\mathcal{H}_{k+1})$
where both players have chosen a touched
edge in round $k+1$ but failed to coordinate.
Two different such stages $(\CM_m,\mathcal{H}_{k+1})$ exist, but they
are automorphism-equivalent, so $\pi$ can give the same \ECT 
from both of them by Lemma~\ref{similarity lemma}. (Indeed, if $\pi$ gave
two different $\ECT$s, it would make sense to adjust it to give the
smaller one.)
Similarly, suppose $E_2$ is the \ECT with $\pi$ from a
stage $(\CM_m,\mathcal{H}_{k+1}')$ where
both players have chosen an untouched edge in round $k+1$ but failed to coordinate.
Note that all possible such stages $(\CM_m,\mathcal{H}_{k+1}')$
are renamings of each other, so $\pi$ must give the same \ECT from each one.
We next establish that the expected
coordination time from $(\CM_m,\mathcal{H}_k)$ with $\pi$ is now
given by the following formula (to be called formula (E) below):
\begin{align*}
	p^2\Bigl(\frac{1}{2}+\frac{1}{2}\bigl(1+E_1\bigr)\Bigr)\ &+\ 2p(1-p)\cdot 2\ 
	+\ (1-p)^2\, \Bigl(\dfrac{1}{n} + \dfrac{n-1}{n}\Bigl(1 + E_2\Bigr)\Bigr) \tag{E}
\end{align*}

Indeed, both players choose a touched edge in round $k+1$ with probability $p^2$. 
In that case the \ECT from $(\CM_m,\mathcal{H}_{k})$ is 
$\frac{1}{2}+\frac{1}{2}(1+E_1)$,
the first occurrence of $\frac{1}{2}$ corresponding to direct
coordination and the remaining term covering the case where
coordination fails at first.
Both players choose an untouched edge in round $k+1$ with
probability $(1-p)^2$, and then
the \ECT from $(\CM_m,\mathcal{H}_k)$ is
$\frac{1}{n} + \frac{n-1}{n}(1 + E_2).$
\vcut{
Here $\frac{1}{n}$ gives the contribution resulting from direct
coordination and the remaining term covers the case if
coordination fails at first. 
}
The remaining term $2p(1-p)\cdot 2$ is the
contribution of the case
where one player chooses a touched edge and the other player an
untouched one. The probability for this is $2p(1-p)$, and
the remaining factor $2$ indicates that coordination 
immediately happens in the subsequent round $k+2$ using the
focal point created in round $k+1$.

Now consider the following informal argument sketch.
In $\CM_m$ with $m\geq 6$, we may assume that $E_1\leq 2$
and $E_2\geq\frac{3}{2}$ by Lemmas \ref{roskalemma} and \ref{32 Lemma}.
Figure \ref{fig: fig} below illustrates the graph of (E) 
with $E_1 = 2$, $E_2 = \frac{3}{2}$, $n=4$, so then (E) has a unique
minimum at $p=1$ when $p\in[0,1]$. This
suggests that---under these parameter values---the players should
always choose a touched edge in
stages with exactly two touched edges. Clearly, lowering $E_1$,
raising $E_2$ or raising $n$ should make it even more beneficial to
choose a touched edge. As we indeed can assume that $E_1\leq 2$
and $E_2\geq\frac{3}{2}$ in $\CM_m$ for $m\geq 6$, this informally 
justifies that the following theorem holds.

\begin{theorem}\label{the: 6-ladders}
\WM is uniquely \ECT-optimal for each $\CM_m$ with $m\geq 6$.
\end{theorem}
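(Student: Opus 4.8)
The plan is to make rigorous the informal argument sketch preceding the statement, via an induction on the number of untouched edges (equivalently on $m$) combined with a careful analysis of formula (E). First I would set up the right invariant: I claim that for every $m \geq 6$ and every non-final stage $S = (\CM_m, \mathcal{H}_k)$ reachable under an optimal protocol, the $\ECT$ from $S$ equals the $\ECT$ that \WM gives from $S$, and moreover any $\ECT$-optimal protocol must agree with \WM on $S$. The key structural fact (which I would isolate as a preliminary step, using Lemma~\ref{Two-edge lemma} and the structural-protocol constraint) is that the relevant state of a stage $S$ is captured by the pair $(t, u)$ where $t$ is the number of touched edges and $u = m - t$ the number of untouched ones; all stages with the same $(t,u)$ are structurally similar, so by Lemma~\ref{similarity lemma} an optimal protocol may be taken to depend only on $(t,u)$, and I only need to track, for each $(t,u)$, the minimal achievable $\ECT$, call it $V(t,u)$.

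The heart of the argument is the case $t = 2$ analysed through formula (E). By Lemma~\ref{roskalemma} we have $E_1 \leq 2$ for the successor stage with two touched edges reached by \WM (and, inductively, $E_1$ equals exactly $2$ when the optimal protocol is forced to be \WM there), and by Lemma~\ref{32 Lemma} we have $E_2 \geq \tfrac{3}{2}$ since a stage with $\geq 3$ touched edges arising here has no focal point among its untouched-edge choices. I would then show that the function
\[
	E(p) \;=\; p^2\Bigl(\tfrac12 + \tfrac12(1+E_1)\Bigr) + 4p(1-p) + (1-p)^2\Bigl(\tfrac1n + \tfrac{n-1}{n}(1+E_2)\Bigr),
\]
viewed as a quadratic in $p$ on $[0,1]$ with $n = m-2 \geq 4$, $E_1 \leq 2$, $E_2 \geq \tfrac32$, is strictly decreasing on $[0,1]$, so its unique minimum is at $p = 1$. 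Concretely: $E(p)$ is a quadratic $ap^2 + bp + c$; I would compute $E'(1) = 2a + b$ and the coefficient $a$ of $p^2$, and verify $a > 0$ (convexity) together with $E'(1) \leq 0$, with equality only in degenerate parameter ranges excluded for $n \geq 4$; convexity plus $E'(1)\leq 0$ forces the minimum at the right endpoint $p=1$, and strictness of the inequality $E'(1) < 0$ for $n \geq 4$ gives that $p=1$ is the \emph{unique} minimizer. This is exactly the picture in Figure~\ref{fig: fig}. Setting $p = 1$ collapses (E) to $1 + \tfrac12 E_1$, i.e., one round to re-pick a touched edge, coordinate with probability $\tfrac12$, else recurse into a two-touched-edge stage — which is precisely \WM's behaviour and yields $\ECT = 2$ there.

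For stages with $t \geq 3$ touched edges I would run the analogous but easier computation: the optimal move is again to restrict to (at most) two touched edges — here I would actually argue that the presence of $\geq 3$ touched edges only \emph{helps} relative to the $t=2$ baseline, so the value is still governed by the same endpoint optimum and equals what \WM gives; monotonicity in $n$ noted in the sketch ("raising $n$ should make it even more beneficial to choose a touched edge") is the formal lever. Finally, for the base of the induction and to close the reachability gaps, I would observe that under any optimal protocol play from the initial stage passes through a stage with exactly one touched edge (the first non-coordinating round), then exactly two, after which the $t=2$ analysis above pins down every subsequent move; combined with the fact that \WM itself produces exactly the state sequence $1, 2, 2, 2, \dots$ of touched-edge counts, this shows any optimal protocol coincides with \WM on all reachable stages, giving uniqueness. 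The main obstacle I anticipate is not the endpoint optimization of (E) — that is a routine quadratic estimate — but rather the bookkeeping needed to guarantee that $E_1 \leq 2$ and $E_2 \geq \tfrac32$ hold \emph{simultaneously with} the induction hypothesis that the optimal continuation already is \WM: one must be careful that invoking Lemma~\ref{roskalemma} for the $E_1$-successor does not secretly assume what is being proved, so the induction should be organized downward on $m$ (or on $u$) with the $t=2$, small-$u$ stages, ultimately $\CM_2$ and $\CM_3$ handled by Propositions~\ref{the: LG2} and~\ref{the: LG3}, as the genuine anchor.
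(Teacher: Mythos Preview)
Your core computation---bounding $E_1\le 2$ via Lemma~\ref{roskalemma}, $E_2\ge\tfrac32$ via Lemma~\ref{32 Lemma}, and then showing formula~(E) has its unique minimum on $[0,1]$ at $p=1$ for $n\ge 4$---is exactly the paper's argument, and it is already the whole proof. Once $p=1$ is forced at a two-touched-edge stage, choosing a touched edge with uniform weights (Lemma~\ref{Two-edge lemma}) lands you in an automorphism-equivalent two-touched-edge stage, so the same conclusion applies again; hence any optimal protocol coincides with \WM on every reachable stage. No induction on $m$ or $u$, no analysis of $t\ge 3$ stages, and no anchoring at $\CM_2,\CM_3$ is needed.

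The scaffolding you add is both unnecessary and partly wrong. The claim that all stages with the same $(t,u)$ are structurally similar is false: in $\CM_6$, the histories $(a_1,b_2),(c_1,d_2)$ and $(a_1,b_2),(b_1,c_2),(c_1,d_2)$ both yield $t=4$, but in the second every touched choice is a focal point while in the first none is, so they are not automorphism-equivalent. Your circularity worry is misplaced: Lemma~\ref{roskalemma} says \WM achieves \ECT exactly $2$ from any two-touched-edge stage, which upper-bounds the infimum $r_1$ without assuming \WM is optimal; that is all the bound $E_1\le 2$ needs. Two smaller slips: after a failed first round there are \emph{two} touched edges, not one; and $E'(1)=E_1-2$ is independent of $n$, so it can vanish (at $E_1=2$). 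Uniqueness of the minimizer $p=1$ then follows from a short case split---if the quadratic is convex the vertex sits at $p=1$, and if it is concave one compares endpoints and uses $E(1)=1+\tfrac{E_1}{2}\le 2<1+\tfrac{(n-1)E_2}{n}=E(0)$---rather than from your claimed strict inequality $E'(1)<0$.
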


\begin{proof}
Let $S := (\CM_m,\mathcal{H}_k)$, $m\geq 6$, be a non-final stage with
precisely two touched edges and $S'$ a stage extending $S$ by one
round where the players both choose an untouched edge but fail to coordinate.
Let $r_1$ (respectively, $r_2$) be the infimum of all
possible {\ECT}s from $S$ (respectively, $S'$) with different protocols.
Note that by Lemma \ref{similarity lemma}, $r_1$ and $r_2$ are independent of
which particular representative stages we
choose, as long as the stages satisfy the given constraints.
Let $\epsilon > 0$ and fix some numbers $E_1$ and $E_2$
such that $|E_1 - r_1 | < \epsilon$ and $|E_2 - r_2 | < \epsilon$. We
assume $E_1\leq 2$
and $E_2\geq\frac{3}{2}$ by Lemmas \ref{roskalemma} and \ref{32 Lemma}.
It is easy to show that with such $E_1$
and $E_2$, the minimum value of the formula (E)
with $p\in [0,1]$ is obtained at $p = 1$ (for any $n = m-2 \geq 4$).

Thus, after the necessarily random choice in round one, the above reasoning
shows that the players should choose a
touched edge with probability $p=1$ in
each round. Indeed, assume the earliest occasion that some
protocol $\pi_k$ assigns $p\not=1$ in some stage is round $k$.
Then the above shows that the \ECT of $\pi_k$
can be strictly
improved by letting $p=1$ in that round. By Lemma 
\ref{Two-edge lemma} in the Appendix, a uniform probability over the
touched choices should be used. 
\end{proof}

\vspace{-3.3mm}

\setlength{\belowcaptionskip}{-15pt}
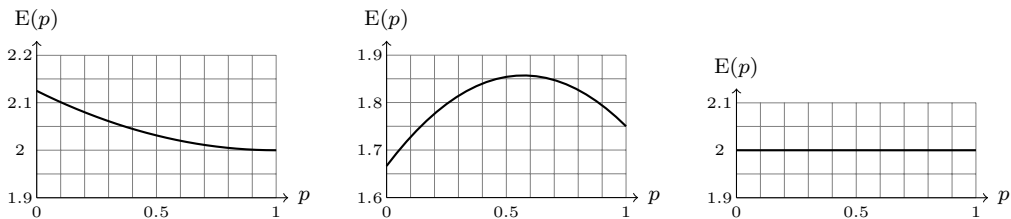
\begin{figure}[h]
\begin{center}
\begin{tikzpicture}[domain=0:1,xscale=5,yscale=10,scale=0.63]
 	\draw[ultra thin,color=gray] (0,1.9) grid[xstep=0.1,ystep=0.05] (1,2.2);
	\draw[->] (0,1.9) -- (1.05,1.9) node[right] {\scriptsize $p$};
	\draw[->] (0,1.9) -- (0,2.23) node[above] {\scriptsize E$(p)$};
	\node at (0,1.87) {\tiny $0$};
	\node at (0.5,1.87) {\tiny $0.5$};
	\node at (1,1.87) {\tiny $1$};
	\node at (-0.07,1.9) {\tiny $1.9$};
	\node at (-0.07,2) {\tiny $2$};
	\node at (-0.07,2.1) {\tiny $2.1$};
	\node at (-0.07,2.2) {\tiny $2.2$};
	\draw[thick] plot (\x, {\x^2*(1/2+(1/2)*(1+2)) + 4*\x*(1-\x) + (1-\x)^2*(1/4+(3/4)*(1+(3/2)))});
\end{tikzpicture}
\;
\begin{tikzpicture}[domain=0:1,xscale=5,yscale=10,scale=0.63]
 	\draw[ultra thin,color=gray] (0,1.6) grid[xstep=0.1,ystep=0.05] (1,1.9);
	\draw[->] (0,1.6) -- (1.05,1.6) node[right] {\scriptsize $p$};
	\draw[->] (0,1.6) -- (0,1.93) node[above] {\scriptsize E$(p)$};
	\node at (0,1.57) {\tiny $0$};
	\node at (0.5,1.57) {\tiny $0.5$};
	\node at (1,1.57) {\tiny $1$};
	\node at (-0.07,1.6) {\tiny $1.6$};
	\node at (-0.07,1.7) {\tiny $1.7$};
	\node at (-0.07,1.8) {\tiny $1.8$};
	\node at (-0.07,1.9) {\tiny $1.9$};
	\draw[thick] plot (\x, {\x^2*(1/2+(1/2)*(1+(3/2))) + 4*\x*(1-\x) + (1-\x)^2*(1/3+(2/3)*(1+1))});
\end{tikzpicture}
\;
\begin{tikzpicture}[domain=0:1,xscale=5,yscale=10,scale=0.63]
 	\draw[ultra thin,color=gray] (0,1.9) grid[xstep=0.1,ystep=0.05] (1,2.1);
	\draw[->] (0,1.9) -- (1.05,1.9) node[right] {\scriptsize $p$};
	\draw[->] (0,1.9) -- (0,2.13) node[above] {\scriptsize E$(p)$};
	\node at (0,1.87) {\tiny $0$};
	\node at (0.5,1.87) {\tiny $0.5$};
	\node at (1,1.87) {\tiny $1$};
	\node at (-0.07,1.9) {\tiny $1.9$};
	\node at (-0.07,2) {\tiny $2$};
	\node at (-0.07,2.1) {\tiny $2.1$};
	\draw[thick] plot (\x, {\x^2*(1/2 + (1/2)*(1+2)) + 4*\x*(1-\x) + (1-\x)^2*(1/2+(1/2)*(1+2))});
\end{tikzpicture}%
\caption{Graph of (E) with\ \ (i)~$n=4$, $E_1=2$, $E_2=\frac{3}{2}$;\ \ \ 
(ii)~$n=3$, $E_1=\frac{3}{2}$, $E_2=1$;\ \ \ 
(iii)~$n=2$, $E_1=E_2=2$.}
\label{fig: fig}
\end{center}
\end{figure}%
%
%

We then cover the case for $\CM_5$. The argument is similar to 
the case for $\CM_m$ with $m\geq 6$, 
but this time leads to the use of $\LA$ instead of $\WM$.

\begin{theorem}\label{the: 5-ladders}
For $\CM_5$, \LA is uniquely \ECT-optimal.
\end{theorem}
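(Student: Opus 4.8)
The plan is to mirror the structure of the proof of Theorem \ref{the: 6-ladders}, using formula (E) with the parameters appropriate to $\CM_5$, but where the analysis now pushes the optimal probability to the opposite extreme $p=0$ instead of $p=1$. First I would consider an arbitrary non-final stage $S := (\CM_5,\mathcal{H}_k)$ with exactly two touched edges; since $m=5$, there are $n = m-2 = 3$ untouched edges. As in Theorem \ref{the: 6-ladders}, let $r_1$ (resp.\ $r_2$) be the infimum over all protocols of the {\ECT}s from a stage extending $S$ by one round where both players picked a touched (resp.\ untouched) edge but failed to coordinate; by Lemma \ref{similarity lemma} these are well-defined independently of the chosen representative. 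The key numerical input is that $E_2 = 1$ here: if both players choose an untouched edge in round $k+1$ and fail, then (with $n=3$) the resulting stage has exactly one untouched edge left and, crucially, a focal point is created, so coordination is guaranteed in the very next round — hence the \ECT from that stage is exactly $1$. For the touched-edge branch, Lemma \ref{32 Lemma} gives $E_1 \geq \tfrac{3}{2}$.

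Next I would plug $n=3$ and $E_2 = 1$ into formula (E) and analyze it as a function of $p \in [0,1]$. Figure \ref{fig: fig}(ii) plots the representative case $E_1 = \tfrac{3}{2}$, where (E) is easily checked to attain its minimum at $p = 0$. The point is that this conclusion is robust: since $E_1$ appears in (E) only through the term $p^2\bigl(\tfrac12 + \tfrac12(1+E_1)\bigr)$ with a nonnegative coefficient $p^2$, raising $E_1$ above $\tfrac{3}{2}$ can only increase $E(p)$ for $p>0$ while leaving $E(0)$ untouched, so $p=0$ remains the unique minimizer for every admissible $E_1 \geq \tfrac{3}{2}$. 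Running the same $\epsilon$-approximation argument as in Theorem \ref{the: 6-ladders}: fix $E_1, E_2$ with $|E_1 - r_1| < \epsilon$, $|E_2 - r_2| < \epsilon$, note $E_1 \geq \tfrac32$ and $E_2 = 1$ (so $r_2 = 1$), and conclude that any protocol assigning $p \neq 0$ at the earliest such stage can be strictly improved by setting $p = 0$ there. Among untouched choices a uniform distribution is forced (a structural protocol has no other option on structurally equivalent choices), and the focal-point-based win in the subsequent round is also forced for a structural protocol. Therefore the unique \ECT-optimal behaviour is: make the forced uniform random choice in round $1$; if coordination fails, pick uniformly among the (now $3$, then $1$) untouched edges, guaranteeing a win by round $3$ — which is precisely $\LA$. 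One then iterates using Lemma \ref{similarity lemma} (successive failed rounds yield structurally similar stages) to see the whole play follows $\LA$.

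The main obstacle I anticipate is the same one flagged for the $m\geq 6$ case: pinning down \emph{uniqueness}, i.e., ruling out that a cleverer protocol could exploit asymmetric histories created along the way. Concretely, one must handle the initial round (where all choices are structurally equivalent, so the uniform distribution is forced for any structural protocol), verify that the only non-final stages reachable under an optimal protocol are exactly the two-touched-edge stages analyzed above together with their structural copies, and confirm that no alternative optimal protocol can deviate on any reachable stage. I would lean on the fact that $\CM_5$ is small enough that the stage space under consideration is tightly constrained — after round $1$ there are two touched edges and three untouched ones, after round $2$ (if the forced move was made) either coordination or the unique one-untouched-edge stage — so the case analysis, while requiring care, is finite and short. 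A secondary point to check carefully is that the $E_2 = 1$ claim genuinely uses $n = 3$ (it would fail for larger odd $n$), and that Lemma \ref{Two-edge lemma} is invoked correctly only where a touched edge is actually selected; since here the optimum is $p=0$, the uniformity we need on the touched side is never exercised on the optimal path.
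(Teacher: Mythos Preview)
Your proposal is correct and follows essentially the same route as the paper's own proof: both arguments apply formula (E) at a non-final stage with two touched edges, use $n=3$, $E_2=1$ (via the unique winning focal pair created after a failed second pick among untouched edges), invoke Lemma~\ref{32 Lemma} for $E_1\geq\tfrac{3}{2}$, and then verify analytically that the minimum of (E) on $[0,1]$ occurs uniquely at $p=0$, forcing the players to follow $\LA$. Your treatment is somewhat more explicit on the monotonicity argument in $E_1$ and on the uniqueness considerations, but there is no substantive divergence.
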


\begin{proof}
Let $S := (\CM_5,\mathcal{H}_k)$ be a non-final stage with
precisely two touched edges and $S'$ a stage extending $S$ by one
round where the players both choose an untouched edge but fail to coordinate. The \ECT-optimal protocol from $S'$
chooses the unique winning pair of focal points in round $k+2$, so we
now have $E_2 = 1$.
Let $r_1$ be the infimum of all
possible {\ECT}s from $S$ with different protocols.
Let $\epsilon > 0$ and fix some real number $E_1$
such that $|E_1 - r_1 | < \epsilon$,
assuming $E_1\geq\frac{3}{2}$ (cf. Lemma \ref{32 Lemma}).
It is straightforward to show that with these values,
and with $n=3$, the minimum of (E)
when $p\in [0,1]$ is obtained at $p = 0$. (See
also Figure~\ref{fig: fig} for the graph of (E) when $E_1=\frac{3}{2}$ 
for an illustration. Even then the figure suggests to choose an
untouched edge.)

Thus, after the necessarily random choice in round one, the above reasoning
shows that the players should choose an
untouched edge with probability $1$ in
the second round, thereby following $\LA$.
Coordination is guaranteed (latest) in the third round.
%
\end{proof}


In the last case $m=4$, \WM is \ECT-optimal, but not uniquely, as there
exist infinitely many other \ECT-optimal protocols. 
The reason for this is that---as shown in Figure \ref{fig: fig}---the graph of (E)
becomes the constant line with the value $2$ in
special case where $E_1=E_2=2$, and then any $p\in[0,1]$
gives the optimal value for (E). 
%
%
A complete proof is given in Appendix~\ref{appendix: Proofs}.


\begin{theorem}\label{the: 4-ladders}
\WM is \ECT-optimal for $\CM_4$, but there are continuum many other
protocols that are also \ECT-optimal. 
\end{theorem}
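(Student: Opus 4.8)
The plan is to analyse $\CM_4$ through the same formula (E) as in the cases $m\ge 6$ and $m=5$, but now with $n=m-2=2$, and to exploit the fact (already visible in Figure~\ref{fig: fig}(iii)) that when $E_1=E_2=2$ and $n=2$ the formula (E) collapses to the constant function $2$ on $[0,1]$. After the necessarily uniform first round, coordination succeeds with probability $\tfrac14$; otherwise one reaches a non-final stage $S=(\CM_4,\mathcal H_1)$ with exactly two touched edges, so everything hinges on the optimal \ECT from such an $S$. I would first pin down the two quantities feeding into (E): $E_1$, the \ECT from a stage obtained from $S$ by one more round in which both players pick a touched edge but fail, and $E_2$, the \ECT from a stage obtained by one more round in which both pick an untouched edge but fail. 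The claim to establish is that for $\CM_4$ \emph{both equal $2$}.

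For $E_1$: the two stages reachable from $S$ via ``both play a touched edge, fail'' (the ``repeat'' $((c_1,c_2),(c_1,c_2))$ and the ``swap'' $((c_1,c_2),(c_2'',c_1'))$) are each \emph{automorphism-equivalent} to $S$ itself, which one checks by writing down the player-swap renaming composed with the transposition of the two touched edges. Hence by Lemma~\ref{similarity lemma} an optimal protocol yields the same \ECT, call it $r_1$, from all three, so $E_1=r_1$; formula (E) then becomes self-referential, and combined with $E_2=2$ its shape forces $r_1=2$ as the unique fixed point. For $E_2$: the relevant stage has all four edges touched but admits \emph{no focal point} (all of its structural-equivalence classes are two-element sets, each meeting two distinct edges), so Lemma~\ref{32 Lemma} only gives $E_2\ge\tfrac32$. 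I would sharpen this to $E_2=2$ by showing that from this stage---and, crucially, from every stage of $\CM_4$ reachable from it---the one-shot coordination probability of any structural protocol is at most $\tfrac12$: structurality forces players $1$ and $2$ to use the same weights $q_A,q_B,q_C,q_D$ on the four classes, exactly two of the four ordered class-pairs coordinate, so the one-shot probability is $2(q_Aq_B+q_Cq_D)\le\tfrac12$ by AM--GM (with equality only at $q_A=q_B=\tfrac12$ or $q_C=q_D=\tfrac12$). Since a failed round keeps one inside this same ``four-paired'' family (no focal point is ever created---this is the persistent player-swap symmetry of even-size games), the standard geometric estimate $E[X]=\sum_{k\ge 0}P(X>k)\ge\sum_{k\ge0}2^{-k}=2$ gives $E_2\ge 2$, while the stationary weight-$\tfrac12$ rule (or \WM restarted at that stage) attains exactly $2$.

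With $E_1=E_2=2$ and $n=2$, formula (E) is identically $2$, so from $S$ every probability $p\in[0,1]$ of picking a touched edge is \ECT-optimal, and the optimal \ECT in $\CM_4$ is $\tfrac14\cdot1+\tfrac34(1+2)=\tfrac52$, matching the summary table. In particular \WM, which is the instance $p=1$ (and, as noted, never leaves the two-touched-edge stages), is \ECT-optimal. For non-uniqueness I would exhibit the explicit one-parameter family $\pi_p$, $p\in[0,1]$: play uniformly in round one; in any two-touched-edge stage play a touched edge with probability $p$ and an untouched edge with probability $1-p$, uniformly within each class (Lemma~\ref{Two-edge lemma}); if exactly one player lands on a touched edge and the other on an untouched one, use the focal point thereby created to win in the next round; and in four-edges-touched stages use the weight-$\tfrac12$-on-two-classes rule. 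A short induction (using self-similarity of the stages, exactly as in the $E_1$ computation) shows each $\pi_p$ attains \ECT $=2$ from $S$, hence $\tfrac52$ in $\CM_4$; distinct values of $p$ give protocols with different output distributions at $S$, so this is a continuum of pairwise distinct \ECT-optimal protocols, with \WM the case $p=1$.

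The main obstacle is the sharp lower bound $E_2\ge 2$. Unlike $\CM_5$, where the analogous ``all edges touched'' stage offers a winning pair of focal points (making $E_2=1$), and unlike $\CM_m$ for $m\ge 6$, where the bounds $E_1\le 2$ and $E_2\ge\tfrac32$ already pin $p$ to $1$, the even case $m=4$ has the feature that \emph{no beneficial asymmetry can ever be manufactured}. Making this precise needs a clean invariant---roughly, that every stage of $\CM_4$ reachable by a structural protocol is, up to renaming, one whose structural-equivalence classes are two-element ``player-swap pairs'' with only two coordinating ordered class-pairs---and a proof that it is preserved under every failed round; this is the delicate part, and it is the same phenomenon that underlies Theorem~\ref{the: even m} and the failure of \LA to guarantee coordination when $m$ is even.
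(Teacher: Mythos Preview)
Your overall strategy---pin down the quantities $E_1$ and $E_2$ feeding into formula (E), show both equal $2$, and then observe that (E) collapses to the constant $2$---is sound and close in spirit to the paper's approach. The treatment of $E_1$ via self-reference (the stage after a failed touched-edge round is automorphism-equivalent to $S$ itself) is correct, and the continuum construction via the family $\pi_p$ is also fine once $E_1=E_2=2$ is established.

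However, your argument for $E_2\geq 2$ has a genuine gap. The claim that ``a failed round keeps one inside this same `four-paired' family (no focal point is ever created)'' is \emph{false}. Take the four-touched stage $S'$ with history $((1c,2d),(1a,2b))$ in the obvious notation. Its equivalence classes are indeed $\{1a,2b\},\{1b,2a\},\{1c,2d\},\{1d,2c\}$. But if the next round is, say, $(1a,2c)$---a non-coordinating pair that has positive probability whenever $p_{1a}>0$ and $p_{1d}>0$, since structurality forces $p_{2c}=p_{1d}$---then \emph{every} choice becomes a focal point and coordination is guaranteed one round later. Only the four failures $(1a,2b),(1b,2a),(1c,2d),(1d,2c)$ preserve the swap symmetry; the remaining eight non-coordinating pairs break it. So your invariant fails and the tail-sum bound $P(X>k)\geq 2^{-k}$ does not go through. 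Your appeal to Theorem~\ref{the: even m} is a worst-case statement: it says the symmetry \emph{can} persist, not that it \emph{must}.

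The conclusion $E_2=2$ is nonetheless correct and can be salvaged directly. Writing $(a,b,c,d)$ for player~$1$'s weights at $S'$ (so player~$2$'s are $(b,a,d,c)$ by structurality), one has
\[
p_{\mathrm{coord}}=2(ab+cd),\qquad p_{\mathrm{sym}}=a^2+b^2+c^2+d^2,\qquad p_{\mathrm{focal}}=1-p_{\mathrm{coord}}-p_{\mathrm{sym}},
\]
where $p_{\mathrm{sym}}$ is the probability of landing in a stage automorphism-equivalent to $S'$ and $p_{\mathrm{focal}}$ that of creating focal points. The key inequality is $p_{\mathrm{sym}}-p_{\mathrm{coord}}=(a-b)^2+(c-d)^2\geq 0$. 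From
\[
\mathrm{ECT}\;\geq\; p_{\mathrm{coord}} + 2\,p_{\mathrm{focal}} + (1+E_2)\,p_{\mathrm{sym}} \;=\; 2 - p_{\mathrm{coord}} + (E_2-1)\,p_{\mathrm{sym}}
\]
one obtains $\mathrm{ECT}\geq 2 + (E_2-2)\,p_{\mathrm{coord}}$; combined with $p_{\mathrm{coord}}\leq\tfrac12$, the hypothesis $E_2<2$ would give $E_2\geq 1+E_2/2$, a contradiction. So the possibility of creating focal points does not help: one can only raise $p_{\mathrm{coord}}$ at the cost of raising $p_{\mathrm{sym}}$ by at least as much.

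For comparison, the paper sidesteps the four-touched stage entirely. It proves an ``emulation'' lemma showing that any protocol from the four-touched stage can be transported to the two-touched stage with no increase in \ECT, reducing to $E_2\leq E_1$; it then argues by contradiction that if some protocol beat $2$ from $S$, one could iterate the choice $p=1$ and in the limit obtain \WM with \ECT $<2$, contradicting Lemma~\ref{roskalemma}. Your route is more self-contained once the gap above is patched.
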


We have now given a complete analysis of optimal {\ECT}s and  {\GCT}s in
two-player \CM-games summarized in Figure~\ref{fig: summary table}.
See Appendix~\ref{appendix: remarks} for
further discussion on optimal play in \CM-games.

\section{The hardest two-player \WLC-games}\label{sec: characterisations}

In this section we give an optimal characterization of the upper 
bounds of {\ECT}s in \WLC-games as a
function of game size. 
For any $m\geq 1$, an \defstyle{$m$-choice game}
refers to any \emph{two-player} \WLC-game $G=(A,C_1,C_2,W_G)$ where $m=\max\{|C_1|,|C_2|\}$. 
Note that, with the classical matrix representation of an $m$-choice game, the parameter $m$ corresponds to the largest dimension of the matrix.
In this section we will also show 
that $\CM_m$ can be seen as the
hardest $m$-choice game for all $m\neq 3$ (see Corollary~\ref{hardest games}).

Our first theorem shows that the wait-or-move protocol is reasonably ``safe'' to use in 
any $m$-choice game with $m\not\in\{3,5\}$ as it
always guarantees an \ECT which is at most equal
to the upper bound of optimal {\ECT}s of
all $m$-choice games for the particular $m$.

\begin{theorem}\label{safety of WM}
Let $m\not\in\{1,3,5\}$ and consider an
$m$-choice game $G=(A,C_1,C_2,W_G)\not= \CM_m$.
Then the \ECT in $G$ with \WM is strictly smaller than the
optimal \ECT in $\CM_m$.
\end{theorem}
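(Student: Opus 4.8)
The plan is to reduce the statement to the bound of Theorem~\ref{jormatheorem} together with a short counting argument on the bipartite graph underlying $G$.

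First I would pin down the right-hand side. For $m\notin\{1,3,5\}$ the optimal \ECT in $\CM_m$ equals $3-\frac{2}{m}$: this is the \ECT of \WM in $\CM_m$ (as noted after Theorem~\ref{jormatheorem}), and \WM is \ECT-optimal for $\CM_m$ by Theorem~\ref{the: 6-ladders} when $m\geq 6$, by Theorem~\ref{the: 4-ladders} when $m=4$, and by Proposition~\ref{the: LG2} when $m=2$. (For $m\in\{1,3,5\}$ the optimal \ECT drops strictly below $3-\frac{2}{m}$, which is precisely why these cases are excluded; for $m=1$ the statement is moreover vacuous, since every $1$-choice game is $\CM_1$.)

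Next, write $p=\frac{|W_G|}{|C_1|\cdot|C_2|}$ for the one-shot coordination probability in $G$ with uniformly random first choices. By Theorem~\ref{jormatheorem}, the \ECT in $G$ with \WM is at most $3-2p$, so it suffices to prove the strict inequality $p>\frac1m$, equivalently $m\cdot|W_G|>|C_1|\cdot|C_2|$. Since $G$ has no surely losing choices, the first-coordinate projection $W_G\to C_1$ is onto, so $|W_G|\geq|C_1|$; symmetrically $|W_G|\geq|C_2|$; hence $|W_G|\geq\max\{|C_1|,|C_2|\}=m$. Combined with $|C_1|,|C_2|\leq m$ this already yields $m\cdot|W_G|\geq m^{2}\geq|C_1|\cdot|C_2|$.

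The only place the hypothesis $G\neq\CM_m$ is used---and the step I would be most careful with---is ruling out equality throughout this chain. Equality forces $|W_G|=m$ and $|C_1|=|C_2|=m$; then the first-coordinate projection $W_G\to C_1$ is a surjection between $m$-element sets, hence a bijection, and likewise for the second coordinate, so $W_G$ is the graph of a bijection $C_1\to C_2$, i.e.\ a set of $m$ pairwise disjoint edges covering every choice---precisely (a copy of) $\CM_m$, contradicting the assumption. Therefore $m\cdot|W_G|>|C_1|\cdot|C_2|$, so $p>\frac1m$ and $3-2p<3-\frac2m$, and the bound follows. I do not expect a genuine obstacle here: once the reduction via Theorem~\ref{jormatheorem} is in place, the remaining work is this elementary case analysis together with the bookkeeping on the value of the optimal \ECT in $\CM_m$ across the three regimes $m=2$, $m=4$, and $m\geq 6$.
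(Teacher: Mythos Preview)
Your proof is correct and follows essentially the same approach as the paper: reduce via Theorem~\ref{jormatheorem} to showing $p>\frac{1}{m}$, then obtain this from $|W_G|\geq m$ together with $|C_1|,|C_2|\leq m$ and the observation that equality forces $G$ to be (a copy of) $\CM_m$. If anything, your equality-case analysis is slightly more explicit than the paper's, which handles the case $|W_G|=m$ by simply asserting that $G\neq\CM_m$ forces $\min\{|C_1|,|C_2|\}<m$---an assertion that rests on exactly the perfect-matching argument you spell out.
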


\begin{proof}
By Theorems \ref{the: 6-ladders}, \ref{the: 4-ladders} and Proposition~\ref{the: LG2},
the optimal \ECT in $\CM_m$ is given by \WM. 
We saw in Section \ref{sect:solvability}
that the {\ECT} with \WM is $3-\frac{2}{m}$ in $\CM_m$ 
and at most $3-2p$ in $G$, where $p$ is
the one-shot coordination probability when choosing randomly in $G$.  
Since $G$ is an $m$-choice game, $|W_G|\geq m$. 
If $|W_G|>m$, then $p > \frac{m}{m^2} = \frac{1}{m}$.
And if $|W_G|=m$, we have $p = \frac{m}{mn} = \frac{1}{n} > \frac{1}{m}$
where $n:=\min\{|C_1|,|C_2|\}<m$ since $G\neq\CM_m$.
In both cases, we have $3-2p < 3-\frac{2}{m}$.
\end{proof}


By the \defstyle{greatest optimal \ECT} among a class $\mathcal{G}$ of \WLC-games, we mean a value $r$ such that (1) $r$ is the optimal \ECT for some $G\in\mathcal{G}$; and (2) for every $G\in\mathcal{G}$, there is a protocol
which gives it an $\ECT\leq r$. 
By Theorem~\ref{safety of WM}, the greatest optimal \ECT among $m$-choice
games is given by \WM in $\CM_m$ for $m\not\in\{1,3,5\}$.
Also the special cases of $1$, $3$ and $5$ are covered below:

\begin{theorem}\label{the: upper bounds for ECTs}
For any $m$, the greatest optimal \ECT among $m$-choice games is given below:

\begin{center}
\scalebox{0.8}{
\begin{tabular}{|c|c|c|c|}
\hline Game size
& $m\in\mathbb{Z}_+\setminus\{3,5\}$ & \quad$m=5$\quad\text{} & $m=3$ \\
\hline \rule{0pt}{1.2\normalbaselineskip} 
Greatest optimal\; \ECT & $3-\frac{2}{m}$ & $2+\frac{1}{3}$ & $\frac{1+\sqrt{4+\sqrt{17}}}{2}$ $(\approx 1,925)$ \\[1mm]
\hline
\end{tabular}
}
\end{center}

\end{theorem}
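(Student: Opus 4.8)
The plan is to split the argument by game size, handling all $m\in\mathbb{Z}_+\setminus\{3,5\}$ uniformly and the two exceptional sizes $m=3$ and $m=5$ by separate, more delicate analyses.

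For $m\in\mathbb{Z}_+\setminus\{3,5\}$ almost everything is already in place. By Theorems~\ref{the: 6-ladders} and \ref{the: 4-ladders}, Proposition~\ref{the: LG2}, and the remark after Corollary~\ref{jormacorollary} that $\WM$ realises \ECT exactly $3-\tfrac2m$ in $\CM_m$, the game $\CM_m$ is an $m$-choice game whose optimal \ECT equals $3-\tfrac2m$; this supplies condition~(1) in the definition of the greatest optimal \ECT. For condition~(2) I would check that $\WM$ gives \ECT $\le 3-\tfrac2m$ in \emph{every} $m$-choice game $G$: if $G=\CM_m$ this is the equality just mentioned, if $G\neq\CM_m$ and $m\neq1$ it is Theorem~\ref{safety of WM} (which even gives a strict inequality), and for $m=1$ the only $1$-choice game is $\CM_1$ itself. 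Hence $3-\tfrac2m$ is the greatest optimal \ECT for these sizes.

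For $m=5$ the extremal game is still $\CM_5$: its optimal \ECT is $\tfrac73=2+\tfrac13$ by Theorem~\ref{the: 5-ladders} (the value follows from the displayed formulas, $E_5=\tfrac15+2\cdot\tfrac{4}{15}+3\cdot\tfrac{8}{15}=\tfrac73$), giving condition~(1). For condition~(2) I would show that every $5$-choice game $G\neq\CM_5$ admits a protocol with \ECT $\le\tfrac73$, splitting on the one-shot coordination probability $p=|W_G|/(|C_1||C_2|)$ obtained from uniform first-round play: (a) if the initial stage of $G$ already contains a focal point, the players coordinate in round~$1$; (b) if $p\ge\tfrac13$ then $\WM$ gives \ECT $\le 3-2p\le\tfrac73$ by Theorem~\ref{jormatheorem}; (c) the remaining games --- no initial focal point and $3|W_G|<|C_1||C_2|$, which pins $|W_G|$ into a short list --- must be enumerated up to renaming and treated individually, the typical argument being that any failed first round produces focal points making the residual stage solvable within one or two more rounds. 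Carrying out case~(c) is the exhaustive bipartite-graph analysis alluded to in the introduction.

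The size $m=3$ is genuinely different, because $\CM_3$ has optimal \ECT only $\tfrac53$ (Proposition~\ref{the: LG3}), which is \emph{smaller} than the claimed bound $\tfrac{1+\sqrt{4+\sqrt{17}}}{2}\approx 1.925$; so here the extremal $3$-choice game is not $\CM_3$. The plan is: (i) enumerate all $3$-choice games up to renaming (finitely many bipartite graphs with $\max\{|C_1|,|C_2|\}=3$, no isolated vertex, and $|W_G|\ge3$); (ii) discard those with an initial focal point, whose optimal \ECT is $1$; (iii) for each of the rest, write the recursive equation for the \ECT under an optimal structural protocol --- an expression patterned on formula~(E) but specialised to that game's structure --- and minimise over the relevant first-round probability; (iv) observe that for exactly one of these games the minimum is attained at an interior point, where the optimality condition reduces to $4x^2-4x-3=\sqrt{17}$ (equivalently the quartic $(2x-3)^2(2x+1)^2=17$), whose relevant root is exactly $\tfrac{1+\sqrt{4+\sqrt{17}}}{2}$; and (v) verify that no other $3$-choice game has a larger optimal \ECT. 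The main obstacle --- and the reason the cases $m=3,5$ are not immediate consequences of Theorem~\ref{safety of WM} --- is the \emph{lower-bound}, i.e.\ \emph{optimality}, part: one must rule out that the asymmetric histories generated by non-coordinating rounds can be exploited to beat the stated value, which for $m=3$ amounts to solving the nested fixed-point optimisation responsible for the doubly nested radical.
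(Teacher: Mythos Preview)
Your proposal is correct and matches the paper's approach: Theorem~\ref{safety of WM} handles $m\notin\{3,5\}$, and the two exceptional sizes are done by exhaustive enumeration of the relevant bipartite graphs, discarding those with an initial focal point and bounding the rest by explicit protocols. The paper's extremal $3$-choice game is $G(1\times2+2\times1)$ (two opposite-facing copies of $K_{1,2}$), and the nested radical arises exactly as you anticipate, via a two-level fixed point: a failed round can refine a ``type~(1)'' stage to a ``type~(2)'' stage, one first solves the type-(2) recursion to get $E_2=(3+\sqrt{17})/4$, and then feeds this into the type-(1) recursion to obtain $E_1=\tfrac{1+\sqrt{4+\sqrt{17}}}{2}$.
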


\begin{proof}
The case $m=1$ is trivial and the cases $m\not\in\{3,5\}$
follow from Theorem 
\ref{safety of WM}.
When $m=3$ or $m=5$, we need to systematically
cover all $m$-choice games and give
estimates for {\ECT}s in them. 
This is done in Appendix~\ref{appendix: 3 & 5},
where we provide an extensive graph theoretic
analysis of all $3$-choice and $5$-choice games.
It turns out that the greatest optimal \ECT among 5-choice
games is realized in $\CM_5$ (and no other 5-choice game).
%
%
For $m=3$, the greatest optimal \ECT is also
realized by a single \WLC-game. This game is pictured below.
\end{proof}


\begin{center}
\begin{tikzpicture}[scale=0.5,choice/.style={draw, circle, fill=black!100, inner sep=2.1pt}]
	\node at (0,0) [choice] (00) {};
	\node at (2,0) [choice] (20) {};
	\node at (0,1) [choice] (01) {};
	\node at (2,1) [choice] (21) {};
	\node at (0,2) [choice] (02) {};
	\node at (2,2) [choice] (22) {};
	\draw[thick] (00) to (20);
	\draw[thick] (01) to (20);
	\draw[thick] (02) to (21);
	\draw[thick] (02) to (22);
\end{tikzpicture}
\end{center}


As the greatest optimal \ECT is
realized uniquely by $\CM_5$, the following holds by Theorem~\ref{safety of WM}:

\begin{corollary}\label{hardest games}
For $m\not= 3$, the greatest optimal \ECT among m-choice games is
uniquely realized by $\CM_m$.
\end{corollary}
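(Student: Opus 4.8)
The plan is to reduce the statement to the two substantial results already established, Theorem~\ref{safety of WM} and Theorem~\ref{the: upper bounds for ECTs}, and to treat three regimes of the parameter separately: $m=1$, $m\in\mathbb{Z}_+\setminus\{1,3,5\}$, and $m=5$. Assembling the corollary from these pieces is essentially bookkeeping; the genuine content is already contained in the cited theorems.

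For $m=1$ the claim is immediate: up to renaming there is essentially only one $1$-choice game (a single winning edge), with optimal \ECT $1$, so it trivially realizes the greatest optimal \ECT, and uniquely so. For $m\in\mathbb{Z}_+\setminus\{1,3,5\}$ I would argue as follows. By Theorems~\ref{the: 6-ladders} and~\ref{the: 4-ladders} and Proposition~\ref{the: LG2}, \WM is \ECT-optimal for $\CM_m$, and from the discussion following Theorem~\ref{jormatheorem} we know \WM gives \ECT exactly $3-\frac{2}{m}$ in $\CM_m$; by Theorem~\ref{the: upper bounds for ECTs}, $3-\frac{2}{m}$ is precisely the greatest optimal \ECT among all $m$-choice games. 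Hence $\CM_m$ realizes it. For uniqueness, take any $m$-choice game $G\neq\CM_m$: Theorem~\ref{safety of WM} asserts that the \ECT in $G$ with \WM is \emph{strictly} less than $3-\frac{2}{m}$, so the optimal \ECT of $G$ is strictly below the greatest optimal \ECT, and therefore $G$ does not realize it.

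For $m=5$ the reduction is to the exhaustive $5$-choice game analysis underlying Theorem~\ref{the: upper bounds for ECTs} (carried out in Appendix~\ref{appendix: 3 & 5}): that analysis already yields that the greatest optimal \ECT among $5$-choice games, namely $2+\frac{1}{3}$, is attained by $\CM_5$ and by no other $5$-choice game, which is exactly the $m=5$ instance of the corollary. (By Theorem~\ref{the: 5-ladders}, \LA is \ECT-optimal for $\CM_5$ and gives \ECT $2+\frac{1}{3}$, matching the table value.) The excluded case $m=3$ is precisely where this fails, since there the greatest optimal \ECT is realized by a different game, as recorded in Theorem~\ref{the: upper bounds for ECTs}.

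The expected main obstacle is the one already confronted in the cited results rather than in the corollary itself: in Theorem~\ref{safety of WM} one must bound the one-shot coordination probability $p$ of an arbitrary $m$-choice game $G\neq\CM_m$ strictly away from $\frac{1}{m}$ (distinguishing the case $|W_G|>m$ from $|W_G|=m$ with a strictly smaller minor dimension $n<m$) so that the bound $3-2p$ of Theorem~\ref{jormatheorem} strictly beats $3-\frac{2}{m}$; and for $m=5$ one must run the full graph-theoretic case split over asymmetric winning relations on $5$ choices to verify that none of them forces an \ECT as large as that of $\CM_5$. Given those inputs, the corollary follows by the three-case split above with no further argument.
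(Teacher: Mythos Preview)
Your proposal is correct and follows essentially the same approach as the paper: the paper's proof is the one-line remark that the corollary follows from Theorem~\ref{safety of WM} together with the observation (from the appendix analysis underlying Theorem~\ref{the: upper bounds for ECTs}) that the greatest optimal \ECT among $5$-choice games is uniquely realized by $\CM_5$. Your three-case split ($m=1$, $m\in\mathbb{Z}_+\setminus\{1,3,5\}$, $m=5$) simply makes this explicit.
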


Hence choice matching games can indeed be seen as \emph{the most difficult} two-player \WLC-games---excluding the interesting special case of 3-choice games as
discussed above.
%
%

\section{Conclusion}


In this paper we gave a complete analysis for two-player \CM-games with respect to both {\GCT}s and {\ECT}s. We also found optimal upper bounds for optimal {\ECT}s for all two-player \WLC-games when determined according to game size only. 
A highly challenging
next step would be to find complete characterizations for optimal $\ECT$s
(and $\GCT$s) for all \WLC-games when determined by the
full structure of the game.

\newpage

\appendix

\section{Appendix: Examples and extra definitions}\label{appendix: Examples}

\begin{example}\label{drawingexamples}
\label{ex: game graph} Here we give two examples of drawings of \WLC-games: a two-player game $G_1$ with $3$ choices for both players and a total of $6$ winning profiles repesented as edges; and a three-player \WLC-game $G_2$ with $2$ choices for each player and $4$ winning profiles, each represented as a triple of choices connected by (solid or dotted) lines. 
\begin{center}
\begin{tikzpicture}[scale=0.5,choice/.style={draw, circle, fill=black!100, inner sep=2.1pt}]
	\node at (-2.5,2.5) {$G_1:$};	
	\node at (2,0) [choice] (20) {};
	\node at (2,1.2) [choice] (21) {};
	\node at (0,1.2) [choice] (01) {};
	\node at (2,2.4) [choice] (22) {};
	\node at (0,0) [choice] (00) {};
	\node at (0,2.4) [choice] (02) {};	
	\node at (0,-0.5) {};
	\draw[thick] (00) to (20);	
	\draw[thick] (00) to (21);
	\draw[thick] (01) to (20);	
	\draw[thick] (01) to (21);
	\draw[thick] (02) to (21);
	\draw[thick] (02) to (22);
	\node at (-0.8,2.4) {\small $a_1$};
	\node at (2.8,2.4) {\small $a_2$};
	\node at (-0.8,1.2) {\small $b_1$};
	\node at (-0.8,0) {\small $c_1$};
	\node at (2.8,0) {\small $c_2$};	
	\node at (2.8,1.2) {\small $b_2$};	
\end{tikzpicture}
\qquad \qquad 
\begin{tikzpicture}[scale=0.9,choice/.style={draw, circle, fill=black!100, inner sep=2.1pt}]
	\node at (-0.9,1.25) {$G_2:$};
	\node at (0,0) [choice] (00) {};
	\node at (1,0) [choice] (10) {};
	\node at (2,0) [choice] (20) {};
	\node at (0,1) [choice] (01) {};
	\node at (1,1) [choice] (11) {};
	\node at (2,1) [choice] (21) {};
	\draw[thick] ([yshift=-0.5mm]00.east) to ([yshift=-0.5mm]10.west);
	\draw[very thick,densely dashed] (00) to (11);
	\draw[thick] ([yshift=0.5mm]01.east) to ([yshift=0.5mm]11.west);
	\draw[thick] ([yshift=-0.5mm]10.east) to ([yshift=-0.5mm]20.west);
	\draw[very thick,densely dashed] (10) to (21);
	\draw[thick] ([yshift=0.5mm]11.west) to ([yshift=0.5mm]21.west);
	\draw[very thick,densely dashed] (01) to (10);
	\draw[very thick,densely dashed] (11) to (20);
	\node at (0,1.3) {\small $a_1$};
	\node at (1,1.3) {\small $a_2$};
	\node at (2,1.3) {\small $a_3$};
	\node at (0,-0.33) {\small $b_1$};
	\node at (1,-0.33) {\small $b_2$};
	\node at (2,-0.33) {\small $b_3$};
\end{tikzpicture}
\end{center}
\vspace{-3mm}
\end{example}

We now specify some useful notational conventions from \cite{lori2017} for
identifying some special \WLC-games (\emph{see also the figure below for
related examples}).
\begin{itemize}[leftmargin=*]
\item Let $m_1,\dots, m_n\in\mathbb{Z}_+$. We write $G(m_1\times \cdots \times m_n)$ for the $n$-player \WLC-game where the player $i$ has $m_i$ choices and the winning relation is the \emph{universal relation} $C_1\times \cdots \times C_n$.

\item Let $m\geq 2$. We write $G(O_m)$ for the two-player \WLC-game in which both players have $m$ choices and the winning relation $W_G$
forms a $2m$-cycle through all the $2m$ choices. (Thus the game graph of this $\WLC$-game
corresponds to the cycle graph $C_{2m}$.)
Similarly we write $G(Z_m)$ for the two-player \WLC-game where both players have $m$ choices and $W_G$ forms a $(2m-1)$-edge path through all choices. Moreover $G(\Sigma_m)$ denotes a $\WLC$-game where the player~1 has $m-1$ choices, the player 2 has $m$ choices and $W_G$ forms a $(2m-2)$-edge path through all the choices; the game obtained by permuting the players in $G(\Sigma_m)$ is denoted by $G(\reflectbox{$\Sigma$}_m)$.

\item Suppose that $G(A)$ and $G(B)$ have been defined and both have the same number of players.
Then $G(A + B)$ is the \emph{disjoint union} of $G(A)$ and $G(B)$, i.e.,
the game obtained by assigning to each player a disjoint union of her/his choices in
 $G(A)$ and $G(B)$, with the winning relation for $G(A + B)$
being the union of the winning relations in $G(A)$ and $G(B)$. 


\item If $m\in\mathbb{Z}_+$, then  $G(mA):=G(A + \cdots + A)$ (with $A$ repeated $m$ times). 
\end{itemize}

\begin{center}\label{notationalconventionsfigure}
\begin{tikzpicture}[scale=0.5,choice/.style={draw, circle, fill=black!100, inner sep=2.1pt}]
	\node at (1,3) {\small\bf $G(2\times 3)$};	
	\node at (0,0.5) [choice] (00) {};
	\node at (0,1.5) [choice] (01) {};
	\node at (2,0) [choice] (20) {};
	\node at (2,1) [choice] (21) {};
	\node at (2,2) [choice] (22) {};
	\draw[thick] (00) to (20);	
	\draw[thick] (00) to (21);
	\draw[thick] (00) to (22);
	\draw[thick] (01) to (20);	
	\draw[thick] (01) to (21);
	\draw[thick] (01) to (22);
\end{tikzpicture}
\qquad
\begin{tikzpicture}[scale=0.5,choice/.style={draw, circle, fill=black!100, inner sep=2.1pt}]
	\node at (1,3) {\small\bf $G(O_3)$};	
	\node at (0,0) [choice] (00) {};
	\node at (2,0) [choice] (20) {};
	\node at (0,1) [choice] (01) {};	
	\node at (2,1) [choice] (21) {};
	\node at (0,2) [choice] (02) {};	
	\node at (2,2) [choice] (22) {};
	\draw[thick] (00) to (20);	
	\draw[thick] (00) to (21);
	\draw[thick] (02) to (22);
	\draw[thick] (01) to (22);
	\draw[thick] (02) to (21);
	\draw[thick] (01) to (20);
\end{tikzpicture}
\qquad
\begin{tikzpicture}[scale=0.5,choice/.style={draw, circle, fill=black!100, inner sep=2.1pt}]
	\node at (1,3) {\small\bf $G(Z_3)$};	
	\node at (0,0) [choice] (00) {};
	\node at (2,0) [choice] (20) {};
	\node at (0,1) [choice] (01) {};	
	\node at (2,1) [choice] (21) {};
	\node at (0,2) [choice] (02) {};	
	\node at (2,2) [choice] (22) {};
	\draw[thick] (00) to (20);	
	\draw[thick] (00) to (21);
	\draw[thick] (01) to (21);
	\draw[thick] (01) to (22);
	\draw[thick] (02) to (22);	
\end{tikzpicture}
\qquad
\begin{tikzpicture}[scale=0.5,choice/.style={draw, circle, fill=black!100, inner sep=2.1pt}]
	\node at (1,3) {\small\bf $G(\Sigma_3)$};	
	\node at (0,0) [choice] (00) {};
	\node at (2,0) [choice] (20) {};
	\node at (2,1) [choice] (21) {};
	\node at (0,2) [choice] (02) {};	
	\node at (2,2) [choice] (22) {};
	\draw[thick] (00) to (20);	
	\draw[thick] (00) to (21);
	\draw[thick] (02) to (21);
	\draw[thick] (02) to (22);	
\end{tikzpicture}
\quad
\begin{tikzpicture}[scale=0.5,choice/.style={draw, circle, fill=black!100, inner sep=2.1pt}]
	\node at (1,3) {\small\bf $G(1\times 1 + Z_2)$};	
	\node at (0,0) [choice] (00) {};
	\node at (2,0) [choice] (20) {};
	\node at (0,1) [choice] (01) {};	
	\node at (2,1) [choice] (21) {};
	\node at (0,2) [choice] (02) {};	
	\node at (2,2) [choice] (22) {};
	\draw[thick] (01) to (21);
	\draw[thick] (00) to (20);	
	\draw[thick] (00) to (21);
	\draw[thick] (02) to (22);	
\end{tikzpicture}
\;
\begin{tikzpicture}[scale=0.5,choice/.style={draw, circle, fill=black!100, inner sep=2.1pt}]
	\node at (1.5,3) {\small\bf $G(3(1\times 1\times 1))$};	
	\node at (0,0) [choice] (1c) {};
	\node at (1.5,0) [choice] (2c) {};
	\node at (3,0) [choice] (3c) {};
	\node at (0,1) [choice] (1b) {};	
	\node at (1.5,1) [choice] (2b) {};
	\node at (3,1) [choice] (3b) {};
	\node at (0,2) [choice] (1a) {};	
	\node at (1.5,2) [choice] (2a) {};
	\node at (3,2) [choice] (3a) {};
	\draw[thick] (1a) to (2a);	
	\draw[thick] (2a) to (3a);
	\draw[thick] (1b) to (2b);	
	\draw[thick] (2b) to (3b);
	\draw[thick] (1c) to (2c);	
	\draw[thick] (2c) to (3c);
\end{tikzpicture}
\end{center}

Note that the game $G(m(1\times 1))$ is the two-player choice matching game $\CM_m$.

\begin{example}\label{ex: Renaming}
Below we have two stages $(G,\mathcal{H}_2)$ and $(G',\mathcal{H}_2')$,
where the players have selected the choices with dotted circles in round 1
and the choices with solid circles in round 2.
There is a renaming between the
stages $(G,\mathcal{H}_2)$ and $(G',\mathcal{H}_2')$.
This is because if we first swap the players in $(G,\mathcal{H}_2)$,
then there will be an isomorphism to $(G',\mathcal{H}_2')$. Also note that the choices $c$ and $d$ are structurally equivalent in the initial stage $(G,\mathcal{H}_0)$, but this equivalence is broken when the player 2 selects $c$ in the first round.
\begin{center}
\begin{tikzpicture}[scale=0.55,choice/.style={draw, circle, fill=black!100, inner sep=2.1pt},
	location1/.style={draw, very thick, densely dotted, circle, color=black!77, inner sep=3.2pt},
	location2/.style={draw, thick, circle, color=black!77, inner sep=3.2pt}]
	\node at (-2.5,2) {\small\bf $(G,\mathcal{H}_2)$:};		
	\node at (0,1.5) [choice] (00) {};
	\node at (0,0) [choice] (01) {};
	\node at (2,0) [choice] (20) {};
	\node at (2,1) [choice] (21) {};
	\node at (2,2) [choice] (22) {};
	\node at (00) [location1] {};
	\node at (22) [location1] {};
	\node at (01) [location2] {};
	\node at (20) [location2] {};
	\draw[thick] (00) to (20);	
	\draw[thick] (01) to (20);	
	\draw[thick] (01) to (21);
	\draw[thick] (01) to (22);
	\node at (-0.7,1.5) {\small $a$};
	\node at (-0.7,0) {\small $b$};
	\node at (2.7,2) {\small $c$};
	\node at (2.7,1) {\small $d$};
	\node at (2.7,0) {\small $e$};
	\node at (0,2.2) [color=black!77] {\footnotesize
	};
	\node at (2,2.7) [color=black!77] {\footnotesize
	};
	\node at (0,-0.7) [color=black!77] {\footnotesize
	};
	\node at (2,-0.7) [color=black!77] {\footnotesize
	};
\end{tikzpicture}
\qquad
\begin{tikzpicture}[scale=0.55,choice/.style={draw, circle, fill=black!100, inner sep=2.1pt},
	location1/.style={draw, very thick, densely dotted, circle, color=black!77, inner sep=3.2pt},
	location2/.style={draw, thick, circle, color=black!77, inner sep=3.2pt}]
	\node at (-2.5,2) {\small\bf $(G',\mathcal{H}_2')$:};	
	\node at (2,0) [choice] (00) {};
	\node at (2,1.5) [choice] (01) {};
	\node at (0,0) [choice] (20) {};
	\node at (0,1) [choice] (21) {};
	\node at (0,2) [choice] (22) {};
	\node at (00) [location1] {};
	\node at (22) [location1] {};
	\node at (01) [location2] {};
	\node at (20) [location2] {};
	\draw[thick] (00) to (20);	
	\draw[thick] (01) to (20);	
	\draw[thick] (01) to (21);
	\draw[thick] (01) to (22);
	\node at (2.7,1.5) {\small $u$};
	\node at (2.7,0) {\small $v$};
	\node at (-0.7,2) {\small $r$};
	\node at (-0.7,1) {\small $s$};
	\node at (-0.7,0) {\small $t$};
	\node at (0,2.7) [color=black!77] {\footnotesize
	};
	\node at (2,-0.7) [color=black!77] {\footnotesize
	};
	\node at (0,-0.7) [color=black!77] {\footnotesize
	};
	\node at (2,2.2) [color=black!77] {\footnotesize
	};
\end{tikzpicture}
\end{center}
\end{example}

\begin{example}\label{ex: focal points}
We consider two concrete examples of focal points. However, before that, 
note that if choice $c_i$ of player $i$ in stage $S$ is a focal point, then one of
the following two scenarios hold by the definition of focal points:
\begin{itemize}
    \item 
    $c_i$ is not structurally equivalent to any other
    choice in stage $S$.
    \item
    $c_i$ is structurally equivalent to some other 
    choices $d_1,\dots,d_{\ell}$ in $S$. In this case \emph{all 
    the choices $c_i, d_1,\dots, d_{\ell}$ must belong to
    the same single edge of the winning relation
    $W_G$} for the following reason: the choice 
    $c_i$ is structurally equivalent to the 
    choice $d_j\in\{d_1,\dots, d_{\ell}\}$ of player $j$
    but $c_i$ is \emph{not} structurally equivalent to 
    any other choice of player $i$, so $d_j$ cannot be structurally equivalent to 
    any other choice of player $j$. 
\end{itemize}

Now to the examples. Consider the first two rounds of the game $\CM_5$, pictured below,
where the players fail to coordinate by
first selecting the pair $(a_1,b_2)$ and then fail 
again by selecting the pair $(b_1,c_2)$. 

\vspace{-4mm}

\begin{center}
\begin{tikzpicture}[scale=0.6,choice/.style={draw, circle, fill=black!100, inner sep=2.1pt},
	location1/.style={draw, very thick, densely dotted, circle, color=black!77, inner sep=3.2pt},
	location2/.style={draw, thick, circle, color=black!77, inner sep=3.2pt}]
	\node at (-3,4) {$\CM_5:$};	
	\node at (0,0) [choice] (00) {};
	\node at (2,0) [choice] (20) {};
	\node at (0,1) [choice] (01) {};
	\node at (2,1) [choice] (21) {};
	\node at (0,2) [choice] (02) {};
	\node at (2,2) [choice] (22) {};
	\node at (0,3) [choice] (03) {};
	\node at (2,3) [choice] (23) {};
	\node at (0,4) [choice] (04) {};
	\node at (2,4) [choice] (24) {};
    \node at (04) [location1] {};
    \node at (23) [location1] {};
    \node at (03) [location2] {};
    \node at (22) [location2] {};
	\draw[thick] (00) to (20);
	\draw[thick] (01) to (21);
	\draw[thick] (02) to (22);
	\draw[thick] (03) to (23);
	\draw[thick] (04) to (24);
	\node at (-0.8,0) {\small $e_1$};
	\node at (-0.8,1) {\small $d_1$};
	\node at (-0.8,2) {\small $c_1$};
	\node at (-0.8,3) {\small $b_1$};	
	\node at (-0.8,4) {\small $a_1$};	
	\node at (2.8,0) {\small $e_2$};
	\node at (2.8,1) {\small $d_2$};
	\node at (2.8,2) {\small $c_2$};
	\node at (2.8,3) {\small $b_2$};	
	\node at (2.8,4) {\small $a_2$};	
\end{tikzpicture}
\end{center}
The structural equivalence classes become 
modified in this scenario as follows:
\begin{itemize}
\item Initially all choices are structurally equivalent.
\item After the first round, the equivalence classes are $\{a_1,b_2\}$, $\{b_1,a_2\}$ and $\{c_1,d_1,e_1,c_2,d_2,e_2\}$.
\item After the second round, the equivalence classes are $\{a_1\}$, $\{a_2\}$, $\{b_1\}$, $\{b_2\}$, $\{c_1\}$, $\{c_2\}$ and $\{d_1,e_1,d_2,e_2\}$.
\end{itemize}
There are no focal points in the initial stage $S_0$ and the
same is true for the next stage $S_1$. However, in the stage $S_2$, all the choices $a_1,b_1,c_1,a_2,b_2,c_2$ become focal points, and the players can thus immediately guarantee coordination in the third round by selecting any winning pair of focal points, i.e., any of the pairs $(a_1,a_2)$, $(b_1,b_2)$, $(c_1,c_2)$.
(We note that, from the point of view of the general study of
rational choice, it may not be obvious which of these
pairs should selected, so a convention may be needed to fix which protocol to use.)

Consider then the game $\overline{G(O_5)}$, the 
complement of the cycle game $G(O_5)$.
%
%
In the pictures below, we present $\overline{G(O_5)}$ also in the
form where the choices are
arranged in a cycle and we draw the choices of player 2 in white for clarity.

\begin{center}
\begin{tikzpicture}[scale=0.6,choice/.style={draw, circle, fill=black!100, inner sep=2.1pt},
    choice2/.style={draw, circle, inner sep=2.1pt},
	location1/.style={draw, very thick, densely dotted, circle, color=black!77, inner sep=3.2pt},
	location2/.style={draw, thick, circle, color=black!77, inner sep=3.2pt}]
	\node at (-3,4) {$\overline{G(O_5)}:$};	
	\node at (0,0) [choice] (00) {};
	\node at (2,0) [choice2] (20) {};
	\node at (0,1) [choice] (01) {};
	\node at (2,1) [choice2] (21) {};
	\node at (0,2) [choice] (02) {};
	\node at (2,2) [choice2] (22) {};
	\node at (0,3) [choice] (03) {};
	\node at (2,3) [choice2] (23) {};
	\node at (0,4) [choice] (04) {};
	\node at (2,4) [choice2] (24) {};
	\draw[thick] (00) to (22);
	\draw[thick] (00) to (23);
	\draw[thick] (00) to (24);
	\draw[thick] (01) to (21);
	\draw[thick] (01) to (23);
	\draw[thick] (01) to (24);
	\draw[thick] (02) to (20);
	\draw[thick] (02) to (22);
	\draw[thick] (02) to (24);
	\draw[thick] (03) to (20);
	\draw[thick] (03) to (21);
	\draw[thick] (03) to (23);
	\draw[thick] (04) to (20);
	\draw[thick] (04) to (21);
	\draw[thick] (04) to (22);
    \node at (04) [location1] {};
    \node at (24) [location1] {};
    \node at (03) [location2] {};
    \node at (23) [location2] {};	
	\node at (-0.8,0) {\small $e_1$};
	\node at (-0.8,1) {\small $d_1$};
	\node at (-0.8,2) {\small $c_1$};
	\node at (-0.8,3) {\small $b_1$};	
	\node at (-0.8,4) {\small $a_1$};	
	\node at (2.8,0) {\small $a_2$};
	\node at (2.8,1) {\small $b_2$};
	\node at (2.8,2) {\small $c_2$};
	\node at (2.8,3) {\small $d_2$};	
	\node at (2.8,4) {\small $e_2$};
    \node at (0,-0.7) {};
\end{tikzpicture}
\hspace{21mm}
\begin{tikzpicture}[scale=0.4,choice/.style={draw, circle, fill=black!100, inner sep=2.1pt},
    choice2/.style={draw, circle, inner sep=2.1pt},
	location1/.style={draw, very thick, densely dotted, circle, color=black!77, inner sep=3.2pt},
	location2/.style={draw, thick, circle, color=black!77, inner sep=3.2pt}]
%
	\node at (3,0) [choice2] (00) {};
	\node at (5,0.5) [choice] (20) {};
	\node at (6,2) [choice2] (01) {};
	\node at (1,0.5) [choice] (21) {};
	\node at (0,2) [choice2] (02) {};
	\node at (6,4) [choice] (22) {};
	\node at (5,5.5) [choice2] (03) {};
	\node at (0,4) [choice] (23) {};
	\node at (1,5.5) [choice2] (04) {};
	\node at (3,6) [choice] (24) {};
	\draw[thick] (00) to (22);
	\draw[thick] (00) to (23);
	\draw[thick] (00) to (24);
	\draw[thick] (01) to (21);
	\draw[thick] (01) to (23);
	\draw[thick] (01) to (24);
	\draw[thick] (02) to (20);
	\draw[thick] (02) to (22);
	\draw[thick] (02) to (24);
	\draw[thick] (03) to (20);
	\draw[thick] (03) to (21);
	\draw[thick] (03) to (23);
	\draw[thick] (04) to (20);
	\draw[thick] (04) to (21);
	\draw[thick] (04) to (22);
    \node at (24) [location1] {};
    \node at (03) [location1] {};
    \node at (04) [location2] {};
    \node at (22) [location2] {};
	\node at (5.5,-0.2) {\small $d_1$}; 
	\node at (6.8,1.8) {\small $c_2$};
	\node at (6.9,4.2) {\small $b_1$}; 
	\node at (5.5,6.2) {\small $e_2$}; 
	\node at (3,6.8) {\small $a_1$};
	\node at (3,-0.8) {\small $a_2$};
	\node at (0.5,0) {\small $e_1$};
	\node at (-0.8,1.8) {\small $b_2$};
	\node at (-0.8,4.2) {\small $c_1$};	
	\node at (0.4,6.3) {\small $d_2$};    
\end{tikzpicture}
\end{center}

Note that all choices are initially structurally equivalent in $\overline{G(O_5)}.$ Suppose then that the players fail to coordinate in the first round. This can happen only if they select choices that are ``adjacent in the cycle'' (see the picture above). Hence, by symmetry, we may assume that the players choose the pair $(a_1,e_2)$ in
the first round. Then the equivalence classes after 
the first round are $\{a_1,e_2\}$, $\{b_1,d_2\}$, $\{c_1,c_2\}$, $\{d_1,b_2\}$ and $\{e_1,a_2\}$. Hence the players can guarantee coordination in the second round by selecting the winning pair $(b_1,d_2)$ of focal points (or alternatively the pair $(c_1,c_2)$ or $(d_1,b_2)$).
\end{example}

\begin{example}\label{ex: use of LA}
We illustrate the use of the \LA protocol in the game $\CM_5$, pictured below.
Suppose that coordination fails in the first round. By symmetry, we may assume that the players selected $a_1$ and $b_2$. Now, in the resulting stage $S_1$, the structural equivalence classes are $\{a_1,b_2\}$, $\{b_1,a_2\}$ and $\{c_1,d_1,e_1,c_2,d_2,e_2\}$.

If the pair $(b_1,a_2)$ is selected in the next round, then the structural equivalence classes do not change and thus the resulting next stage is automorphism-equivalent to $S_1$. Hence, by following \LA, player 1 should avoid selecting $b_1$ and player 2 should avoid selecting $a_2$. For the same reason, the players should also avoid selecting the choices $a_1$ and $b_2$.

\begin{center}
\begin{tikzpicture}[scale=0.6,choice/.style={draw, circle, fill=black!100, inner sep=2.1pt},
	location1/.style={draw, very thick, densely dotted, circle, color=black!77, inner sep=3.2pt},
	location2/.style={draw, thick, circle, color=black!77, inner sep=3.2pt},
	location3/.style={draw, thick, rectangle, color=black!77, inner sep=4.4pt}]
	\node at (-3,4) {$\CM_5:$};	
	\node at (0,0) [choice] (00) {};
	\node at (2,0) [choice] (20) {};
	\node at (0,1) [choice] (01) {};
	\node at (2,1) [choice] (21) {};
	\node at (0,2) [choice] (02) {};
	\node at (2,2) [choice] (22) {};
	\node at (0,3) [choice] (03) {};
	\node at (2,3) [choice] (23) {};
	\node at (0,4) [choice] (04) {};
	\node at (2,4) [choice] (24) {};
    \node at (04) [location1] {};
    \node at (23) [location1] {};
    \node at (02) [location2] {};
    \node at (21) [location2] {};
    \node at (00) [location3] {};
    \node at (20) [location3] {};
	\draw[thick] (00) to (20);
	\draw[thick] (01) to (21);
	\draw[thick] (02) to (22);
	\draw[thick] (03) to (23);
	\draw[thick] (04) to (24);
	\node at (-0.8,0) {\small $e_1$};
	\node at (-0.8,1) {\small $d_1$};
	\node at (-0.8,2) {\small $c_1$};
	\node at (-0.8,3) {\small $b_1$};	
	\node at (-0.8,4) {\small $a_1$};	
	\node at (2.8,0) {\small $e_2$};
	\node at (2.8,1) {\small $d_2$};
	\node at (2.8,2) {\small $c_2$};
	\node at (2.8,3) {\small $b_2$};	
	\node at (2.8,4) {\small $a_2$};	
\end{tikzpicture}
\end{center}

Hence, by following \LA in $S_1$, the players will select among the set $\{c_1,d_1,e_1,c_2,d_2,e_2\}$ with the uniform probability distribution.
Supposing that they fail again in coordination, we may assume by symmetry that they selected the pair $(c_1,d_2)$.
The equivalence classes in the resulting stage $S_2$ are $\{a_1,b_2\}$, $\{b_1,a_2\}$, $\{c_1,d_2\}$, $\{d_1,c_2\}$ and $\{e_1,e_2\}$. Now, selecting any of the pairs $(a_1,b_2)$, $(b_1,a_2)$, $(c_1,d_2)$ and $(d_1,c_2)$ leads to a next stage which is automorphism-equivalent to~$S_2$. Thus, by following \LA in $S_2$, the players will select the pair $(e_1,e_2)$. This leads to guaranteed coordination in the third round.
\end{example}

\begin{definition}\label{wmseconddef}
The \defstyle{wait-or-move protocol (\WM)} for repeated two-player \WLC-games 
goes as follows. Pick your first choice randomly (with a uniform
probability over all choices). Then do the following in all non-final stages.

\begin{enumerate}
\item[(1)] Suppose that both players have selected only a single choice (possibly several times) in the previous rounds. Let $c_1$ be your earlier choice and $c_2$ the earlier choice of the other player.
Then select your next choice according to the
probability distribution $f$ such that
\begin{itemize}
    \item $f(c_1)=\frac{1}{2}$, and
    \item each choice that coordinates with $c_2$ 
     is picked with equal probability, the
     total probability over such choices being $\frac{1}{2}$.
\end{itemize}

\item[(b)] Suppose that both players have selected exactly two choices (possibly several times). Then select one of your previous choices, each with
probability $\frac{1}{2}$.
\item[(c)] In any other non-final stage, pick your choice randomly. (Note that
such a non-final stage cannot even be reached if both players follow \WM.)
\end{enumerate}
\end{definition}


\section{Appendix: Complete proofs and additional lemmas}\label{appendix: Proofs}

\noindent
\textbf{Proposition \ref{the: LG3} restated.}
\emph{$\LA$ is the uniquely \ECT-optimal and uniquely \GCT-optimal in $\CM_3$.}

\begin{proof}
Every structural protocol (and thus \LA) must choose a random choice in
the first round of $\CM_3$. If the players fail to
coordinate in the first round, then the only probability
distribution that guarantees a win in the second round
selects the unique choice from the only
untouched edge (as \LA instructs). It is thus clear that \LA is both
uniquely \ECT-optimal and uniquely \GCT-optimal in $\CM_3$.
\end{proof}

\noindent
\textbf{Proposition \ref{the: GCT with RES} restated.}
\emph{$\LA$ guarantees coordination in games $\CM_m$ in $\lceil m/2 \rceil$ rounds when $m$ is odd, 
but $\LA$ does not guarantee coordination in $\CM_m$ for any even $m$.}

\begin{proof}
For the sake of completeness, we give here a full proof of
the proposition. However, the fact that \emph{$\LA$ does not
guarantee coordination in $\CM_m$ for any even $m$}
will also follow directly from Theorem \ref{the: even m}, whose proof
does not depend in any way of the current proposition.

Consider $\CM_m$ with an odd $m$. As $\LA$ is a structural
protocol, the players must pick randomly in the first round.
Supposing they do not coordinate, this creates two touched edges.
In the next round, the players must pick choices that are not on
the touched edges, because
the protocol \LA instructs to pick---if possible---choices that cannot
lead to a stage that is automorphism-equivalent to the current stage.
Similarly, in every round where the players have failed to coordinate,
they must choose from untouched edges. In the worst case, since $m$ is odd,
the players can fail to coordinate until there is exactly one
untouched edge left. Then the players coordinate in the next stage, and
clearly this takes $\lceil m/2 \rceil$ rounds.

The scenario is very similar in the case $m$ is even, but this
time the players may end up in a situation where all edges have 
become touched, but coordination has failed. Then every choice $c$ of
player $1$ is structurally equivalent to a choice $c'$ of player $2$
such that $c$ and $c'$ are not on the same edge. Since the players are
using a structural protocol, they may end up choosing such a pair of
structurally equivalent choices, failing coordination. Moreover, if
the players indeed choose the pair $(c,c')$,
this leads an automorphism-equivalent stage.
Therefore, picking $(c,c')$ leads to the same problem again:
coordination can fail due to picking structurally equivalent choices.
This way the players may end up forever 
choosing automorphism-equivalent stages without coordinating.
\end{proof}

\noindent
\textbf{Theorem \ref{jormatheorem} restated.}
\emph{Let $G$ be a \WLC-game with
one-shot coordination probability $p$
when both players make their first choice randomly.
Then the expected coordination time by \WM is at most $3-2p$.} 
%

\begin{proof}
Let $G$ be a \WLC-game.
First note that the probability for coordination with \WM in the first round is
trivially $p$. After that, the players follow \WM and thus, in every round,
either repeat their previous choice or swap to another choice with
equal probability. If one player repeats and the other one swaps, then they coordinate.
Thus, in every round after the first round,
the one-shot coordination probability is at least $\frac{1}{2}$.
(Note that this probability can be greater than $\frac{1}{2}$ as $G$ is not 
necessarily a choice matching game.)
Let us first consider the case where $G$ is 
such that if coordination fails in the first round,
then, in every subsequent round, the probability of
coordination is exactly $\frac{1}{2}$. (This includes, e.g., all
choice matching games $\CM_m$ with $m > 1$.) Now the probability of
coordinating in the $k$th round (and not earlier) is
%
%
$(1 - p) \cdot (\frac{1}{2})^{k-1}$
%
%
for all $k\geq 2$. 
Hence the expected value $E$ for the coordination time with \WM is
calculated as follows.

$$
E\ =\ p\ +\ (1 - p)\sum\limits_{ k\geq 2}^{\infty} \dfrac{k}{2^{k-1}}
\ =\ p\ +\ (1 - p)\sum\limits_{ k\geq 2}^{\infty} \dfrac{2k}{2^{k}}\ .
%
$$
%
%


\noindent
It is well known that
$$\sum\limits_{ k\geq 1}^{ \infty} \dfrac{k}{2^{k}} =2,$$
whence
$$\sum\limits_{ k\geq 2}^{ \infty}
\dfrac{k}{2^{k}} = \dfrac{3}{2}\ 
\text{ and thus }\sum\limits_{ k\geq 2}^{\infty} {\dfrac{2k}{2^{k}} =  3}.$$

\vcut{
Now, it is well known and the reader 
can check that $\sum\limits_{k\geq 1}^{\infty} \cfrac{k}{2^{k}} = 2$.


\noindent
Thus $\sum\limits_{k\geq 2}^{\infty} \cfrac{k}{2^{k}} = \cfrac{3}{2}$ 
and therefore $\sum\limits_{k\geq 2}^{\infty} \cfrac{2k}{2^{k}} =  3$.
}

\noindent
Thus $E = p + (1 - p)\cdot 3  = 3 - 2p$. Therefore, in the general
case where the probability of coordinating is \emph{at most} $\frac{1}{2}$ in
the rounds after the first one, it is now
immediate that $E \leq 3 - 2p$ and thus $3 - 2p$ is still an
upper bound for the expected coordination time with \WM.
\vcut{
We then consider the remaining case
where some nodes $u_1,v_1$ 
such that $(u_1,v_1)\not\in W_G$ have 
their winning extensions connected by an edge.
Above the
probability for coordinating in round $k\geq 2$ on the condition of
not coordinating earlier was $1/2$. Now this conditional probability is higher, and
therefore it is immediate that $3 - 2p$ is still an
upper bound for expected coordination time.
}
\end{proof}

\noindent
\textbf{Proposition \ref{the: LG2} restated.}
\emph{\WM is uniquely \ECT-optimal in $\CM_2$.}

\medskip

\noindent
\begin{proof} The claim
follows directly from Lemma \ref{Two-edge lemma} given below;
we will first present a technical auxiliary definition
(Definition \ref{conjugatedefinition}) and
then prove Lemma \ref{Two-edge lemma}.
\end{proof}

\begin{definition}\label{conjugatedefinition}
Consider a choice 
matching game $\CM_m$ and assume a stage $(\CM_m,\mathcal{H}_k)$ has
edges $(u,v)$ and $(u',v')$ such that $u \sim v'$ and $u' \sim v$ (recall Definition~\ref{def: renamings}).
Then we say that the nodes $u$ and $u'$ are \textbf{conjugates} (of each other), and
likewise, the choices $v$ and $v'$ are conjugates.
The following lemma states that protocols become faster if they are 
adjusted to assign the same probability to conjugate elements in 
choice matching games.
\end{definition}

\begin{lemma}\label{Two-edge lemma}
Let $S=(\CM_m,\mathcal{H}_k)$ be a stage and $\pi$ be a protocol
which assigns
different probabilities $p_u$ and $p_{u'}$ to
some conjugate nodes $u$
and $u'$ of $S$ (see Definition \ref{conjugatedefinition} above). 
Let $\pi'$ be the protocol that is
otherwise as $\pi$ but assigns $u$ and $u'$ the same 
probability $\frac{1}{2}(p_u + p_{u'})$ in~$S$.
Then the \ECT from $S$ with $\pi'$ is
strictly smaller than the \ECT from $S$ with $\pi$. 
\end{lemma}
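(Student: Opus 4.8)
The plan is to isolate the single round played at $S$. For a profile $(x,y)\in C_1\times C_2$ that is not an edge, write $S_{xy}$ for the stage reached from $S$ by playing $(x,y)$ and $E_{xy}$ for the \ECT from $S_{xy}$. Since $\pi$ and $\pi'$ agree on every stage other than $S$ and no extension of $S_{xy}$ equals $S$, the number $E_{xy}$ is the same for $\pi$ and $\pi'$. Hence, writing $f_i,f_i'$ for the distributions that $\pi,\pi'$ prescribe at $S$ for player $i$ and $E(\pi,S)$ for the \ECT from $S$ with $\pi$, a one-step expansion gives $E(\pi,S)=1+\sum_{(x,y)\text{ not an edge}}f_1(x)f_2(y)\,E_{xy}$, and likewise for $\pi'$ with $f_i'$. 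Because $\pi$ is structural and $u\sim v'$, $u'\sim v$ in $S$, there is a renaming $\tau$ of $S$ that swaps the two players and interchanges the edges $(u,v)$ and $(u',v')$ (one gets $\tau$ from the witnesses of the two equivalences using that the edges of a choice matching game are disjoint); structurality along $\tau$ forces $f_2(v')=f_1(u)=p_u$ and $f_2(v)=f_1(u')=p_{u'}$, whereas $\pi'$ assigns $\bar p:=\tfrac12(p_u+p_{u'})$ to $u$ and to $u'$, hence also to $v'$ and $v$. Every other choice-probability is unchanged, so --- the edges being disjoint --- the only non-edge profiles whose weight $f_1(x)f_2(y)$ differs between $\pi$ and $\pi'$ are those whose first coordinate is $u$ or $u'$, or whose second coordinate is $v$ or $v'$; in particular $(u,v')$ and $(u',v)$, which are non-edges.

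Put $s:=p_u+p_{u'}$ and, for $t\in[0,s]$, let $\pi_t$ be the protocol agreeing with $\pi$ everywhere except at $S$, where it puts $f_1(u)=f_2(v')=t$ and $f_1(u')=f_2(v)=s-t$; then $\pi=\pi_{p_u}$ and $\pi'=\pi_{s/2}$. Substituting into the expansion shows $t\mapsto E(\pi_t,S)$ is a quadratic polynomial whose coefficient of $t^2$ is $E_{uv'}+E_{u'v}$; assuming, as we may, that $E(\pi,S)<\infty$, this coefficient is $>0$, so $E(\pi_t,S)$ is strictly convex in $t$. I claim it is symmetric about $t=s/2$. Indeed $\tau$ gives a renaming of $S_{uy}$ onto $S_{\tau(y)v'}$ and of $S_{u'y}$ onto $S_{\tau(y)v}$, sends the ``spectator'' choices (those $\neq u,u',v,v'$) of player $2$ bijectively to those of player $1$, and preserves all prescribed probabilities; plugging these identities into $E(\pi_t,S)-E(\pi_{s-t},S)$ makes every spectator contribution cancel, leaving $(2t-s)\bigl(s(E_{uv'}-E_{u'v})+2\sum_x g_1(x)(E_{xv'}-E_{xv})\bigr)$, where $g_1$ is the common spectator-restriction of $f_1,f_1'$. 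This vanishes provided $E_{uv'}=E_{u'v}$ and $E_{xv'}=E_{xv}$ for every spectator $x$. Granting that, strict convexity and symmetry about $s/2$ give $E(\pi_{s/2},S)<E(\pi_t,S)$ for all $t\neq s/2$; taking $t=p_u$, which is $\neq s/2$ since $p_u\neq p_{u'}$, yields $E(\pi',S)<E(\pi,S)$.

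The step needing most care is precisely the equalities $E_{uv'}=E_{u'v}$ and $E_{xv'}=E_{xv}$. Here the key fact is that the two stages in each such pair are \emph{automorphism-equivalent}: since the edges are disjoint, a renaming of $S$ sending $u$ to $v'$ must send $v$ to $u'$ (and symmetrically), from which a short check shows that a renaming of $S$ fixes the profile $(u,v')$ if and only if it fixes $(u',v)$, and likewise for $(x,v')$ versus $(x,v)$; so the stages in a pair have the same renaming group, hence the same structural-equivalence relation (in the stages to which the lemma is applied the relevant classes are simply $\{u,v'\}$ and $\{u',v\}$, which makes this transparent). By Lemma~\ref{similarity lemma} we may then assume, without increasing $E(\pi,S)$, that $\pi$ assigns equal {\ECT}s to structurally similar sibling-continuations of $S$ --- if it does not, then in each offending pair replace $\pi$'s behaviour below the slower stage by a renamed copy of its behaviour below the faster one; this leaves $S$ (and the hypothesis $p_u\neq p_{u'}$) untouched, and since distinct sibling-subtrees are disjoint the replacements do not interfere. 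For such $\pi$ all the needed equalities hold and the argument closes. As a sanity check, Proposition~\ref{the: LG2} is the case $\CM_m=\CM_2$, where there are no spectator choices and the expansion collapses to $E(\pi,S)=1+(p_u^2+p_{u'}^2)E_{uv'}$ against $E(\pi',S)=1+\tfrac{(p_u+p_{u'})^2}{2}E_{uv'}$, so that $E(\pi,S)-E(\pi',S)=\tfrac{(p_u-p_{u'})^2}{2}E_{uv'}>0$.
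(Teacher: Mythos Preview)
Your one-step expansion and convexity analysis are a good deal more careful than the paper's own argument, which is essentially: conditional on both players choosing from the two conjugate edges, the probability of immediate coordination is $2x(c-x)/c^{2}$ (with $x=p_u$ and $c=p_u+p_{u'}$), maximised at $x=c/2$; then ``since $\pi$ and $\pi'$ agree on all moves other than the one discussed here, the claim follows.'' The paper never touches the continuation {\ECT}s that you rightly flag as the delicate point.

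There is, however, a genuine gap at exactly that point. To force $E_{uv'}=E_{u'v}$ (and $E_{xv'}=E_{xv}$) you replace $\pi$ below certain children of $S$ by renamed copies, producing some $\hat\pi$ with $E(\hat\pi,S)\le E(\pi,S)$, and then run the symmetry argument on $\hat\pi$. This yields $E(\hat\pi',S)<E(\hat\pi,S)$, where $\hat\pi'$ is the averaged version of $\hat\pi$ --- not of the original $\pi$. The lemma, however, compares $\pi$ with \emph{its own} averaged version $\pi'$. This is not a repairable technicality: $S_{uv'}$ and $S_{u'v}$ are automorphism-equivalent (as you correctly check) but in general not renamings of one another, so a structural $\pi$ may give them arbitrarily different continuation {\ECT}s, and then averaging at $S$ can strictly \emph{increase} the \ECT. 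Concretely, take $\CM_3$ with $S$ the stage after history $[(u,v')]$, set $(p_u,p_{u'},f_1(w))=(0.6,\,0.2,\,0.2)$ at $S$, let $\pi$ coordinate in one further round from every non-coordinating child of $S$ except $S_{u'v}$, and from $S_{u'v}$ let $\pi$ play $(u',v)$ deterministically for $K-1$ rounds before coordinating; one computes $E(\pi,S)=1.68+0.04K$ while $E(\pi',S)=1.48+0.16K$, so for any $K\ge 2$ the protocol $\pi'$ is strictly worse. What your modification argument \emph{does} establish is the weaker statement that some protocol assigning equal probabilities to conjugates has \ECT from $S$ no larger than that of $\pi$ --- and that weaker form is precisely what every application of the lemma in the paper actually uses.
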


\begin{proof}
Let $v$ and $v'$ denote the choices such that $\CM_m$ has
edges $(u,v)$ and $(u',v')$.
As $\pi$ is a structural protocol, we must have $p_{v'} = p_{u}$
and $ p_{v} = p_{u'}$. To simplify notation, 
call $p_u = x$ and $p_u + p_{u'} = c$. Thus $p_v = p_{u'} = c-p_u = c - x $.

Under the condition that both players end up choosing from one of
the edges $(u,v),(u',v')$ in the stage $S$, the probability of winning is
$$\dfrac{2\cdot x(c-x)}{c^2} = -\dfrac{2x^2}{c^2} + \dfrac{2x}{c}.$$
This has its global maximum at $x = \frac{c}{2} = 
\frac{1}{2}(p_u + p_{u'})$. Since $\pi$ and $\pi'$ agree on
all moves other than the one discussed here, the claim follows.
\end{proof}


\medskip

\noindent
\textbf{Theorem \ref{the: odd m} restated.}
\emph{For all odd $m\geq 1$, \LA is uniquely \GCT-optimal in $\CM_m$.}

\begin{proof}
Let $m$ be odd. 
Recall that, by Proposition~\ref{the: GCT with RES}, the \GCT in $\CM_m$ with \LA
is $\lceil m/2\rceil$ rounds.
We assume, for contradiction, that there is
some protocol $\pi\neq\LA$ that
guarantees coordination in $\CM_m$ in at
most $\lceil m/2\rceil$ of rounds, possibly less.
As $\pi\neq\LA$, there exists some play of $\CM_m$ where both
players follow $\pi$, and in some round, at
least one of the players chooses a node on a touched edge.
(Recall from the proof of
Proposition~\ref{the: GCT with RES} that \LA never
chooses from a touched edge in \CM-game with an odd number of edges.)
Now, let $S_\ell=(\CM_m,\mathcal{H}_\ell)$
be the first stage of that play 
when this happens---so if $(c,c')$ is the most
recently recorded pair of choices in $S_\ell$, then at 
least one of $c$ and $c'$ is part of an
edge that has already been touched in some earlier round.
And furthermore, in all 
stages $S_{\ell'}$ with $\ell'< \ell$, the most 
recently chosen pair does not contain a choice belonging to an
edge that was touched in some yet earlier round $\ell''<\ell'$.

In the stage $S_{\ell -1}$ it therefore holds that
for every choice profile $(c_i,d_i)$, chosen in some round $i\leq (\ell - 1)$,
the nodes $c_i$ and $d_i$ are structurally equivalent.
Of course also the nodes of $S_{\ell - 1}$ on so far
untouched edges are structurally 
equivalent to each other. Furthermore, the number of
already touched edges in $S_{\ell - 1}$ is the
\emph{even} number $m' = 2(\ell - 1)$.

We will now show that $\pi$ does \emph{not} guarantee a 
win in $\lceil m/2 \rceil - (\ell - 1)$ rounds when starting 
from the stage $S_{\ell - 1}$. This
completes the proof, contradicting the assumption that $\pi$
guarantees a win in $\CM_m$ in at most $\lceil m/2 \rceil$ rounds.

%
Now, recall the stage $S_\ell$ from above where $(c,c')$
contained a choice from an already touched edge. By symmetry, we
may assume that $c$ is such a choice.
Starting from the stage $S_{\ell - 1}$, consider a newly
defined stage $S_\ell'$ where the 
first player again makes the choice $c$ but
the other player this time \emph{makes a structurally equivalent choice $c^*\sim c$}.
This is possible as $\pi$ is a structural protocol. Now note that
the choice profile $(c,c^*)$ is not winning since $c$ and $c^*$ are structurally 
equivalent choices from already touched edges, and thus either $(c,c^*)$
is a choice profile that has already been chosen in some
earlier round $j < \ell$, or the nodes $d,d^*$
adjacent in $\CM_m$ to $c^*,c$ (respectively) form a choice profile $(d,d^*)$
chosen in some earlier round $j < \ell$.
Therefore, in the freshly defined stage $S_\ell'$, the players have in every stage
(including the stage $S_\ell'$ itself) selected a choice profile that consists of two
structurally equivalent choices. Both choices in the most recently selected
choice profile in $S_\ell'$ have been picked from edges that have become 
touched even earlier. It now suffices to
show that it can still take $\lceil m/2 \rceil - (\ell - 1)$ rounds to
finish the game. To see that this is the case, 
we shall next consider a play from the stage $S_\ell'$
onwards where in each remaining round, the choice profile $(e,e^*)$
picked by the players consists of structurally equivalent choices; such a play
exists since $\pi$ is structural.
Due to picking only structurally 
equivalent choices in the remaining play,
when choosing a profile from the already touched 
part, the players will clearly never coordinate. 
And when choosing from the untouched part, immediate coordination is
guaranteed if and only if there is only one untouched edge left.
Therefore the players coordinate exactly when 
they ultimately select from the last untouched edge. As the stage $S_{l}'$
has precisely $m - 2(\ell - 1)$ untouched edges, winning in this 
play takes at least
$$\bigg\lceil\ \frac{m - 2(\ell - 1)}{2}\bigg\rceil\ 
=\ \lceil m/2 \rceil - (\ell - 1)$$
rounds to win from $S_\ell'$.
\end{proof}

\noindent
\textbf{Lemma \ref{32 Lemma} restated.}
\emph{The \ECT from $(\CM_m,\mathcal{H}_k)$
with no focal point is at least $\frac{3}{2}$ with any protocol.}

\begin{proof}
If $(\CM_m,\mathcal{H}_k)$ has an even number of edges, then,
since $(\CM_m,\mathcal{H}_k)$ has no
focal points, we can partition its edges into doubleton
sets, each set containing
exactly two edges $(u,v)$ and $(u',v')$ such that $u\sim v'$ and $u'\sim v$ (whence
$u$ and $u'$ as well as $v$ and $v'$ are conjugates in the 
sense of Definition \ref{conjugatedefinition}).
If $(\CM_m,\mathcal{H}_k)$ has an odd number of edges and no focal points,
then we can construct a
partition consisting of similar doubletons together with
one tripleton set with edges $(u,v),(u',v'),(u'',v'')$ such that all the choices $u,u',u'',v,v',v''$ are
all pairwise structurally equivalent.

To coordinate in the next round $k+1$, the players must 
select from the same (doubleton or tripleton) set $T$ of edges in the partition, and
within $T$, they must choose the same edge. Now recall that the players use the
same protocol, and the protocol determines the
same probability for all structurally equivalent choices. Thus 
the probability of hitting the same edge on the condition that the players 
have chosen from the same doubleton set $T$ is at most $\frac{1}{2}$ (this 
follows easily from the proof of Lemma \ref{Two-edge lemma}). 
The probability of hitting the same edge on the condition that the players choose from
the tripleton set is necessarily $\frac{1}{3}$, as all the six choices within
that tripleton are pairwise structurally equivalent, and thus the protocol assigns
them the same probabilities.
Therefore, for any protocol, $\frac{1}{2}$ is an upper bound for the probability of 
coordinating in the next round $k+1$.

Now, suppose that the players coordinate with probability $\frac{1}{2}$ in round $k+1$, and 
suppose they are guaranteed to coordinate in round $k+2$ if they fail in round $k+1$.
Then the \ECT for the remaining game is $\frac{1}{2} + \frac{1}{2}\cdot 2 = \frac{3}{2}.$
\end{proof}

The following lemma will be needed in
the proof of Theorem \ref{the: 4-ladders} below.

\begin{lemma}\label{emulationlemma}
Let $S_k = (\CM_4,\mathcal{H}_k)$ and $S_n' = (\CM_4,\mathcal{H}_n')$ be
%
%
stages of $\CM_4$ with exactly $2$ and $4$
touched edges, respectively, 
and no focal points.
Assume also, for technical convenience, that $S_n'$ does not 
extend the history of $S_k$, i.e., $S_n'$ cannot be reached from $S_k$.
Let $\pi$ be a protocol whose \ECT is $r$ 
when starting from $S_n'$. Then there exists a protocol $\pi'$ 
whose \ECT is $s \leq r$
when starting from $S_n'$ and also when starting from $S_k$.
In every stage, $\pi'$ assigns the same probability to 
conjugate nodes (cf. Definition \ref{conjugatedefinition}).
\end{lemma}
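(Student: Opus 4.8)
The plan is to build a single protocol $\pi'$ that, from either $S_k$ or $S_n'$, commits to playing \emph{uniformly within one fixed conjugate $2$-block of edges} and never leaves it; this yields \ECT exactly $2$ from both stages, and the inequality $s\le r$ then follows once we know that $2$ is already the optimal \ECT from $S_n'$. No reference to $\pi$ itself is needed beyond the bound $r\ge 2$.

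First I would pin down the $\sim$-structures of the two stages. The key observation is that in $\CM_4$ a structural automorphism with $\beta=\mathrm{id}$ must fix every choice occurring in the history, hence (as $\CM_4$ is a perfect matching) it fixes both choices of every touched edge; so every nontrivial structural equivalence comes from the \emph{unique} automorphism whose $\beta$ is the transposition of the two players, and this automorphism is an involution. Applied to $S_k$: since its two touched edges contain no focal point, they form a conjugate $2$-block $B^{t}=\{(u,v),(u',v')\}$ with $u\sim v'$, $u'\sim v$, $u\not\sim u'$, while the two untouched edges form a block $B^{u}$ whose four choices are \emph{all} structurally equivalent. Applied to $S_n'$: all four edges are touched, every $\sim$-class has exactly two elements (one per player, on distinct edges), and the induced fixed-point-free involution on the four edges splits them into two conjugate $2$-blocks $B_1,B_2$, which are moreover structurally \emph{inequivalent} (an automorphism fixes each $H_i$, so cannot carry an edge of $B_1$ onto one of $B_2$).

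Now define $\pi'$. Its only nontrivial behaviour is on the descendants of $S_k$ and of $S_n'$, and these two families are disjoint: $S_n'$ does not extend $S_k$ by hypothesis, and $S_k$ cannot extend $S_n'$ either, since a descendant of $S_n'$ has four touched edges whereas $S_k$ has two. On $S_k$ and its $\pi'$-reachable descendants, $\pi'$ has each player pick uniformly among her two choices in $B^{t}$; on $S_n'$ and its $\pi'$-reachable descendants, uniformly within the canonically-first of $B_1,B_2$ (e.g.\ the block whose edges were touched earliest — a structural criterion, the two blocks being structurally distinct); on all other stages $\pi'$ may be the uniform protocol. In every case $\pi'$ is structural and assigns equal probability to conjugate nodes (within a conjugate $2$-block, uniformity gives $u,u'$ — and $v,v'$ — equal weight, exactly as $\sim$ forces, and the extra symmetry of $B^{u}$ is respected as well). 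Next I would verify that playing uniformly within a conjugate $2$-block returns, after \emph{any} failure, to a stage automorphism-equivalent to the previous one (the conjugating automorphism of that block still respects the enlarged history, and the rest of the stage is untouched by that round). Hence, from both $S_k$ and $S_n'$, $\pi'$ induces a sequence of i.i.d.\ rounds, each coordinating with probability $\frac12$, so its \ECT is $\sum_{k\ge1}k\,2^{-k}=2$ from either stage. Put $s:=2$.

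It remains to see $s=2\le r$, i.e.\ that every protocol has \ECT at least $2$ from $S_n'$. Let $V$ be the optimal \ECT from $S_n'$. From the two-conjugate-block description, a one-round analysis shows that the coordination probability $c$ of any move is at most $\frac12$, that a failure \emph{within} a block leads to a stage automorphism-equivalent to $S_n'$, and that a failure \emph{across} blocks leads to a stage with trivial automorphism group — every choice is then a focal point — from which the optimal \ECT is $1$. Parametrising a move by the player-$1$ distribution $g=(g_1,g_2,g_3,g_4)$ with the block pairing $\mu=(12)(34)$, one computes $c=2g_1g_2+2g_3g_4$, the within-block-failure probability $w=g_1^2+g_2^2+g_3^2+g_4^2$, and the across-block-failure probability $a=1-(g_1+g_2)^2-(g_3+g_4)^2$; feeding these into the recursion $V=\min[\,c+w(1+V)+2a\,]$ over all achievable triples and optimising forces $V=2$. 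Therefore $r\ge V=2=s$, and $\pi'$ has all the asserted properties.

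The main obstacle is the structural bookkeeping behind the second and third paragraphs: proving that no-focal-point stages of $\CM_4$ have exactly the stated small automorphism groups — ruling out, e.g., a $4$-cycle acting on the edges of $S_n'$, or the two touched edges of $S_k$ forming a fully symmetric rather than a conjugate block — and determining precisely which moves preserve automorphism-equivalence and which create focal points. Everything here reduces to the fact that $\beta=\mathrm{id}$ automorphisms fix every touched choice, so the case analysis is finite; the concluding optimisation giving $V=2$ is then routine.
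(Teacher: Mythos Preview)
Your proof is correct and takes a genuinely different route from the paper's.

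The paper never computes the optimal \ECT from $S_n'$. Instead it argues abstractly: starting from the given $\pi$, repeated use of Lemma~\ref{Two-edge lemma} produces a protocol $\pi^*$ with \ECT $s\le r$ from $S_n'$ that is uniform on every conjugate pair; because such a protocol has distribution of the shape $(g,g,h,h)$, which is structural at \emph{both} $S_n'$ and $S_k$, its behaviour can be transplanted round-by-round to $S_k$, and the transplanted protocol has the \emph{same} \ECT $s$ from $S_k$. Thus $\pi'$ is literally built out of $\pi$.

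You instead ignore $\pi$ entirely, exhibit a concrete $\pi'$ (uniform within one conjugate block) with \ECT $2$ from both stages, and then prove the lower bound $r\ge 2$ for \emph{every} protocol via the Bellman-type recursion $V=\min_g[c+w(1+V)+2a]$. Your structural analysis of the no-focal-point stages (the automorphism group is $\{\mathrm{id},\sigma\}$ with $\sigma$ the unique player-swap; within-block failures preserve it, cross-block failures kill it) is correct, and the fixed-point computation does force $V=2$; one only has to be slightly careful that the recursion for the \emph{infimum} $V^*$ is justified---it is, since any protocol's \ECT from $S_n'$ dominates $c(g)+w(g)(1+V^*)+2a(g)$ for its first-round distribution $g$, so taking the infimum over protocols and then over $g$ gives $V^*\ge 1+V^*/2$.

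What each approach buys: the paper's transplantation argument is more general---it would work for any pair of stages whose conjugate structures are compatible, without knowing the optimal value. Your approach is more concrete and in fact proves strictly more: you establish directly that the optimal \ECT from a four-touched, no-focal-point stage of $\CM_4$ is exactly $2$, which would let one shortcut part of the proof of Theorem~\ref{the: 4-ladders} (there one could immediately take $E_2\ge 2$ rather than arranging $E_2=E_1$).
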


\begin{proof}
By Lemma \ref{Two-edge lemma}, it is easy to see that there 
exists a protocol $\pi^*$ whose \ECT when starting from $S_n'$ is some
number $s\leq r$ and the following conditions hold:
\begin{enumerate}
    \item
    In the stage $S_n'$ and in stages extending the history of $S_n'$, 
    the protocol $\pi^*$ always assigns the same probability to
    all nodes that are conjugates (cf. Definition \ref{conjugatedefinition}).
    \item
    Whenever a focal point is created, the protocol $\pi^*$ forces the players to 
    coordinate immediately in the next round.
\end{enumerate}
Due to the first condition above, it is possible to copy the behaviour of $\pi^*$ 
starting from $S_n'$ to all games starting
from $S_k$ in the direct way described next.
First note that both $S_n'$ and $S_k$ are based on the 
same graph $\CM_4$ with the same set of nodes.
We may assume, by symmetry, that conjugate nodes in $S_k$ are also
conjugates in $S_n'$.
We copy the behaviour of $\pi^*$ in the games 
starting from $S_n'$ to the games starting from $S_k$ just by assigning
the exact same probabilities chosen in $S_{n+\ell}'$ to the exactly 
same nodes in the corresponding
stage $S_{k+\ell}$ (that extends the history of $S_k$ in the same
way as $S_{n+\ell}'$ extends the history of $S_n'$). It is
easy to see that this constructs a structural protocol due to 
the condition 1 above stating that $\pi^*$ gives the same probabilities to
conjugate nodes. Clearly the copied protocol gives the same \ECT 
starting from $S_k$ as $\pi^*$ gives when starting from $S_n'$.

Now the ultimate 
desired protocol $\pi'$ is constructed by combining $\pi^*$ and 
the constructed copy. The assumption that $S_n'$ does not
extend the history of $S_k$ is used in this combination step.
Note that thus $\pi'$ clearly assigns the same 
probabilities to conjugates in all stages reachable from $S_k$
and $S_n'$, and by Lemma \ref{Two-edge lemma}, we can ensure that $\pi'$
also assigns the same probability to conjugates in all other stages.
\end{proof}

\noindent
\textbf{Theorem \ref{the: 4-ladders} restated.}
\emph{\WM is \ECT-optimal for $\CM_4$, but there are
continuum many other protocols that are also \ECT-optimal.}

\begin{proof}
Consider a stage $S_k=(\CM_4,\mathcal{H}_k)$ with
exactly two touched edges.
We will first show that no protocol gives an \ECT less than $2$ 
from the stage $S_k$.
This is done by establishing that existence of such a protocol
would imply existence of a
protocol in $\CM_2$ with \ECT less than $2$, contradicting
Proposition \ref{the: LG2} and Theorem \ref{jormatheorem}.

Now, suppose, for contradiction, that $\pi$
gives an \ECT less than $2$ when starting
from $S_k$. 
By Lemma \ref{Two-edge lemma}, we can assume that $\pi$ assigns
the same probability to conjugate
nodes (cf. Definition \ref{conjugatedefinition}) in $S_k$.
Therefore the formula (E) (see Section \ref{optimizingsection}) gives the \ECT 
for $\pi$ from $S_k$, given we plug in the 
right values for $p$, $E_1$, $E_2$ and $n$. We have $n=2$ and
the other values are
determined by $\pi$, with $E_1$ corresponding to the situation 
with two touched edges and $E_2$ to the situation with four touched edges.
We may assume that $E_1<2$ because all stages with exactly two touched 
edges are automorphism-equivalent and the \ECT
from $S_k$ (which has exactly two
touched edges) is less than $2$.
Using Lemma \ref{emulationlemma}, we see that there
exists a protocol $\pi'$ with $E_2 = E_1 < 2$ that also gives an \ECT 
less or equal to the \ECT of $\pi$ from $S_k$. And furthermore,  
the formula (E) with these fixed values $E_2 = E_1 < 2$
(and with $n=2$) gives the right value for the \ECT of $\pi'$ from $S_k$.
It is easy to prove that with these values, 
the formula (E) has its minimum
values at $p = 0$ and $p = 1$ when $p\in [0,1]$; 
for an illustration, see the graph of (E) in Figure~\ref{fig: E=2} 
for $E_1 = E_2 = 2 - \epsilon$ for some (small) $\epsilon > 0$.

\begin{figure}[h]
\begin{center}
\begin{tikzpicture}[domain=0:1,xscale=5,yscale=10,scale=0.7]
 	\draw[ultra thin,color=gray] (0,1.75) grid[xstep=0.1,ystep=0.05] (1,1.95);
	\draw[->] (0,1.75) -- (1.05,1.75) node[right] {\scriptsize $p$};
	\draw[->] (0,1.75) -- (0,1.98) node[above] {\scriptsize $E(p)$};
	\node at (0,1.72) {\tiny $0$};
	\node at (0.5,1.72) {\tiny $0.5$};
	\node at (1,1.72) {\tiny $1$};
	\node at (-0.09,1.8) {\tiny $2\!-\!\epsilon$};
	\node at (-0.09,1.9) {\tiny $2\!-\!\frac{\epsilon}{2}$};
	\draw[thick] plot (\x, {\x^2*(1/2 + (1/2)*(1 + 1.6)) + 4*\x*(1-\x) + (1-\x)^2*(1/2+(1/2)*(1+1.6))});
\end{tikzpicture}
\;
\begin{tikzpicture}[domain=0:1,xscale=5,yscale=10,scale=0.7]
 	\draw[ultra thin,color=gray] (0,1.9) grid[xstep=0.1,ystep=0.05] (1,2.1);
	\draw[->] (0,1.9) -- (1.05,1.9) node[right] {\scriptsize $p$};
	\draw[->] (0,1.9) -- (0,2.13) node[above] {\scriptsize $E(p)$};
	\node at (0,1.87) {\tiny $0$};
	\node at (0.5,1.87) {\tiny $0.5$};
	\node at (1,1.87) {\tiny $1$};
	\node at (-0.07,1.9) {\tiny $1.9$};
	\node at (-0.07,2) {\tiny $2$};
	\node at (-0.07,2.1) {\tiny $2.1$};
	\draw[thick] plot (\x, {\x^2*(1/2 + (1/2)*(1+2)) + 4*\x*(1-\x) + (1-\x)^2*(1/2+(1/2)*(1+2))});
\end{tikzpicture}
\caption{Left: Curve of (E) when $n=2$ and $E_1=E_2=2-\epsilon$ for some $\epsilon>0$. Right: Curve of (E) when $n=2$ and $E_1=E_2=2$.}
\label{fig: E=2}
\end{center}
\end{figure}
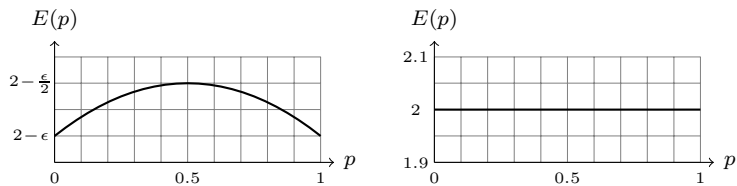

Therefore the protocol $\pi''$
that uses $p = 1$ at $S_k$, but
otherwise behaves as $\pi'$, has the following properties:
\begin{itemize}
    \item 
    $\pi''$ has the same \ECT (less than $2$) as $\pi'$ when starting from $S_k$.
    \item
    $\pi''$ shares the values $E_2 = E_2 < 2$ with $\pi'$.
\end{itemize} 
(We note that of course possibly $\pi'' = \pi'$.)
Now, $\pi''$ instructs the
players to choose from an already touched edge at $S_k$, so every
resulting stage $S_{k+1}$ turns out to be automorphism-equivalent to
the stage $S_k$.
Thus we can repeat the
reasoning above concerning $S_k$, this time beginning from $S_{k+1}$. Iterating
the argument repeatedly, it is easy to see that in the limit, we get a 
protocol that behaves precisely as \WM but has an \ECT less than $2$
when starting from $S_k$. This contradicts the 
fact that the \ECT of \WM is $2$
when starting from $S_k$ by Lemma \ref{roskalemma}.
Having proved that no protocol has an \ECT less than $2$ in $S_k$, we
then observe by Lemma \ref{roskalemma} that therefore $\WM$ is an \ECT-optimal
protocol for $S_k$ and therefore trivially also for $\CM_4$. We still 
must find continuum many other optimal protocols for $\CM_4$.
Clearly it suffices to prove that there are continuum many other optimal
protocols when starting from an arbitrary stage $S_k$ where we
have exactly two touched edges.
%

%
Consider again the formula (E) with $n=2$ and $E_1 = E_2 = 2$, i.e.,
the values given by \WM which we above
identified to be \ECT-optimal in $\CM_4$ and also 
when starting from $S_k$. It is easy to show that with 
these values, the formula (E) becomes equal to the
constant $2$ for all $p\in [0,1]$; see Figure~\ref{fig: E=2} for an 
illustration of the corresponding flat curve and its contrast to 
the case where $E_1 = E_2 = 2 - \epsilon$. Therefore we can clearly modify \WM to
give any value of $p\in[0,1]$ when starting from $S_k$ such that,
despite the modification, the resulting 
protocol is still \ECT-optimal in $\CM_4$. Thus there exist at least
continuum many \ECT-optimal protocols for $\CM_4$. In fact, it is
clear that we can analogously modify these protocols also in other stages in 
addition to $S_k$ without changing the \ECT. However, it is
straightforward to establish that the number of all protocols for $\CM_4$,
whether optimal or not, is
limited by the continuum, so there indeed exist precisely continuum 
many \ECT-optimal protocols for $\CM_4$.
\end{proof}

\section{Appendix: Analysis of ECTs in 3- and 5-choice games}\label{appendix: 3 & 5}

In this section we will systematically analyse all 3-choice games and 5-choice games and give estimates for {\ECT}s in them (recall Section~\ref{sec: characterisations} for the exact definition of an $m$-choice game).
This analysis is necessary for the special cases $m=3$ and $m=5$ in the proof of Theorem~\ref{the: upper bounds for ECTs}.
We will be using the notations for \WLC-games from Example~\ref{drawingexamples}.

We first note that the optimal \ECT is $1$ for all those $\WLC$-games in which coordination can be guaranteed in a single round. Such games are given a complete characterization in \cite{lori2017}. For example, in the game $G(1\times 1 + 2\times 2)$, coordination can be guaranteed in a single round by both players selecting choices of degree 1 (which are indeed focal points), or alternatively, by both
selecting choices of degree 2 (which form a ``winning focal set'').
%
%

\subsection{Analysis of 3-choice games}\label{sec: 3-choice games}

In this section we will show that, among all two-player $3$-choice games, the greatest optimal \ECT is uniquely realized by the game $G(1\times 2+2\times 1)$.
We also show that the optimal \ECT for this game is 
$$\frac{1+\sqrt{4+\sqrt{17}}}{2} \quad (\approx 1,925).$$

We first note that if either of the players has a choice of degree $3$ in a $3$-choice game $G$, then the optimal \ECT in $G$ is $1$ (since selecting such a choice trivially guarantees coordination).
Thus we can restrict our analysis to those $3$-choice games in which the degree of each choice is at most $2$. Note that the game graph of $G$ must thus consist of components which are either \emph{cycles} or \emph{paths} (in particular, they are subgraphs of the form $G(O_n)$, $G(1\times 1)$, $G(1\times 2)$, $G(Z_n)$, $G(\Sigma_n)$; recall the notations from Example~\ref{drawingexamples}). We list here systematically all such $3$-choice games $G$ grouped by the number of edges in the winning relation $W_G$. (Note that we must have $3\leq|W_G|\leq 6$ as $G$ is a $3$-choice game and the degree of each choice is at least $1$ and at most $2$.)

\begin{center}
{\footnotesize
\begin{tabular}{|c|c|c|c|}  
\hline
$|W_G|=3$ & $|W_G|=4$ & $|W_G|=5$ & $|W_G|=6$ \\
\hline
$G(1\times 2 + 1\times 1)$  & $G(\Sigma_3)$ & $G(O_2+1\times 1)$ & $G(O_3)$ \\
$G(3(1\times 1))=\CM_3$ & $G(Z_2+1\times 1)$ & $G(Z_3)$ & \\
& $G(1\times 2+2\times 1)$ & & \\
\hline
\end{tabular}
}
\end{center}
Among these games, the only ones that do not have a focal point are the
games $\CM_3$, $G(O_3)$ $G(1\times 2 + 2\times 1)$ which we analyse below.


\begin{itemize}
\item $\CM_3$ $(=\overline{G(O_3)})$

The optimal \ECT here is $1 + \frac{2}{3}$ by Proposition~\ref{the: LG3}
(see the table in Section \ref{sect:solvability}).

\item $G(O_3)$ $(=\overline{\CM_3})$

The one-shot coordination probability here is $\frac{2}{3}$. Suppose that the players simply make a random choice in every round (with uniform probability distribution).
The obtaind \ECT can then be calculated as follows:
\begin{align*}
\sum_{k\geq 0}^\infty \frac{2}{3}\left(\frac{1}{3}\right)^k \!(k+1)
\;=\;2\cdot\sum_{k\geq 0}^\infty \frac{k+1}{3^{k+1}}
\;=\;2\cdot\sum_{k\geq 1}^\infty \frac{k}{3^k}
\;\overset{(\star)}{=}\; 2\cdot\frac{3}{4}
\;=\;1+\frac{1}{2}.
\end{align*}
$(\star)$ It is easy to shown that $\sum_{k\geq 1}^\infty \frac{k}{3^k} = \frac{3}{4}$.

(It is relatively easy to see that this \ECT will indeed be optimal for $\overline{\CM_3}$, but there is no need for us to prove it here.) 
\item $G(1\times 2 + 2\times 1)$

We will show that the optimal \ECT for this game is $\frac{1+\sqrt{4+\sqrt{17}}}{2}$, but there are several protocols which give this optimal \ECT. See below for a proof.
\end{itemize}

\noindent
Consider the following game:

\begin{center}
\begin{tikzpicture}[scale=0.6,choice/.style={draw, circle, fill=black!100, inner sep=2.1pt}]
%
	\node at (0,0) [choice] (00) {};
	\node at (2,0) [choice] (20) {};
	\node at (0,1) [choice] (01) {};
	\node at (2,1) [choice] (21) {};
	\node at (0,2) [choice] (02) {};
	\node at (2,2) [choice] (22) {};
	\draw[thick] (00) to (20);
	\draw[thick] (01) to (20);
	\draw[thick] (02) to (21);
	\draw[thick] (02) to (22);
	\node at (-0.8,0) {\small $c_1$};
	\node at (-0.8,1) {\small $b_1$};
	\node at (-0.8,2) {\small $a_1$};
	\node at (2.8,0) {\small $a_2$};
	\node at (2.8,1) {\small $b_2$};
	\node at (2.8,2) {\small $c_2$};
\end{tikzpicture}
\end{center}

\noindent
Recalling the notion of structural equivalence from Definition \ref{def: renamings},
in the initial stage there are two structural equivalence classes: 
\begin{enumerate}
\item[(1)] $\{a_1,a_2\}$ and $\{b_1,c_1,b_2,c_2\}$.
\end{enumerate}
If players fail to coordinate by both selecting a node with degree 2, then the next stage will also be of type (1). However, if they fail to coordinate by selecting choices with degree 1, the equivalence class $\{b_1,c_1,b_2,c_2\}$ is split into two classes with two choices. We may assume by symmetry that the players chose $b_1$ and $b_2$, whence we have the following equivalence classes in the next stage:
\begin{enumerate}
\item[(2)] $\{a_1,a_2\}$, $\{b_1,b_2\}$, $\{c_1,c_2\}$.
\end{enumerate}
If players fail to coordinate by selecting the pair $(a_1,a_2)$, $(b_1,b_2)$ or $(c_1,c_2)$, then the next stage will also be of type (2). But if they fail to coordinate by one of them selecting from $\{b_1,b_2\}$ and the other one
selecting from $\{c_1,c_2\}$, then all
symmetries are broken and every
choice turns into a focal point---and thus
coordination can be guaranteed in the next round.

We first examine a stage $S_2$ of the type (2) and find the optimal probability distribution for it. The corresponding optimal \ECT will be used later for finding the optimal \ECT for a stage of type (1). 

We first observe that in order to maximize the possibility of breaking symmetries and creating focal points, it is optimal for the players to have the uniform probability distribution for selecting between the sets $\{b_1,b_2\}$ and $\{c_1,c_2\}$ (this can be proven similarly as Lemma~\ref{Two-edge lemma}).
Thus, let $p_2$ denote the probability for selecting within $\{b_1,b_2,c_1,c_2\}$. Let $E_2$ denote \ECT for the remaining game if players fail to coordinate and fail to create a focal point in $S_2$. (There are several ways how this can happen, but since all of the resulting stages are of type (2), we may assume the same \ECT for all of them by Lemma~\ref{similarity lemma}).

Under the assumptions above, the \ECT from $S_2$, with parameters $p_2$ and $E_2$, is given by the following function:
\begin{align*}
	g(p_2,E_2) 
	&= (1-p_2)^2(1+E_2) + 2p_2(1-p_2) + p_2^2\Bigl(\frac{1}{2}(1+E_2)+\frac{1}{2}\cdot 2\Bigr).\\[-1mm]
	&= \frac{1}{2}(1+3E_2)p_2^2 - 2E_2p_2 + (1 + E_2).
\end{align*}

The partial derivate $g_{p_2}=(1+3E_2) - 2E_2$ goes the zero when $p_2$ has the value $p_2^*:=\frac{2E_2}{1+3E_2}$. 
Whenever $E_2\geq 1$, the smallest value for $g(p_2,E_2)$ is obtained when $p_2=p_2^*$. 
Because both $g(p,E_2)$ and $E_2$ refer to \ECT from a stage of type (2), $E_2$ obtains its smallest possible value when
\[
	E_2 = g(p_2^*,E_2).
\]
The only (positive) solution for this equation is $E_2=\frac{3+\sqrt{17}}{4}$ $(\approx 1,781)$. This is the optimal \ECT from any stage of type (2).

\medskip

Next we will use the value $E_2$ to determine the optimal \ECT from a stage $S_1$ of type (1). Let $E_1$ denote the \ECT for the remaining game if both players select within the set $\{a_1,a_2\}$. When $p_1$ denotes the probability of choosing within the set $\{b_1,c_1,b_2,c_2\}$, the \ECT from $S_1$ is given by the following function:
\begin{align*}
	f(p_1,E_1,E_2) &= (1-p_1)^2(1+E_1) + 2p_1(1-p_1) + p_1^2(1+E_2) \\
	&= (E_1 + E_2)p_1^2 - 2E_1p_1 + (1 + E_1).
\end{align*}

The partial derivate $f_{p_1}=(2E_1 + 2E_2)p_1 - 2E_1$ goes the zero when $p_1$ has the value $p_1^*:=\frac{E_1}{E_1+E_2}$. 
Whenever $E_1,E_2\geq 1$, the
smallest value for $f(p_1,E_1,E_2)$ is obtained when $p_1=p_1^*$. 
Because both $f(p,E_1,E_2)$
and $E_1$ refer to \ECT from a stage of type (1), $E_1$
obtains its smallest possible value when $E_2=\frac{3+\sqrt{17}}{4}$ and we have
\[
	E_1 = f(p_2^*,E_2).
\]
When $E_2=\frac{3+\sqrt{17}}{4}$, the only (positive) solution for the equation above is $E_1=\frac{1+\sqrt{4+\sqrt{17}}}{2}$. This is the optimal \ECT from any stage of type (1), and thus, in particular, it is the optimal \ECT for the game $G(1\times 2 + 2\times 1)$.
Hence the greatest optimal \ECT among $3$-choice games is uniquely realized by $G(1\times 2 + 2\times 1)$. This concludes the analysis of 3-choice games.

\medskip

We digress from the main story to make a few interesting remarks.
The optimal \ECT for $G(1\times 2 + 2\times 1)$ is given by
protocols that use the optimal values for $E_1$ and $E_2$ (given above) for
calculating the probabilities $p_1^*$ $(\approx 0,5195)$
and $p_2^*$ $(\approx 0,5616)$ and use these
probabilities for selecting within the set $\{b_1,c_1,b_2,c_2\}$ 
in stages of type (1) and (2), respectively. 
However, there is no unique protocol which gives the optimal \ECT since there are 3 winning pairs of focal points that are formed if players break the symmetry in a stage of type (2).

Also note that, in $G(1\times 2 + 2\times 1)$, the optimal one-shot coordination probability (\OSCP) is $\frac{1}{2}$ and it is obtained by giving the probability $\frac{1}{2}$ for selecting a choice within the set $\{b_1,c_1,b_2,c_2\}$
(proof for this claim is similar to the proof of Lemma~\ref{Two-edge lemma}).
Since \ECT-optimal protocols for $G(1\times 2 + 2\times 1)$ do not
give the optimal \OSCP, we observe that the ``greedy protocol'' of always optimizing the chances of winning in the next round is not always \ECT-optimal. 
Another example of this phenomenon is the game $\CM_5$ where \LA does not give the optimal \OSCP in the second round; \WM is there the greedy protocol.


\subsection{Analysis of 5-choice games}\label{sec: 5-choice games}

In this section we will show that, among all two-player $5$-choice games, the greatest optimal \ECT is uniquely realized by the choice matching game $\CM_5$. Recall that this \ECT is obtained by the protocol \LA by Proposition~\ref{the: 5-ladders} and its value is $2 + \frac{1}{3}$.


We first analyse $5$-choice games $G$ for which we have $|W_G|>8$. For such games, the one-shot coordination probability $p$, when players make a random choice in the first round, is 
\[
    p \;=\; \frac{|W_G|}{|C_1||C_2|} \;\geq\; \frac{9}{25}.
\]
Thus, by Theorem~\ref{jormatheorem}, the \ECT for $G$ by following \WM is at most
\[
    3 - 2p \;\leq\; 3 - 2\cdot\frac{9}{25} 
    \;=\; 2 + \frac{7}{25} \;<\; 2 + \frac{1}{3}.
\]
Thus $G$ can be given a smaller \ECT than the optimal \ECT for $\CM_5$.

Hence we can restrict our analysis to those $5$-choice games $G$ whose winning relation $W_G$ has at most $8$ edges.
Moreover, we may also assume that neither of the player has a choice of degree $5$ as otherwise the optimal \ECT is trivially 1.

Suppose first that at least one of the players has a choice of degree $4$. Since $|W_G|\leq 8$, neither of the players can have more than two such choices and it is
impossible that both players have two such choices. If precisely one of the players has precisely one choice of degree $4$ (and the other player zero or
two such choices), then it is a focal point and the players can immediately coordinate. If one player has two choices, denoted by $c$ and $c'$, of degree $4$ and the other player has no such choice, then there are (at least three) choices that are connected to both $c$ and $c'$. Now the players can coordinate immediately by one of them selecting among $\{c,c'\}$ and the other one selecting among the choices which are connected to both $c$ and $c'$.
Finally, suppose that both players have exactly one choice of degree~$4$; we denote these by $c_1$ and $c_2$. If there is an edge between $c_1$ and $c_2$, then both of them are focal points. If there is no edge between $c_1$ and $c_2$, then we must have $G=G(1\times 4 + 4\times 1)$ as $|W_G|\leq 8$. The \ECT for this game is analysed later on below.

Suppose then that at least one of the players has a choice of degree $3$ and none of the choices have a greater degree. As $|W_G|\leq 8$, both players have at most two choices of degree~$3$. We first show that it is impossible that both players have two choices of degree $3$. If player 1 has two choices of degree $3$, then (s)he can have at most $4$ choices in total as the degree of every choice must be at least one. If also player 2 has two choices of degree $3$, then (s)he also has at most $4$ choices and thus $G$ cannot be a $5$-choice game.

We observe next that there is a focal point in $G$ if precisely one of the players has precisely one choice of degree $3$ and the other player
zero or two choices of degree $3$.
Suppose next that one player has two choices, $c$ and $c'$, of degree~$3$ and the other one has no such choices. Now there must be at least one choice which coordinates with both of $c$ and $c'$, and the players can guarantee coordination when one selects among $\{c,c'\}$ and the other one selects a choice which is connected to both $c$ and $c'$. 
Finally, suppose that both players have exactly one choice of degree $3$; these choices are denoted by $c_1$ and $c_2$. If there is an edge between $c_1$ and $c_2$, then they are focal points. If there is no edge between $c_1$ and $c_2$, then $G$ must be one of the following $5$-choice games where $|W_G|\leq 8$:

\smallskip

\begin{center}
\begin{tikzpicture}[scale=0.5,choice/.style={draw, circle, fill=black!100, inner sep=2.1pt}]
    \node at (1,5) {$G^\star$};
	\node at (0,0) [choice] (00) {};
	\node at (2,0) [choice] (20) {};
	\node at (0,1) [choice] (01) {};
	\node at (2,1) [choice] (21) {};
	\node at (0,2) [choice] (02) {};
	\node at (2,2) [choice] (22) {};
	\node at (0,3) [choice] (03) {};
	\node at (2,3) [choice] (23) {};
	\node at (0,4) [choice] (04) {};
	\node at (2,4) [choice] (24) {};
	\draw[thick] (00) to (21);
	\draw[thick] (01) to (21);
	\draw[thick] (02) to (21);
	\draw[thick] (03) to (22);
	\draw[thick] (03) to (23);
	\draw[thick] (03) to (24);
	\draw[thick] (00) to (20);
	\draw[thick] (04) to (24);
\end{tikzpicture}
\qquad
\begin{tikzpicture}[scale=0.5,choice/.style={draw, circle, fill=black!100, inner sep=2.1pt}]
	\node at (0,0) [choice] (00) {};
	\node at (0,1) [choice] (01) {};
	\node at (2,1) [choice] (21) {};
	\node at (0,2) [choice] (02) {};
	\node at (2,2) [choice] (22) {};
	\node at (0,3) [choice] (03) {};
	\node at (2,3) [choice] (23) {};
	\node at (0,4) [choice] (04) {};
	\node at (2,4) [choice] (24) {};
	\draw[thick] (00) to (21);
	\draw[thick] (01) to (21);
	\draw[thick] (02) to (21);
	\draw[thick] (03) to (22);
	\draw[thick] (03) to (23);
	\draw[thick] (03) to (24);
	\draw[thick] (04) to (24);
\end{tikzpicture}
\qquad
\begin{tikzpicture}[scale=0.5,choice/.style={draw, circle, fill=black!100, inner sep=2.1pt}]
	\node at (0,0) [choice] (00) {};
	\node at (0,1) [choice] (01) {};
	\node at (2,1) [choice] (21) {};
	\node at (0,2) [choice] (02) {};
	\node at (2,2) [choice] (22) {};
	\node at (0,3) [choice] (03) {};
	\node at (2,3) [choice] (23) {};
	\node at (0,4) [choice] (04) {};
	\node at (2,4) [choice] (24) {};
	\draw[thick] (00) to (21);
	\draw[thick] (01) to (21);
	\draw[thick] (02) to (21);
	\draw[thick] (03) to (22);
	\draw[thick] (03) to (23);
	\draw[thick] (03) to (24);
	\draw[thick] (02) to (22);
	\draw[thick] (04) to (24);
\end{tikzpicture}
\qquad
\begin{tikzpicture}[scale=0.5,choice/.style={draw, circle, fill=black!100, inner sep=2.1pt}]
	\node at (0,0) [choice] (00) {};
	\node at (0,1) [choice] (01) {};
	\node at (2,1) [choice] (21) {};
	\node at (0,2) [choice] (02) {};
	\node at (2,2) [choice] (22) {};
	\node at (0,3) [choice] (03) {};
	\node at (2,3) [choice] (23) {};
	\node at (2,4) [choice] (24) {};
    \node at (0,5) [choice] (05) {};
    \node at (2,5) [choice] (25) {};
	\draw[thick] (00) to (21);
	\draw[thick] (01) to (21);
	\draw[thick] (02) to (21);
	\draw[thick] (03) to (22);
	\draw[thick] (03) to (23);
	\draw[thick] (03) to (24);
    \draw[thick] (05) to (25);
\end{tikzpicture}
\qquad
\begin{tikzpicture}[scale=0.5,choice/.style={draw, circle, fill=black!100, inner sep=2.1pt}]
	\node at (0,0) [choice] (00) {};
	\node at (0,1) [choice] (01) {};
	\node at (2,1) [choice] (21) {};
	\node at (0,2) [choice] (02) {};
	\node at (2,2) [choice] (22) {};
	\node at (0,3) [choice] (03) {};
	\node at (2,3) [choice] (23) {};
	\node at (2,4) [choice] (24) {};
    \node at (0,5) [choice] (05) {};
    \node at (2,5) [choice] (25) {};
	\draw[thick] (00) to (21);
	\draw[thick] (01) to (21);
	\draw[thick] (02) to (21);
	\draw[thick] (03) to (22);
	\draw[thick] (03) to (23);
	\draw[thick] (03) to (24);
	\draw[thick] (02) to (22);
    \draw[thick] (05) to (25);
\end{tikzpicture}
\qquad
\begin{tikzpicture}[scale=0.5,choice/.style={draw, circle, fill=black!100, inner sep=2.1pt}]
	\node at (0,0) [choice] (00) {};
	\node at (0,1) [choice] (01) {};
	\node at (2,1) [choice] (21) {};
	\node at (0,2) [choice] (02) {};
	\node at (2,2) [choice] (22) {};
	\node at (0,3) [choice] (03) {};
	\node at (2,3) [choice] (23) {};
	\node at (2,4) [choice] (24) {};
    \node at (0,5) [choice] (05) {};
    \node at (2,5) [choice] (25) {};
	\draw[thick] (00) to (21);
	\draw[thick] (01) to (21);
	\draw[thick] (02) to (21);
	\draw[thick] (03) to (22);
	\draw[thick] (03) to (23);
	\draw[thick] (03) to (24);
	\draw[thick] (05) to (24);
    \draw[thick] (05) to (25);
\end{tikzpicture}
\end{center}

\smallskip

\noindent
(Note that these games have been obtained by
adding $1$ or $2$ edges and $1$ or $2$ nodes to the $4$-choice game $G(1\times 3 + 3\times 1)$.)
All the other games above, except for the leftmost game $G^\star$, have a focal point. The game $G^\star$ is analysed later on below.

We still need to analyse the case where all of the choices in $G$ have a degree at most~$2$. 
The game graph of $G$ must then consist of components which are either cycles or paths 
(cf. the corresponding case in Section~\ref{sec: 3-choice games}). 
%
%
We list here systematically all such $5$-choice games $G$ with $|W_G|\leq 8$.

\medskip

\begin{center}
{\footnotesize
\begin{tabular}{|c|c|c|c|}  
\hline
$|W_G|=5$ & $|W_G|=6$ & $|W_G|=7$ & $|W_G|=8$ \\
\hline
$G(2(1\times 2)+1\times 1)$ & $G(\Sigma_3+1\times2)$ & $G(O_2+1\times 2+1\times 1)$ & $G(O_3+1\times 2)$ \\
$G(1\times 2+3(1\times 1))$ & $G(\Sigma_3 + 2(1\times 1))$ & $G(O_2+3(1\times 1))$ & $G(O_3+2(1\times 1))$ \\
$G(5(1\times 1))=\CM_5$ & $G(Z_2+1\times 2+1\times 1)$ & $G(\Sigma_4+1\times 1)$ & $G(O_2+\Sigma_3)$ \\
& $G(Z_2+3(1\times 1))$ & $G(Z_3+1\times 2)$ & $G(O_2+Z_2+1\times 1)$ \\
& $G(2(1\times2)+2\times 1)$ & $G(Z_3+2(1\times 1))$ & $G(O_2+1\times 2+2\times 1)$  \\
& $G(1\times 2+2\times 1+2(1\times 1))$ & $G(\Sigma_3+Z_2)$ & $G(\Sigma_5)$  \\
& & $G(\Sigma_3+2\times 1+1\times 1)$ & $G(Z_4+1\times 1)$ \\
& & $G(2Z_2+1\times 1)$ & $G(\Sigma_4+2\times 1)$ \\
& & $G(Z_2+1\times 2+2\times 1)$ & $G(Z_3+Z_2)$ \\
& & & $G(\Sigma_3+\reflectbox{$\Sigma$}_3)$ \\
\hline
\end{tabular}
}
\end{center}

\medskip

\noindent
All of the the games listed above have a focal point---except for the following four games: $\CM_5$, $G(1\times 2 + 2\times 1 + 2(1\times 1))$, $G(O_3+2(1\times 1))$ and $G(\Sigma_3+\reflectbox{$\Sigma$}_3)$.

\medskip

Next we analyse the {\ECT}s for the above-identified 5-choice games $G$ whose optimal \ECT is greater than 1 and for which $|W_G|\leq 8$.

\begin{itemize}
\item $\CM_5$

The optimal \ECT here is $2 + \frac{1}{3}$ by Proposition~\ref{the: 5-ladders}
(see the table in Section \ref{sect:solvability}).

\item $G(1\times 4 + 4\times 1)$

We obtain the \ECT of $2$ rounds with the following protocol: (1) in the first round, select the choice of degree 4 with probability $\frac{1}{2}$ and some of the choices of degree 1 with the total probability $\frac{1}{2}$; (2) if coordination does not succeed, then continue with \WM. It is clear that this gives the same \ECT as \WM 
gives in the choice matching game $\CM_2$, this \ECT being $2$.
%
 
\item $G^\star$ (see the game graph given above)

As above, we obtain the \ECT of 2 rounds by first assigning the probability $\frac{1}{2}$ for selecting the choice with degree 3 and the probability $\frac{1}{2}$ for selecting the choice with degree 2, and by continuing with \WM thereafter. Again it is 
clear that this gives the same \ECT of $2$ rounds as \WM in $\CM_2$.
%

\item $G(\Sigma_3+\reflectbox{$\Sigma$}_3)$

\vspace{-5mm}

\begin{center}
\begin{tikzpicture}[scale=0.5,choice/.style={draw, circle, fill=black!100, inner sep=2.1pt}]
	\node at (0,0) [choice] (00) {};
	\node at (2,0) [choice] (20) {};
	\node at (0,1) [choice] (01) {};
	\node at (2,1) [choice] (21) {};
	\node at (0,2) [choice] (02) {};
	\node at (2,2) [choice] (22) {};
	\node at (0,3) [choice] (03) {};
	\node at (2,3) [choice] (23) {};
	\node at (0,4) [choice] (04) {};
	\node at (2,4) [choice] (24) {};
	\draw[thick] (01) to (20);
	\draw[thick] (01) to (21);
	\draw[thick] (02) to (21);
	\draw[thick] (03) to (22);
	\draw[thick] (03) to (23);
	\draw[thick] (04) to (23);
	\draw[thick] (00) to (20);
	\draw[thick] (04) to (24);
\end{tikzpicture}
\end{center}
Again---for practically the same reasons as above---we obtain the \ECT 2 by first assigning the probability $\frac{1}{2}$ for selecting the choice which is ``in the middle of a 5-choice path'' and the total probability $\frac{1}{2}$ for selecting any other choice with degree 2, and by continuing with \WM thereafter.
%

\item $G(1\times 2 + 2\times 1 + 2(1\times 1))$

The players can follow an optimal protocol for $G(1\times 2 + 2\times 1)$ in the corresponding subgame and thus obtain the \ECT of less than 2 rounds (see Section~\ref{sec: 3-choice games}).
%

\item $G(O_3+2(1\times 1))$
%
%

The players can keep selecting choices randomly within the subgame $G(O_3)=\overline{\CM_3}$ to obtain the \ECT of $1+\frac{1}{2}$ rounds---as shown in Section~\ref{sec: 3-choice games}.
%
\end{itemize}

Hence we conclude that the greatest optimal expected coordination time, among all $5$-choice games, is uniquely realized by the choice matching game $\CM_5$.

\section{Appendix: Further remarks on choice matching games}\label{appendix: remarks}

In the table below we summarize the results on optimal expected and guaranteed coordination times in choice matching games $\CM_m$. The lines (---) mean that no unique protocol exists.

\begin{center}
\scalebox{0.8}{
{\footnotesize
\begin{tabular}{|c||c|c||c|c|}
 \hline
& Optimal \textbf{expected} & Unique optimal & Optimal \textbf{guaranteed} & Unique optimal \\
$m$ & coordination & protocol for & coordination & protocol for \\
& time in $\CM_m$ & expected time & time in $\CM_m$ & guaranteed time \\
\hline \rule{0pt}{1\normalbaselineskip} 
$1$\; & 1 & (any) & 1 & (any) \\[1mm]
$2$ & $2$ & WM & $\infty$ & --- \\[1mm]
$3$ & $1+\frac{2}{3}$ & \LA & $2$ & \LA \\[1mm]
$4$ & $2+\frac{1}{2}$ & --- & $\infty$ & --- \\[1mm]
$5$ & $2+\frac{1}{3}$ & \LA & $3$ & \LA \\[1mm]
\hdashline \rule{0pt}{1\normalbaselineskip} 
$6$\; & $2+\frac{2}{3}$ & WM & $\infty$ & --- \\[1mm]
$7$ & $2+\frac{5}{7}$ & WM & $4$ & \LA \\
\vdots & \vdots & \vdots & \vdots & \vdots \\
$2k$ & $3 - \frac{1}{k}$ & WM & $\infty$ & --- \\[1mm]
$2k+1$ & $3 - \frac{2}{2k+1}$ & WM & $k$ & \LA \\[1mm]
\hline
\end{tabular}
}
}
\end{center}

\medskip

%
First note that---interestingly---the game $\CM_3$ can be considered much easier than the game $\CM_2$ since the optimal \ECT is much smaller.
Moreover, coordination in $\CM_3$ can be guaranteed in two rounds, while it cannot ever be guaranteed in $\CM_2$. For similar reasons, $\CM_5$ can also be
considered easier than $\CM_4$.

In several cases there is a single unique protocol which is optimal in all aspects that we have studied in this article. In such cases one can argue that such a protocol should be followed all rational players even if they cannot communicate in advance or share any conventions.\footnote{This relies on the \emph{assumption} that the list of
\emph{possible preferences} consists of either
minimizing $\ECT$s or minimizing $\GCT$s. At least the average
case and worst case are by far the most common scenarios considered.} In the
cases where no single protocol is
optimal in all aspects, it is more
problematic for the players to choose their protocol---unless they
share some
convention.

The most clear cases here are the games with 3 and 5 (and trivially 1) choices, where the protocol \LA is uniquely optimal with respect to both \ECT and \GCT. Also all the
games with an even number of choices, excluding the case $m=4$, are
clear since \WM is uniquely \ECT-optimal and no protocol can
guarantee coordination in any number of rounds.

The game $\CM_4$ is the only game for which no protocol is uniquely \ECT-optimal (indeed there are uncountably many different \ECT-optimal protocols). Moreover, no protocol can guarantee coordination in this game. Based on the analysis on the other choice matching games with an even number of choices, one could possibly argue that players would naturally follow \WM also here since it is uniquely \ECT-optimal elsewhere and one of the \ECT-optimal protocols here as well. 
However, there seems to be no obvious and fully compelling
reason why \WM should be preferred to the other \ECT-optimal protocols.

The games $\CM_m$ for \emph{odd} $m\geq 7$ can also be problematic since the optimal values for \ECT and \GCT are given by different (although uniquely optimal) protocols \WM and \LA, respectively. If both players do not have the same preference about which of these values to optimize (or this is not common knowledge among them), it is not clear for them whether they should follow \WM or \LA. In the cases where $m$ is very large, say $m=1001$, \WM seems more justified in practice since it is almost impossible that coordination with \WM would take more time than with \LA. But in the cases where $m$ is quite small, especially when $m=7$, \LA may seem like a more balanced option with respect to the both aspects. Recall here that the \ECT in $\CM_7$ with \LA is 3 rounds while the \ECT with \WM is only slightly less than 3, and moreover, \LA guarantees coordination in 4 rounds while \WM does not guarantee it at all.

\medskip

\medskip

\medskip

\noindent
\textbf{Acknowledgements.}\ {We thank Valentin Goranko, Lauri Hella
and Kerkko Luosto for discussions on coordination games.
Antti Kuusisto was supported by the Academy of
Finland grants 438 874 and 209 365. Raine R\"{o}nnholm was
supported by Jenny and Antti Wihuri Foundation.}


\bibliographystyle{plain}  
\bibliography{GT-Bibliography}


\end{document}